\newtheorem{theorem}{Theorem}
\newtheorem{lemma}[theorem]{Lemma}
\newtheorem{corollary}[theorem]{Corollary}
\newtheorem{proposition}[theorem]{Proposition}
\theoremstyle{definition}
\newtheorem{definition}[theorem]{Definition}
\newenvironment{sproof}{%
  \proof}{\endproof}
\newcommand{\id}{\mathds{1}}
\renewcommand{\H}{\mathcal{H}}
\renewcommand{\L}{\mathcal{L}}
\newcommand{\M}{\mathcal{M}}
\renewcommand{\O}{\mathcal{O}}
\renewcommand{\S}{\mathcal{S}}
\newcommand{\I}{\mathcal{I}}
\newcommand{\F}{\mathcal{F}}
\newcommand{\bigpi}{\scalebox{1.3}{$\pi$}}
\title{Classical and Quantum Query Complexity of Boolean Functions under Indefinite Causal Order}
\author{Alastair A.\ Abbott
\institute{Univ.\ Grenoble Alpes, Inria, 38000 Grenoble, France}
\and
Mehdi Mhalla
\institute{Univ.\ Grenoble Alpes, CNRS, Grenoble INP, LIG, 38000 Grenoble, France}
\and
Pierre Pocreau
\institute{Univ.\ Grenoble Alpes, Inria, 38000 Grenoble, France}
\institute{Univ.\ Grenoble Alpes, CNRS, Grenoble INP, LIG, 38000 Grenoble, France}
}
\begin{document}
\maketitle

\begin{abstract}
	Computational models typically assume that operations are applied in a fixed sequential order.
	In recent years several works have looked at relaxing this assumption, considering computations without any fixed causal structure and showing that such ``causally indefinite'' computations can provide advantages in various tasks.
	Recently, the quantum query complexity of Boolean functions has been used as a tool to probe their computational power in a standard complexity theoretic framework, but no separation in exact query complexity has thus-far been found.
	In this paper, we investigate this problem starting with the simpler and fully classical notion of deterministic query complexity of Boolean functions, and using classical-deterministic processes -- which may exhibit causal indefiniteness -- as a generalised computational framework.
	We first show that the standard polynomial and certificate lower bounds of deterministic query complexity also hold in such generalised models. 
	Then, we formulate a Boolean function for which causal indefiniteness permits a reduction in query complexity and show that this advantage can be amplified into a polynomial separation. 
	Finally, with the insights gained in the classical-deterministic setting, we give a Boolean function whose quantum query complexity is reduced by causally indefinite computations.
	
\end{abstract}

\section{Introduction}
\label{sec:introduction}

Computational models, both classical and quantum, typically assume that operations are applied in a fixed, sequential ``causal'' order. 
There has been growing interest, however, in relaxing such assumptions and exploring causally indefinite ways of processing information.
On the one hand, quantum mechanics allows the order in which quantum operations are applied to be controlled by other quantum systems and hence rendered indefinite, as in the quantum switch~\cite{chiribellaquantum2013}.
Such processes, which formally correspond to quantum supermaps~\cite{chiribella2009theoretical, chiribellaquantum2013} and can be represented in the process-matrix formalism~\cite{oreshkov2012quantum}, have been shown to provide advantages in several tasks in quantum information~\cite{chiribella2012perfect, araujo2014computational, bavaresco2021strict, guerin2016exponential}.
On the other hand, classical computational models exploiting closed timelike curves (CTCs) have been studied~\cite{brun2003computers,bacon2004quantum,aaronson2005guest, aaronson2009closed}, and it is known that causal indefiniteness in the process matrix framework is not a property unique to quantum processes.
Indeed, logically consistent classical (even deterministic) processes can be formulated that are incompatible with any causal explanation and can, for instance, exhibit maximally noncausal correlations while remaining paradox-free~\cite{baumeler2014maximal, BaumelerSpaceLogically2016}.

Sparked by this, there has been growing interest in studying more rigorously the computational power of these causally indefinite processes within a complexity theoretic framework.
One approach has been to connect quantum supermaps and their classical-deterministic counterpart, process functions~\cite{baumeler2016device}, to computation with linear CTCs in order to study their computational complexity~\cite{araujo2017quantum, baumeler2018computational}.
Quantum supermaps and process functions, however, are objects that most naturally describe how to compose various ``black-box'' operations (be it in a causally ordered, or indefinite, way).
It is perhaps most natural then to study their computational power within a query complexity framework. 
While various works have looked to quantify the number of queries needed to determine relative properties of multiple quantum black-box operations provided as input~\cite{araujo2014computational,taddei2021computational} or to compose them into another quantum operation as a function of the input operations~\cite{kristjansson2024exponentialseparationquantumquery}, these works fall outside the traditional framework of query complexity.

In a prior work~\cite{abbott2024quantumquerycomplexityboolean}, we extended the standard notion of quantum query complexity of Boolean functions~\cite{BUHRMAN2002Complexity,ambainis2018understanding} to general, potentially causally indefinite, quantum supermaps, showing that causal indefiniteness can indeed provide some advantages in this setting, with a reduction in error-probability obtained for computing some Boolean functions with two queries.
However, it remained an open question whether asymptotic separations, or advantages in the exact (rather than bounded-error) setting, can be obtained. 

In this paper, we propose to first take a step back and study the computational power of causal indefiniteness in classical processes using the query complexity model.
This simpler setting, in addition to being interesting in its own rights, can provide useful insights for the quantum case.
Working within the framework of process functions, we introduce a standard notion of deterministic query complexity of Boolean functions which generalises the well-known notion of decision tree complexity~\cite{BUHRMAN2002Complexity} to causally indefinite deterministic computations.

We prove two lower bounds on this quantity for process functions based on the polynomial representation~\cite{BUHRMAN2002Complexity} and the certificate complexity~\cite{BUHRMAN2002Complexity} of Boolean functions, and we show that causal indefiniteness can at best provide a quadratic advantage in query complexity.
We then look for explicit speedups obtainable using causal indefiniteness.
First, starting from the so-called ``Lugano'' (or ``AF/BW'') process function~\cite{baumeler2014maximal,BaumelerSpaceLogically2016} we construct an explicit Boolean function for which causal indefiniteness allows the deterministic query complexity to be reduced from four queries to three.
Then, we show that this constant separation can be amplified into a polynomial separation via a recursive construction, proving an asymptotic computational advantage from causal indefiniteness in the classical setting.

Finally, we build on these results in the classical-deterministic setting to prove that causal indefiniteness can also provide an equivalent constant advantage in quantum query complexity,
improving upon~\cite{abbott2024quantumquerycomplexityboolean} where only a reduced error for a fixed number of queries in the bounded-error model was obtained, and
answering several questions left open in~\cite{abbott2024quantumquerycomplexityboolean}.

\section{The deterministic query model of computation}
\label{sec:querymodel}

Recently, a generalised notion of quantum query complexity was used to compare the computational power of sequential and causally indefinite quantum computations~\cite{abbott2024quantumquerycomplexityboolean}.
In this query complexity framework, one aims to compute a Boolean function $f: \{0, 1\}^n \to \{0, 1\}$ using an algorithm that can only access the bits of the input $x \in \{0,1\}^n$ via queries to an oracle. 
This oracle can be taken to be either classical or quantum, and the computational complexity is quantified by counting the number of queries to the oracle.
This measure of complexity is a standard tool to compare different types of computations, including deterministic and probabilistic classical computations, or quantum computations~\cite{BUHRMAN2002Complexity}.
In particular, most of the known quantum advantages over classical algorithms can be understood in this framework~\cite{ambainis2018understanding}.

Whereas~\cite{abbott2024quantumquerycomplexityboolean} studied the quantum query complexity of causally indefinite computations, in this paper we first focus on the classical-deterministic query model of computation, where causal indefiniteness can also arise~\cite{BaumelerSpaceLogically2016}.
Here, the query oracle encoding $x$ is given as a function $O_x: \{1, \dots, n\} \to \{0, 1\}$ such that $O_x(i) = x_i$. 
A \emph{sequential} algorithm, which can be adaptive, can then be represented without loss of generality as a decision tree whose internal nodes are queries to the oracle $O_x$ and whose leaves represent the final outputs of the computation~\cite{BUHRMAN2002Complexity}. 
Such a decision tree is said to compute a Boolean function $f$ if, for all $x$, the evaluation of the tree outputs the value $f(x)$. 
The \emph{decision tree complexity} of $f$, which we will refer to as \emph{deterministic query complexity}, is defined as follows.
\begin{definition}[\cite{BUHRMAN2002Complexity}]
	\label{def:DTcomplexity}
  The \emph{deterministic query complexity} of a Boolean function $f$, denoted $D(f)$, is the minimum $T$ for which there exists a decision tree of depth $T$ computing $f$.
\end{definition}

This sequential notion of deterministic query complexity serves as a baseline to assess the computational power of causal indefiniteness in classical-deterministic scenarios. 

\section{Classical-deterministic processes}
\label{sec:classical-det}

To investigate the power of causal indefiniteness in the classical-deterministic query model, we will use the process function formalism~\cite{BaumelerSpaceLogically2016} to model more general causally indefinite classical computations.
This model encompasses standard sequential algorithms, and will allow us to compare them to causally indefinite computations in a unified framework.

The process function formalism describes how some deterministic operations (i.e., functions) $\vec{\mu} = (\mu_1, \dots, \mu_T)$, with $\mu_k : \I_k \to \O_k$, can be composed together in a logically consistent way~\cite{BaumelerSpaceLogically2016}.
This composition is described by the so-called process function $w : \mathcal{P} \times \bigtimes^T_{k=1} \mathcal{O}_k \to \bigtimes_{k=1}^T \mathcal{I}_k \times \F$, where $\mathcal{P}$ and $\F$ are, respectively, the domain and codomain of the function resulting from the composition. 
A process function with $T$ ``slots'' is thus defined in~\cite{BaumelerSpaceLogically2016} as the most general kind of transformation that maps the $T$ functions $\vec{\mu}$ into a function from $\mathcal{P}$ to $\F$.%
\footnote{We assume throughout that $\mathcal{P},\I_k,\O_k,\F$ are all nonempty. Note that if any of these are singletons then their role is trivial and they can be omitted from the specification of the process function. Note also that if $\mathcal{P}$ or $\F$ is a singleton, then the resulting function from $\mathcal{P}$ to $\F$ obtained from the process function is constant.} 
Process functions can be elegantly characterised through a fixed-point condition: for any choice of $a\in \mathcal{P}$ and operations $\vec{\mu}$, the process function has a unique fixed-point $\vec{i}$, as shown by the following proposition~\cite{baumeler2016device}.

\begin{restatable}[\cite{BaumelerSpaceLogically2016,baumeler2016device}]{proposition}{func}
  \label{def:charac}
      A function $w: \mathcal{P} \times \bigtimes^T_{k=1} \mathcal{O}_k \to \bigtimes_{k=1}^T \mathcal{I}_k \times \F$ is a \emph{process function} if and only if, for all $a \in \mathcal{P}$ and all $\vec{\mu} = (\mu_1, \dots, \mu_T)$ with $\mu_k = \I_k \to \O_k$,
\begin{equation}
  \label{eq:consistent}
  \exists!\ \Vec{i}: w\big(a, \Vec{\mu}(\Vec{i})\big) = (\Vec{i}, b),
\end{equation}
with $\vec{i} = (i_1, \dots, i_T) \in \bigtimes_{k=1}^T \I_k$ and $b \in \F$.
\end{restatable}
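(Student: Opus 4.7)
The plan is to derive the fixed-point characterization by unpacking the informal definition of a process function as a transformation that consistently maps local operations $\vec{\mu}$ to a function $\mathcal{P} \to \F$. The underlying intuition is that, given $a$ and $\vec{\mu}$, the only self-consistent way to assign a final output is to find intermediate data $\vec{i}$ satisfying $w(a,\vec{\mu}(\vec{i})) = (\vec{i},b)$, so unambiguous composability becomes exactly the unique fixed-point property.

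For the $(\Leftarrow)$ direction, which is the more immediate, I would assume the unique fixed-point property and define, for each $\vec{\mu}$, the induced map $F_w[\vec{\mu}] : \mathcal{P} \to \F$ by sending $a$ to the unique $b$ appearing in Eq.~\eqref{eq:consistent}. Uniqueness of $\vec{i}$ forces $b$ to be unambiguously determined, so $F_w[\vec{\mu}]$ is a bona fide function and $w$ qualifies as a process function. For the $(\Rightarrow)$ direction, existence is straightforward: if no $\vec{i}$ satisfied the fixed-point equation for some $(a,\vec{\mu})$, then the composed function would produce no output on input $a$, contradicting the totality of $F_w[\vec{\mu}]$.

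Uniqueness is the main obstacle. Suppose two distinct fixed points $\vec{i}_1 \neq \vec{i}_2$ exist with outputs $b_1, b_2$. The case $b_1 \neq b_2$ immediately violates the well-definedness of the induced function on $\vec{\mu}$. The delicate subcase $b_1 = b_2$ requires a perturbation argument: I would pick an index $j$ with $(\vec{i}_1)_j \neq (\vec{i}_2)_j$ (which must exist) and construct a modified tuple $\vec{\mu}'$ that agrees with $\vec{\mu}$ on all coordinates of $\vec{i}_1$ but redefines $\mu_j$ at $(\vec{i}_2)_j$ to a different value of $\O_j$, so that $\vec{i}_1$ remains a fixed point for $\vec{\mu}'$ but $\vec{i}_2$ is shifted, and the goal is to show that $\vec{\mu}'$ inevitably exhibits two fixed points with distinct $b$-values, contradicting the process-function property for this perturbed tuple. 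The technical heart is ensuring this perturbation always succeeds in breaking the tie, possibly by iterating across indices or by reducing to the case of constant operations adapted to each candidate fixed point, where the fixed-point equation degenerates and the contradiction becomes transparent.
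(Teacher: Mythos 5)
There is a genuine gap, and it stems from the starting point rather than from a fixable technicality. The paper does not prove this proposition itself --- it imports it from \cite{BaumelerSpaceLogically2016,baumeler2016device} --- but the substance of the argument appears in Appendix~\ref{appendix:FuncMat}: a process function is \emph{formally} defined as (the diagonal Choi representation of) a supermap satisfying the normalisation condition $\widetilde{W} * (\widetilde{\mathsf{M}}_1 \otimes \dots \otimes \widetilde{\mathsf{M}}_T) = 1$ for all deterministic operations, and the link product there literally \emph{counts} the fixed points of $w$ under $\vec{\mu}$. So ``valid process'' $\Leftrightarrow$ ``the count equals $1$'' $\Leftrightarrow$ ``existence and uniqueness of the fixed point'' in one line. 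Your proposal instead tries to derive the characterisation from the informal gloss ``$w$ maps the operations $\vec{\mu}$ to a function $\mathcal{P}\to\F$'', and that gloss is too weak to carry the proof: it presupposes (circularly) that the composite's output is \emph{defined} via a fixed point of $w$, and, more importantly, it cannot force uniqueness, since two distinct fixed points $\vec{i}_1\neq\vec{i}_2$ with the same future value $b_1=b_2$ would still yield a perfectly well-defined composite function $\mathcal{P}\to\F$. Uniqueness is only forced by the normalisation/consistency requirement over all operations (equivalently, by the counting argument); without invoking that formal definition there is nothing to contradict.

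Your proposed repair --- perturbing $\mu_j$ at $(\vec{i}_2)_j$ --- does not close this hole. After the perturbation, $\vec{i}_1$ indeed remains a fixed point, but $\vec{i}_2$ may simply cease to be one, leaving $\vec{\mu}'$ with a single fixed point and no contradiction; nothing guarantees that the perturbed tuple ``inevitably exhibits two fixed points with distinct $b$-values,'' and you acknowledge this is unproven. The degenerate case of constant operations adapted to $\vec{i}_1$ fares no better: it makes $\vec{i}_1$ the unique fixed point and erases the tie rather than exposing it. To make the argument rigorous you must start from the formal definition of a process function (normalisation of the associated deterministic process matrix, as in Definition~\ref{def:classicalMat} and the equivalence proof in Appendix~\ref{appendix:FuncMat}), at which point both existence and uniqueness follow immediately from the fact that the composition evaluates to the number of fixed points.
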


Whereas a standard query algorithm can be thought of as composing queries in a sequential order, process functions do not assume any such causal order, yet remain classical and deterministic.%
\footnote{One could also consider classical processes that are not required to be deterministic, for instance to study the bounded-error randomised query complexity of Boolean functions under indefinite causal order.
Note, however, that the set of such classical processes is not simply the convex hull of classical-deterministic ones; in particular, some classical processes exhibit a form of so-called fine-tuning \cite{BaumelerSpaceLogically2016}.}
We will thus refer to process functions as \emph{classical-deterministic processes}.

To study the comparative computational power of causally definite and indefinite processes in this framework, we begin by formalising the notion of a causally definite classical-deterministic process. 
To this end, we first introduce the \emph{induced functions} $w_k: \mathcal{P} \times \bigtimes_{k'=1}^T \O_{k'} \to \I_k$ and $w_F: \mathcal{P} \times \bigtimes_{k'=1}^T \O_{k'} \to \F$ defined as $w_k(a, \vec{o}) = w(a, \vec{o})_{\lvert k}$ and $w_F(a, \vec{o}) = w(a, \vec{o})_{\lvert {\F}}$ which specify the inputs given to the different operations $\mu_k$ or output to $\F$~\cite{baumeler2020equivalence}.
Note that, taken together, these uniquely define $w$.
We will also make use of the notion of a \emph{reduced process} $w^{|\mu_k}: \mathcal{P} \times \bigtimes^T_{\substack{k'=1 \\ k' \neq k}} \mathcal{O}_{k'} \to \bigtimes_{\substack{k'=1 \\ k' \neq k}}^T \mathcal{I}_{k'} \times \F$ obtained from $w$ by fixing the $k$th operation $\mu_k$~\cite{baumeler2020equivalence}.   
The reduced process $w^{|a}: \bigtimes^T_{k=1} \mathcal{O}_k \to \bigtimes_{k=1}^T \mathcal{I}_k \times \F$ is similarly obtained by fixing the value $a$ in the past $\mathcal{P}$.
Note that $w^{|\mu_k}$ is a $(T-1)$-slot classical-deterministic process, while $w^{|a}$ is a $T$-slot process with trivial input (i.e., $|\mathcal{P}|=1$), which we thus generally omit from its specification.
Finally, to compactly describe the composition of $w$ and operations $\vec{\mu}$ we will make use of a classical version of  the so-called ``link product''~\cite{chiribella2009theoretical}, written `$*$'. 
In particular, for any $a \in \mathcal{P}$ and operations $\vec{\mu}$, and writing $\vec{i}_a = (i_{a,1}, \dots, i_{a,T})$ the inputs of the corresponding fixed-point of $w$ we have $(w * \vec{\mu})(a) = w_F\big(\mu_1(i_{a,1}), \dots, \mu_T(i_{a,T})\big)$.
Full definitions of these notions are given in Appendix~\ref{appendix:induced}.

A classical-deterministic process is then said to be causally definite if it is compatible with the operations $\mu_k$ being applied in a sequential order. 
This sequential order may be fixed by $w$, but importantly it can also be dynamical: the choice of which operation to apply at step $k$ might depend on the choice of the previous operations applied (and their inputs), as well as the value of $a\in \mathcal{P}$ given to the process. We give a recursive definition inspired by the definition of multipartite causally separable quantum supermaps~\cite{wechs2019definition}.
In Appendix~\ref{appendix:definite} we show that these definitions are consistent if one represents classical-deterministic processes as (diagonal, deterministic) process matrices.
\begin{definition}
  \label{def:causallydef}
      A classical-deterministic process $w: \mathcal{P} \times \bigtimes^T_{k=1} \mathcal{O}_k \to \bigtimes_{k=1}^T \mathcal{I}_k \times \F$ is said to be \emph{causally definite} if and only if $T = 1$, or
      \begin{itemize}      
      \item if $|\mathcal{P}| = 1$, then there \emph{exists} $1 \leq k \leq T$ such that the induced function $w_k$ is constant, and such that for any operation $\mu_k:\I_k\to\O_k$, the reduced process $w^{|\mu_k}$ is itself causally definite;
        
        \item if $|\mathcal{P}| > 1$ (i.e., $\mathcal{P}$ is non-trivial), then for all $a \in \mathcal{P}$ the reduced process $w^{|a}$ is itself causally definite.
      \end{itemize}
\end{definition}

We end this section with an example of a \emph{causally indefinite} classical-deterministic process, the so-called Lugano (or AF/BW) process~\cite{baumeler2014maximal, BaumelerSpaceLogically2016}. 
This is a three-slot classical-deterministic process $w_{\text{Lugano}} : \{0, 1\}^3 \to \{0, 1\}^3$ defined by the induced functions $w_k$ (for $1\le k \le 3$)
 \begin{equation}
   \label{eq:lugano}
  w_k(o_1, o_2, o_3) = (1 \oplus o_{k \oplus_3 1}) o_{k \oplus_3 2},
\end{equation}
where $k \oplus_3 l := [(k + l -1) \bmod 3] + 1$, for any $l \in \mathbb{Z}$. 
Note that $w_{\text{Lugano}}$ is defined with trivial $\mathcal{P}$ and $\F$, which are thus omitted from its definition above.
One can show that $w_{\text{Lugano}}$ is logically consistent, as it verifies the unique fixed-point characterisation of Proposition~\ref{def:charac}~\cite{Baumeler2022unlimitednoncausal}. 
Furthermore, it is causally indefinite, as it is known to violate so-called causal inequalities~\cite{baumeler2016device}.
The Lugano process will be an important tool in our goal to prove query complexity advantages over causally definite computations in Sections~\ref{sec:exactSep} and \ref{sec:quantum}. 

\section{Generalised deterministic query complexity}
\label{sec:detQueryModelICO}

In order to study the query complexity power of classical-deterministic processes, we first need to extend the notion of deterministic query complexity introduced in Definition~\ref{def:DTcomplexity} beyond causally-definite computations -- which can be seen as evaluating queries to the oracle $O_x$ in a decision tree -- to the process function framework.
This generalisation is extremely natural, and simply interprets classical-deterministic processes as connecting the inputs and outputs of the different queries in a logically consistent way.
This generalisation is analogous to that proposed in~\cite{abbott2024quantumquerycomplexityboolean} for the quantum query complexity using general quantum supermaps.

\begin{definition}
  \label{def:dgen}
  The \emph{generalised deterministic query complexity} of a Boolean function $f$, denoted $D^\text{Gen}(f)$, is the minimum $T$ for which there exists a classical-deterministic process $w:  \{0, 1\}^T \to \{1, \dots, n\}^T  \times \{0, 1\}$ computing $f$ when making $T$ queries to $O_x$, i.e., such that for all $x \in \{0, 1\}^n$ and writing $\vec{O}_x = (\underbrace{O_x, \dots, O_x}_\textrm{T})$, we have $w *\vec{O}_x = f(x).$
\end{definition}

Note that, when computing a function $f$ with a process function $w$, the output $f(x)$ is thus obtained in the ``global future'' $\F=\{0,1\}$. We show that when computing a Boolean function $f$ in the query model, decision trees and causally definite classical-deterministic processes are equivalent, justifying the interpretation of $D^\text{Gen}(f)$ as a generalisation of $D(f)$.
\begin{restatable}[]{proposition}{equiv}
  \label{prop:equiv}
In the query model, there exists a decision tree of depth $T$ computing a Boolean function $f$ if and only if there exists a $T$-slot causally definite classical-deterministic process computing $f$.   
\end{restatable}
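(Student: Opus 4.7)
The plan is to prove both implications by induction on $T$, using the recursive structure of Definition~\ref{def:causallydef}, with $T=1$ as base case.

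For the ``decision tree $\Rightarrow$ process'' direction, given a depth-$T$ decision tree computing $f$ (padding shorter branches with dummy queries if needed), I would build a $T$-slot process $w$ as follows: for each $\vec{o}$, let $w_k(\vec{o})$ be the index queried at the node reached by following branches $o_1, \dots, o_{k-1}$ from the root, and let $w_{\F}(\vec{o})$ be the value of the leaf reached by following $o_1, \dots, o_T$. Crucially, $w_k$ depends only on $o_1, \dots, o_{k-1}$. This dependence structure (i) ensures the unique fixed-point condition of Proposition~\ref{def:charac} and hence that $w$ is a valid process function, and (ii) makes $w_1$ constant; moreover, after fixing any $\mu_1$, the output $o_1$ becomes determined, which in turn makes $w_2^{|\mu_1}$ constant. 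Iterating this observation shows $w$ is causally definite, and evaluating $w * \vec{O}_x$ simply traces the branch of the tree on input $x$, returning $f(x)$.

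For the converse, assume $w$ is a $T$-slot causally definite process with $|\mathcal{P}|=1$ computing $f$. By Definition~\ref{def:causallydef} there exists a slot $k$ with $w_k$ constant, say $w_k \equiv i^* \in \{1, \dots, n\}$. Let the root of the decision tree query index $i^*$. For each possible response $b \in \{0, 1\}$, let $\mu_k^{(b)}$ denote the constant function equal to $b$; the reduced process $w^{|\mu_k^{(b)}}$ is then a $(T-1)$-slot causally definite process. The key observation is that for every $x \in \{0,1\}^n$ with $x_{i^*} = b$, $w * \vec{O}_x = w^{|\mu_k^{(b)}} * \vec{O}_x^{(T-1)}$: since $w_k \equiv i^*$ is independent of the other queries, slot $k$'s output in the unique fixed-point is $O_x(i^*) = b = \mu_k^{(b)}(i^*)$, so replacing $O_x$ at slot $k$ by $\mu_k^{(b)}$ leaves the fixed-point and hence the final output unchanged. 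By the inductive hypothesis, each $w^{|\mu_k^{(b)}}$ yields a depth-$(T-1)$ decision tree computing the restriction of $f$ to $\{x : x_{i^*} = b\}$; attaching these as the two subtrees of the root yields a depth-$T$ decision tree for $f$.

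The main obstacle is justifying this key observation in the converse direction: verifying rigorously that swapping $O_x$ for $\mu_k^{(b)}$ at slot $k$ preserves the overall output requires carefully combining the fact that $w_k$ is constant (so the query made at slot $k$ does not depend on the other operations) with the uniqueness of the fixed-point from Proposition~\ref{def:charac}. The remaining arguments amount essentially to a bookkeeping unwinding of the recursion in Definition~\ref{def:causallydef}.
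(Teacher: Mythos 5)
Your proposal is correct and follows essentially the same route as the paper: the forward direction encodes tree nodes as induced functions depending only on $o_1,\dots,o_{k-1}$, and the converse repeatedly extracts the constant-input slot guaranteed by Definition~\ref{def:causallydef} and passes to reduced processes under the two constant operations $\mu_k^{(b)}$, exactly as the paper does with its recursively defined $N_l$. Your ``key observation'' (that replacing $O_x$ at the constant slot by the constant function $b=x_{i^*}$ preserves the fixed point and output) is precisely the step the paper justifies by noting that $w^{|\mu_{k}}$ depends only on the value $\mu_{k}(i_{k})$, so your explicit induction is just an equivalent packaging of the paper's unrolled recursion.
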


\begin{sproof}
  The implication in the forward direction is straightforward, as any decision tree of depth $T$ can be easily seen as defining a causally definite $T$-slot classical-deterministic process.

  For the other direction, while decision trees are always evaluated in a fixed sequential order, a causally definite classical-deterministic processes $w$ may be dynamical, meaning that the operation applied at each step can depend on the outputs of the previous operations. 
  However, this dynamicality is irrelevant when the $T$ operations are identical, and one can always find an equivalent non-dynamical classical-deterministic process whose action is the same as that of $w$ on $T$ queries to $O_x$. This non-dynamical process can then be seen as a depth $T$ decision tree. See Appendix~\ref{appendix:equiv} for the full proof.
\end{sproof}

It follows that for any Boolean function, $f$, $D^\text{Gen}(f) \leq D(f)$. 
By comparing the two complexities $D$ and $D^\textup{Gen}$ we can compare, on an equal footing, the standard deterministic query model of computation, which is intrinsically sequential, with the generalised one, allowing causally indefinite computations.
In particular, because the two complexities are defined relative to the same classical oracle, any separation between the two quantities provides proof of a computational advantage obtained from causal indefiniteness. 

\subsection{Degree and certificate lower bounds}
\label{sec:lower}

Before looking for such separations, we first show that some well-known lower bounds on $D$ can be generalised to $D^\text{Gen}$.
Such lower bounds on $D^\textup{Gen}$ are insightful, as they bound any possible advantages obtainable from causal indefiniteness and provide insight into which Boolean functions are candidates for obtaining an advantage.

The first lower bound we consider is based on the polynomial representation of Boolean functions. A multilinear polynomial $g : \mathbb{R}^n \to \mathbb{R}$, is said to represent $f$ if, for all $x \in \{0, 1\}^n$, $f(x) = g(x)$. 
It is known that any Boolean function $f$ has a unique multilinear polynomial representation, and that $\deg(f) \leq D(f) \leq \deg(f)^3$, where $\deg(f)$ denotes the degree of this representation~\cite{BUHRMAN2002Complexity}. 
We strengthen here this and show that $\deg(f)$ is also a lower bound on $D^\textup{Gen}$ (see Appendix~\ref{appendix:deglower} for the proof).
\begin{restatable}{proposition}{degBound}
  \label{prop:deglower}
  For any Boolean function $f$, $\deg(f) \leq D^\textup{Gen}(f).$
\end{restatable}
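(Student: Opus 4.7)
The plan is to extend the standard argument that $\deg(f) \leq D(f)$, in which each leaf of a decision tree contributes a product of literals of degree at most the tree depth. In the generalised setting, the role of a ``leaf'' will be played by a possible query-answer vector $\vec{o} \in \{0,1\}^T$, with the fixed-point characterisation of Proposition~\ref{def:charac} ensuring that for each input $x$ exactly one such vector is realised.

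Concretely, I would take a $T$-slot classical-deterministic process $w$ computing $f$ with $T = D^\textup{Gen}(f)$ and denote by $w_1, \dots, w_T : \{0,1\}^T \to \{1,\dots,n\}$ and $w_\F : \{0,1\}^T \to \{0,1\}$ its induced functions. Setting $\chi_b(y) := b y + (1-b)(1-y)$ (the degree-one polynomial that equals the indicator $[y = b]$ on $\{0,1\}$), the candidate polynomial is
\begin{equation*}
  g(x) \;=\; \sum_{\vec{o} \in \{0,1\}^T} w_\F(\vec{o}) \, \prod_{k=1}^{T} \chi_{o_k}\!\big(x_{w_k(\vec{o})}\big),
\end{equation*}
which is manifestly of degree at most $T$ in the variables $x_1, \dots, x_n$.

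The core step is to show that $g(x) = f(x)$ for every $x \in \{0,1\}^n$. For fixed $x$, the $k$-th factor of the inner product is nonzero exactly when $x_{w_k(\vec{o})} = o_k$, so the whole product evaluates to $1$ precisely when $\vec{i} := (w_1(\vec{o}), \dots, w_T(\vec{o}))$ is a fixed point of $w$ composed with $T$ copies of $O_x$, with corresponding query answers $\vec{o}$. By Proposition~\ref{def:charac} such a fixed point is unique, so exactly one $\vec{o}$ contributes to the sum, and by Definition~\ref{def:dgen} its contribution is $w_\F(\vec{o}) = f(x)$. Reducing $g$ modulo the identities $x_i^2 = x_i$ then yields a multilinear polynomial of degree at most $T$ agreeing with $f$ on $\{0,1\}^n$; uniqueness of the multilinear representation of $f$ closes the argument.

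The main conceptual hurdle is the identification used in the previous paragraph: recognising that $\prod_k \chi_{o_k}(x_{w_k(\vec{o})})$ is precisely the indicator of the fixed-point condition for the pair $(\vec{i}, \vec{o})$ at input $x$. Once this is in place, existence of the fixed point gives $g(x) = f(x)$ and uniqueness prevents cancellations from multiple surviving terms. A minor subtlety worth noting is that the indices $w_k(\vec{o})$ need not be distinct across $k$, but repetitions only reduce the degree after multilinearisation, so the bound $\deg(f) \leq T$ survives.
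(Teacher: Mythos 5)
Your proposal is correct and follows essentially the same route as the paper's proof: both express $f$ as a sum of degree-$T$ indicator monomials, one per possible query transcript, and use the unique fixed-point property of Proposition~\ref{def:charac} to guarantee that exactly one monomial fires on each input. The only (cosmetic) difference is that you index the sum by all answer vectors $\vec{o}\in\{0,1\}^T$ weighted by $w_\F(\vec{o})$, whereas the paper sums only over the realised fixed-points of inputs with $f(x)=1$.
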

It follows that for any Boolean function $f$, $\deg(f) \leq D^{\text{Gen}}(f) \leq D(f) \leq \deg(f)^3,$ meaning that for functions such that $\deg(f) = D(f)$, there cannot be any computational advantage from causal indefiniteness. %

A second lower bound is obtained by looking at the certificate complexity of $f$~\cite{BUHRMAN2002Complexity}.
The certificate complexity quantifies the number of bits of $x$ one has to know, in the worst case, to compute $f(x)$.
Formally, a certificate for an input $x \in \{0, 1\}^n$ of $f$ is a set of indices $\mathnormal{I} \subseteq \{1,\dots,n\}$ such that, for all $y \in \{0, 1\}^n$, $y_{|\mathnormal{I}} = x_{|\mathnormal{I}}$ implies $f(x) = f(y)$, where $x_{|\mathnormal{I}}$ denotes the restriction of $x$ to the bits $x_i$ at indices $i\in\mathnormal{I}$. 
Writing $C(f,x)$ the smallest certificate for $x$, the certificate complexity of of $f$ is defined as $C(f) := \max_x C(f, x).$
It is known that for any function $f$, $C(f) \leq D(f) \leq C(f)^2$~\cite{BUHRMAN2002Complexity}. 
We show that $C(f)$ is also a lower bound of $D^\textup{Gen}$ (see Appendix~\ref{appendix:deglower}).
\begin{restatable}{proposition}{certBound}
  \label{prop:certlower}
  For any Boolean function $f$, $C(f) \leq D^\textup{Gen}(f).$
\end{restatable}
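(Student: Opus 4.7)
The plan is to let $w$ be any classical-deterministic process witnessing $D^\textup{Gen}(f) = T$, and to show that for every input $x$ the set of indices actually queried when $w$ is fed the oracle $\vec{O}_x$ forms a certificate of $x$ for $f$ whose size is at most $T$. Since the global past is trivial in this setting (the only ``input'' to the algorithm is the oracle itself), Proposition~\ref{def:charac} guarantees that for each $x$ there is a unique tuple $\vec{i}_x = (i_{x,1},\dots,i_{x,T})$ with $w\big(\vec{O}_x(\vec{i}_x)\big) = (\vec{i}_x, f(x))$, using that $w$ computes $f$.

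Define $\mathnormal{I}_x := \{\,i_{x,k} : 1 \le k \le T\,\} \subseteq \{1,\dots,n\}$; clearly $|\mathnormal{I}_x| \le T$. The first and main step of the proof is to argue that $\mathnormal{I}_x$ is a certificate for $x$. Let $y \in \{0,1\}^n$ be such that $y_{|\mathnormal{I}_x} = x_{|\mathnormal{I}_x}$. Then $O_y(i_{x,k}) = y_{i_{x,k}} = x_{i_{x,k}} = O_x(i_{x,k})$ for every $k$, so $\vec{O}_y(\vec{i}_x) = \vec{O}_x(\vec{i}_x)$ and therefore
\begin{equation*}
w\big(\vec{O}_y(\vec{i}_x)\big) \;=\; w\big(\vec{O}_x(\vec{i}_x)\big) \;=\; (\vec{i}_x, f(x)).
\end{equation*}
Thus $\vec{i}_x$ is a fixed point of the composition $w * \vec{O}_y$, and by the uniqueness part of Proposition~\ref{def:charac} it is \emph{the} fixed point. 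Consequently $(w * \vec{O}_y) = f(x)$; but $w$ computes $f$, so $(w * \vec{O}_y) = f(y)$, and we conclude $f(y) = f(x)$. Hence $\mathnormal{I}_x$ certifies $x$, giving $C(f,x) \le |\mathnormal{I}_x| \le T$.

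Taking the maximum over $x \in \{0,1\}^n$ yields $C(f) \le T = D^\textup{Gen}(f)$, as required. The only potentially delicate point is the application of the uniqueness clause of Proposition~\ref{def:charac}: one must check that the fixed-point equation used above is precisely of the form required by that proposition, which it is since here $\mathcal{P}$ is trivial and all the operations are instances of the same function $O_y$. Everything else is a direct unpacking of definitions, so I do not expect any real obstacle in carrying this out rigorously in the appendix.
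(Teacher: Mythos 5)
Your proof is correct and follows essentially the same route as the paper's: take the unique fixed-point $\vec{i}_x$ of $w$ under $\vec{O}_x$, observe that any $y$ agreeing with $x$ on $\mathnormal{I}_x=\{i_{x,1},\dots,i_{x,T}\}$ admits the same fixed-point, and invoke uniqueness to conclude $f(y)=f(x)$, so $\mathnormal{I}_x$ is a certificate of size at most $T$. No gaps.
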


It follows that for any Boolean function $f$, $C(f) \leq D^{\text{Gen}}(f) \leq D(f) \leq C(f)^2,$ meaning that to obtain a separation between $D^\textup{Gen}$ and $D$, one also needs first a separation between $C$ and $D$ and that, furthermore, any separation will be at most quadratic. 

These two lower bounds provide criteria for finding Boolean functions that could exhibit an advantage from causal indefiniteness. 
In particular, it is known that almost all Boolean functions have full degree, meaning that the fraction of Boolean functions with $\deg(f) < D(f)$ is $o(1)$~\cite{BUHRMAN2002Complexity}.
Nevertheless, we show in the next section the existence of a Boolean function satisfying both $\deg(f) < D(f)$ and $C(f) < D(f)$, and for which we obtain a separation between $D(f)$ and $D^\textup{Gen}(f)$.

\section{A constant separation between $D^\textup{Gen}$ and $D$}
\label{sec:exactSep}

In order to demonstrate a separation between $D^\textup{Gen}$ and $D$ and hence a query complexity advantage from causal indefiniteness, we construct here a specific 6-bit Boolean function $f_{6c}$.
The function $f_{6c}$ can be represented as the degree-3 polynomial
\begin{equation}
  \label{eq:polyrep}
  f_{6c}(x_1, \dots, x_6) = x_4(1 \oplus x_2)x_3 \oplus x_5(1 \oplus x_3)x_1 \oplus x_6(1 \oplus x_1)x_2 \oplus x_1x_2x_3,
\end{equation}
where $\oplus$ denotes addition modulo 2.%
\footnote{One can easily check that Eq.~\eqref{eq:polyrep} is equivalent (on Boolean inputs) to the multilinear polynomial $g(x_1, \dots, x_6) = x_4(1 - x_2)x_3 + x_5(1 - x_3)x_1 + x_6(1 - x_1)x_2 + x_1x_2x_3$.}
Note that the structure of $f_{6c}$ is similar to the induced functions of the Lugano process~\eqref{eq:lugano}, with several monomials of the form $x_{k + 3}(1 \oplus x_{k \oplus_3 1})x_{k \oplus_3 2}$. 

It follows from Eq.~\eqref{eq:polyrep} that $\deg(f_{6c}) = 3$ and one can readily show that $C(f_{6c}) = 3$ as, for any $x$, the value of $f_{6c}(x)$ can be obtained from three bits of $x$. 
Furthermore, it is easy to compute $f_{6c}$ in a sequential fashion using four queries, and indeed we show that four queries are also necessary, so that $D(f_{6c}) = 4$. 
These properties (for which we give detailed proofs in Appendix~\ref{appendix:f6c}) mean that $f_{6c}$ satisfies the two conditions identified in Section~\ref{sec:lower} that make it a suitable candidate for a potential query complexity advantage from causal indefiniteness.

The following result shows that there is indeed a causally indefinite classical-deterministic process that computes $f_{6c}$ in three queries, yielding such an advantage.
\begin{restatable}{proposition}{fICO}
  \label{prop:fICO} $D^\textup{Gen}(f_{6c}) = 3$.  
\end{restatable}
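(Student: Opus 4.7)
The plan is to prove both $D^\textup{Gen}(f_{6c}) \geq 3$ and $D^\textup{Gen}(f_{6c}) \leq 3$. The lower bound is immediate from either Proposition~\ref{prop:deglower} or Proposition~\ref{prop:certlower}, since $\deg(f_{6c}) = C(f_{6c}) = 3$ is already established in Appendix~\ref{appendix:f6c}.

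For the upper bound I will exhibit an explicit three-slot classical-deterministic process $w$ computing $f_{6c}$, guided by the resemblance noted in the excerpt between three of the four monomials of $f_{6c}$ and the Lugano induced functions~\eqref{eq:lugano}. With trivial past $\mathcal{P}$, input alphabets $\I_k = \{1,\dots,6\}$, output alphabets $\O_k = \{0,1\}$, and global future $\F = \{0,1\}$, I take
\begin{equation*}
  w_k(o_1,o_2,o_3) = k + 3\,(1 \oplus o_{k \oplus_3 1})\, o_{k \oplus_3 2}, \qquad w_{\F}(o_1,o_2,o_3) = o_1 o_2 \vee o_2 o_3 \vee o_1 o_3,
\end{equation*}
so that slot $k$ queries bit $x_k$ when the Lugano-style expression evaluates to $0$ and bit $x_{k+3}$ otherwise, and the global future outputs the majority of the three retrieved bits.

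The first verification is that $w$ is a valid classical-deterministic process, for which I apply the fixed-point characterisation of Proposition~\ref{def:charac}. Since each $w_k$ takes values only in $\{k,k+3\}$, for any operations $\mu_k : \{1,\dots,6\} \to \{0,1\}$ only the restrictions $\hat{\mu}_k(j) := \mu_k(k + 3j)$ are relevant; writing $i_k = k + 3 j_k$, the fixed-point equation $i_k = w_k(\vec{\mu}(\vec{i}))$ reduces exactly to the Lugano fixed-point equation for $(j_1,j_2,j_3)$ with operations $(\hat{\mu}_1,\hat{\mu}_2,\hat{\mu}_3)$, which has a unique solution because $w_{\text{Lugano}}$ is itself a valid process function.

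The second and main verification is that $w * \vec{O}_x = f_{6c}(x)$ for every $x \in \{0,1\}^6$. I proceed by case analysis on the fixed-point tuple $(j_1,j_2,j_3)$: the four cases of Hamming weight at least two turn out to be inconsistent with the self-consistency equations; the all-zero case $(0,0,0)$ gives $o_k = x_k$ and forces the three ``Lugano-shaped'' monomials of $f_{6c}$ to vanish, so $f_{6c}(x)$ reduces to $x_1 x_2 x_3 = o_1 o_2 o_3$; and each of the three weight-one cases collapses $f_{6c}(x)$ to a single $o_k$ after substituting the bit values forced by the fixed-point equations. Checking in each admissible case that $o_1 o_2 \vee o_2 o_3 \vee o_1 o_3$ returns the required value then yields $w * \vec{O}_x = f_{6c}(x)$ for all $x$, hence $D^\textup{Gen}(f_{6c}) \leq 3$. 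The main obstacle is this case analysis together with the identification of the correct global-future output: the fourth monomial $x_1 x_2 x_3$ of $f_{6c}$ is precisely what is needed so that the majority function recovers $f_{6c}(x)$ on the all-zero branch of the fixed point, tying the construction together.
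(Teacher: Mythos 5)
Your proposal is correct and follows essentially the same route as the paper: the lower bound from the certificate/degree bounds, and the upper bound via the Lugano process with induced functions $k + 3\,w_k(\vec{o})$ and a global-future readout, verified by a case analysis over the (weight $\le 1$) fixed points. Your majority function $o_1 o_2 \vee o_2 o_3 \vee o_1 o_3$ coincides on all of $\{0,1\}^3$ with the paper's $\overline{w}_{\F}(o_1,o_2,o_3) = o_1(1\oplus o_2)o_3 \oplus o_2(1\oplus o_3)o_1 \oplus o_3(1\oplus o_1)o_2 \oplus o_1o_2o_3$, so the construction is literally the same process written in a slightly tidier form.
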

\begin{sproof}
The classical-deterministic process that computes $f_{6c}$ in three queries is based on the Lugano process $w_\text{Lugano}: \{0, 1\}^3 \to \{0, 1\}^3$. 
To consider the action of $w_\text{Lugano}$ on the query oracle $O_x : \{1, \dots, 6\} \to \{0, 1\}$, we modify the process and consider the function $\overline{w}_\text{Lugano} : \{0, 1\}^3 \to \{1, \dots, 6\}^3 \times \{0, 1\}$, to which we have added a non-trivial future space $\F = \{0, 1\}$, and which is defined by the induced functions, for all $1 \leq k \leq 3$ and $(o_1, o_2, o_3) \in \{0, 1\}^3$,
\begin{equation}
  \overline{w}_k(o_1, o_2, o_3) = k + 3 \cdot w_k(o_1, o_2, o_3),
\end{equation}
where $w_k(o_1, o_2, o_3) = (1 \oplus o_{k \oplus_3 1}) o_{k \oplus_3 2}$ are the induced functions of $w_\text{Lugano}$ as in Eq.~\eqref{eq:lugano}, and
\begin{equation}
  \label{eq:output}
  \overline{w}_F(o_1, o_2, o_3) = o_1(1 \oplus o_2)o_3  \oplus o_2(1 \oplus o_3)o_1  \oplus  o_3(1 \oplus o_1)o_2  \oplus  o_1o_2o_3.
\end{equation}
The logical consistency of this function can be proven from that of the original Lugano process. 
To verify that it computes $f_{6c}$, one can check explicitly that $\overline{w}_{F} \big(O_x(i_{x,1}), O_x(i_{x,2}), O_x(i_{x,3}) \big) = f_{6c}(x)$, where $\vec{i}_x=(i_{x,1}, i_{x,2}, i_{x,3})$ is the unique fixed-point of $\overline{w}_{\text{Lugano}}$ acting on three copies of $O_x$.
The detailed proof is given in Appendix~\ref{appendix:f6c}.
\end{sproof}

These results exhibit an explicit $6$-bit function $f_{6c}$ with $D(f_{6c})=4$ and $D^\text{Gen}(f_{6c})=3$, showing that:
\begin{theorem}
  \label{th:ConstantSep}
  There exists a Boolean function $f$ for which $D^\textup{Gen}(f) < D(f)$.
\end{theorem}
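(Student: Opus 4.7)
The plan is to simply instantiate $f = f_{6c}$ from the construction of the previous section and combine the two complexity values already established for this function. Specifically, I would invoke Proposition~\ref{prop:fICO}, which gives $D^\textup{Gen}(f_{6c}) = 3$ via the modified Lugano process $\overline{w}_{\text{Lugano}}$, together with the prior claim (proved in Appendix~\ref{appendix:f6c}) that $D(f_{6c}) = 4$. Combining these yields $D^\textup{Gen}(f_{6c}) = 3 < 4 = D(f_{6c})$, so $f_{6c}$ witnesses the theorem.

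All the substantive work is thus already delegated: the upper bound on $D^\textup{Gen}$ is carried by the causally indefinite construction in Proposition~\ref{prop:fICO}, and the matching separation requires the sequential lower bound $D(f_{6c}) \geq 4$. The latter is the real ``obstacle'' hidden behind the statement, but it has already been claimed just before the theorem and is established via a standard adversarial/decision-tree argument on the 6-bit function. Given these two facts, the proof of the theorem itself reduces to a one-line citation: exhibit $f_{6c}$ and invoke the two bounds. Consequently, I do not anticipate any additional technical obstacle at this stage beyond what has already been handled in Proposition~\ref{prop:fICO} and the appendix results on $D(f_{6c})$.
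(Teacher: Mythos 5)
Your proposal is correct and matches the paper's argument exactly: the theorem is obtained by exhibiting $f_{6c}$ and combining $D^{\textup{Gen}}(f_{6c})=3$ from Proposition~\ref{prop:fICO} with $D(f_{6c})=4$ from Proposition~\ref{prop:Dg} in Appendix~\ref{appendix:f6c}. No gap; the substantive work is indeed entirely delegated to those two results, just as in the paper.
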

This result proves that causal indefiniteness can provide a computational advantage in the standard framework of query complexity. 
In the next section, we show that this separation can serve as a stepping stone towards an asymptotic separation between  $D^\textup{Gen}$ and $D$ for a particular family of Boolean functions.

\section{A polynomial advantage from causal indefiniteness}
\label{sec:polyAdv}

In this section we show that a polynomial query complexity advantage is possible by constructing a family of Boolean functions based on a recursive iteration of the 6-bit function $f_{6c}$ defined in Eq.~\eqref{eq:polyrep}. 
Similar recursive constructions have previously used to amplify constant separations between exact quantum query complexity and deterministic query complexity into asymptotic ones (see, e.g.,~\cite{Bernstein97Quantum, ambainis2013superlinear, montanaro2015exact}). 
Given a Boolean function $f: \{0, 1\}^n \to \{0, 1\}$, the construction considers the recursively defined functions
\begin{equation}
  f^{(l+1)}(x_1, \dots, x_{n^{l+1}}) =  f \big(f^{(l)}(x_1, \dots, x_{n^{l}}), \dots, f^{(l)}(x_{(n-1)n^{l} + 1}\dots x_{n^{l+1}}) \big),
\end{equation}
with $f^{(1)}:=f$. The deterministic query complexity of these recursively defined functions is determined by the initial function $f$: if $D(f) = T$ then $D(f^{(l)}) = T^l$~\cite{montanaro2013compositiontheoremdecisiontree}.

By proving an analogous theorem for $D^\textup{Gen}$, one could amplify any constant separation between $D^\textup{Gen}$ and $D$ given by a function $f$ (such as $f_{6c}$) into a polynomial separation. 
Here we prove a slightly weaker result, which is nevertheless sufficient to obtain such a separation, and leave open the question of whether or not one can prove equality under composition for the generalised deterministic query complexity.

\begin{restatable}{theorem}{compo}
  \label{th:compo}
  For any Boolean function $f$ and for any depth $l > 1$,
  \begin{equation}
    D^\textup{Gen}(f) = T \implies D^\textup{Gen}(f^{(l)}) \leq T^l.
  \end{equation}
\end{restatable}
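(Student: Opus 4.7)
The plan is to proceed by induction on $l$. The base case $l = 1$ holds trivially since $D^\text{Gen}(f^{(1)}) = D^\text{Gen}(f) = T$. For the inductive step, suppose $w^{(l)}$ is a $T^l$-slot classical-deterministic process computing $f^{(l)}$, and let $w_f$ be a $T$-slot process computing $f$. I construct a $T^{l+1}$-slot process $w^{(l+1)}$ computing $f^{(l+1)}$ by composing $w_f$ with $T$ independent copies of $w^{(l)}$: each of the $T$ ``logical queries'' made by $w_f$---which ought to return the value of $f^{(l)}$ on one length-$n^l$ block of the input to $f^{(l+1)}$---is implemented by a dedicated copy of $w^{(l)}$ acting on that block.

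Concretely, I label the $T^{l+1}$ slots of $w^{(l+1)}$ by pairs $(k, j)$ with $k \in \{1, \dots, T\}$ and $j \in \{1, \dots, T^l\}$. Given oracle outputs $(o_{k,j})$, I first set $o'_k := w^{(l)}_\F(o_{k,1}, \dots, o_{k,T^l})$ for each $k$, then $i'_k := w_{f,k}(o'_1, \dots, o'_T)$, and define the index produced at slot $(k, j)$ to be $i_{k,j} := w^{(l)}_j(o_{k,1}, \dots, o_{k,T^l}) + (i'_k - 1) n^l \in \{1, \dots, n^{l+1}\}$, with global output $w^{(l+1)}_\F := w_{f,\F}(o'_1, \dots, o'_T)$.

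To verify logical consistency via Proposition~\ref{def:charac}, I use a nested fixed-point argument. For arbitrary operations $\mu_{k,j}: \{1,\dots,n^{l+1}\} \to \{0,1\}$ and each pair $(k, i')$, define the shifted operations $\tilde{\mu}^{(i')}_{k,j}(j') := \mu_{k,j}\bigl(j' + (i'-1) n^l\bigr)$ from $\{1, \dots, n^l\}$ to $\{0, 1\}$. The fixed-point property of $w^{(l)}$ applied to $(\tilde{\mu}^{(i')}_{k,j})_j$ determines, for each $(k, i')$, a unique global output $\tilde{o}^{(k,i')} \in \{0,1\}$. These in turn define effective operations $\mu'_k(i') := \tilde{o}^{(k,i')}$ for $w_f$, whose own unique fixed-point then unfolds into the unique fixed-point of $w^{(l+1)}$.

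Finally, taking each $\mu_{k,j} = O_x$, the shifted operations become $\tilde{\mu}^{(i')}_{k,j}(j') = x_{j' + (i'-1) n^l} = O_{x^{(i')}}(j')$, where $x^{(i')} := (x_{(i'-1) n^l + 1}, \dots, x_{i' n^l})$ is the $i'$-th block of $x$. By the inductive hypothesis on $w^{(l)}$, $\tilde{o}^{(k,i')} = f^{(l)}(x^{(i')})$, which is independent of $k$, so the effective outer oracle is $O_y$ with $y_{i'} := f^{(l)}(x^{(i')})$, and by assumption on $w_f$ the global output equals $f(y) = f^{(l+1)}(x)$. The main technical obstacle is verifying the unique fixed-point property of $w^{(l+1)}$, but this reduces cleanly to nesting the two fixed-point conditions of $w_f$ and $w^{(l)}$; the remaining work is essentially bookkeeping of index shifts.
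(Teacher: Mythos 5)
Your proposal is correct and follows essentially the same route as the paper: compose the outer process $w_f$ with $T$ offset copies of $w^{(l)}$, each acting on one length-$n^l$ block, and establish logical consistency by nesting the unique fixed-point conditions of $w^{(l)}$ (per block) inside that of $w_f$. The only difference is presentational --- the paper routes the block offset through an intermediate process $\tilde{w}^{(l)}$ with a nontrivial global past $\mathcal{P}=\{1,\dots,n\}$ and composes via the link product, whereas you hard-code the offset $(i'_k-1)n^l$ directly into the induced functions of $w^{(l+1)}$.
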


\begin{sproof}
  Starting with a classical-deterministic process $w : \bigtimes_{k=1}^{T} \O_k \to \bigtimes_{k=1}^{T} \I_k \times \F$  computing $f$ (which necessarily has $\O_k=\F=\{0,1\}$ and $\I_k=\{1,\dots,n\}$) and another classical-deterministic process $w^{(l)} : \bigtimes_{k=1}^{T^l} \O_k \to \bigtimes_{k=1}^{T^l} \I^{(l)}_k \times \F$ (with $\I^{(l)}_{k} = \{1, \dots, n^l\}$) computing $f^{(l)}$, the goal is to construct a process function $w^{(l+1)}$ computing $f^{(l+1)}$.
    
    We first construct a classical-deterministic process $\tilde{w}^{(l)}: \mathcal{P} \times  \bigtimes_{k=1}^{T^l} \O_k \to \bigtimes_{k=1}^{T^l} \I^{(l+1)}_k \times \F$, with $\mathcal{P}=\{1,\dots,n\}$, that computes $f^{(l)}$ on the $n$ consecutive sequences of length $n^l$ of a bitstring $x \in \{0, 1\}^{n^{l+1}}$. It is defined by its induced functions $\tilde{w}^{(l)}_k(a, o_1, \dots, o_{T^l}) = (a - 1)n^{l} + w^{(l)}_k(o_1, \dots, o_{T^l}),$ and $\tilde{w}^{(l)}_F(a, o_1, \dots, o_{T^l}) = w_F(o_1, \dots, o_{T^l})$, and its logical consistency can be seen to following from that of $w^{(l)}$. 
	In particular, this function is such that, for any $x \in \{0, 1\}^{n^{(l+1)}}$ and $a \in \mathcal{P}$, we have $(\tilde{w}^{(l)} * \vec{O}_x) (a) = f^{(l)}(x_{(a-1)n^l + 1}, \dots, x_{an^l})$.

    We then construct a classical-deterministic process $w^{(l+1)}$ from $\tilde{w}^{(l)}$ that computes $f^{(l+1)}$ in $T^{l+1}$ queries. This function is defined for any $T^{l+1}$ operations $\vec{\mu} = (\vec{\mu}_1, \dots, \vec{\mu}_T)$, with $\vec{\mu}_k = (\mu_{k,1}, \dots, \mu_{k,T^l})$ as
\begin{equation}
  w^{(l+1)} * \vec{\mu} = w * (\tilde{w}^{(l)} * \vec{\mu}_1, \dots, \tilde{w}^{(l)} * \vec{\mu}_T) \in \F.
\end{equation}
It follows from the properties of $w$ and $\tilde{w}^{(l)}$ that $w^{(l+1)}$ is logically consistent, and that it computes $f^{(l+1)}$ in $T+1$ queries. We provide a full proof in Appendix~\ref{appendix:compo}.
\end{sproof}

Theorem~\ref{th:compo} shows that we can amplify the separation obtained in Section~\ref{sec:exactSep} for the function $f_{6c}$, such that $D^\textup{Gen}(f_{6c}^{(l)}) \leq 3^l$ while $D(f_{6c}) = 4^l$. 
With that, the family of functions $\{f_{6c}^{(l)}\}_l$ gives us the following result.
\begin{theorem}
  \label{th:sep}
  There exists a family of Boolean functions $\{f_l\}_l$ with $D(f_l) \to \infty$ such that
  \begin{equation}
    D^\textup{Gen}(f_l) = O(D(f_l)^{0.792\dots}).
  \end{equation}
\end{theorem}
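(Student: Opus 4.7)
The plan is to define the family $f_l := f_{6c}^{(l)}$ via the recursive construction from Section~\ref{sec:polyAdv}, and then simply combine Proposition~\ref{prop:fICO} and Theorem~\ref{th:compo} on the generalised side with the standard composition theorem of~\cite{montanaro2013compositiontheoremdecisiontree} on the deterministic side. There is essentially no hidden difficulty; the result is a direct arithmetic consequence of the two separately established composition (or sub-composition) statements.

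First I would record that Proposition~\ref{prop:fICO} gives $D^\textup{Gen}(f_{6c}) = 3$, and that the auxiliary computations in Appendix~\ref{appendix:f6c} establish $D(f_{6c}) = 4$. Applying Theorem~\ref{th:compo} to $f_{6c}$ with $T=3$ yields the upper bound
\begin{equation}
D^\textup{Gen}(f_l) \;\le\; 3^l
\end{equation}
for every $l \ge 1$. On the other hand, the composition theorem for deterministic decision tree complexity~\cite{montanaro2013compositiontheoremdecisiontree} gives the exact equality $D(f_l) = D(f_{6c})^l = 4^l$, which also shows that $D(f_l) \to \infty$ as $l \to \infty$, satisfying the hypothesis of the theorem.

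The final step is the exponent computation. Writing $\alpha := \log_4 3 = \log 3/\log 4 = 0.79248\ldots$, we obtain
\begin{equation}
D^\textup{Gen}(f_l) \;\le\; 3^l \;=\; (4^l)^{\alpha} \;=\; D(f_l)^{\alpha},
\end{equation}
so $D^\textup{Gen}(f_l) = O\!\bigl(D(f_l)^{0.792\ldots}\bigr)$, as claimed.

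The only step that could in principle be subtle is the assertion $D(f_{6c}^{(l)}) = 4^l$, since the cited composition result for $D$ is typically stated for arbitrary Boolean functions and relies on nontrivial machinery (the lower bound is the hard direction); but since we are explicitly allowed to invoke~\cite{montanaro2013compositiontheoremdecisiontree}, this presents no actual obstacle. Likewise the upper bound half, $D(f^{(l)}) \le D(f)^l$, is immediate by plugging the tree for $f$ into itself $l$ times. Everything else is bookkeeping, and the theorem follows.
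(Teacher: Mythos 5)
Your proposal is correct and follows exactly the paper's own argument: take $f_l = f_{6c}^{(l)}$, combine $D^{\text{Gen}}(f_{6c})=3$ (Proposition~\ref{prop:fICO}) with Theorem~\ref{th:compo} to get $D^{\text{Gen}}(f_l)\le 3^l$, use the composition theorem for $D$ to get $D(f_l)=4^l$, and conclude with the exponent $\log_4 3 = 0.792\ldots$. No differences worth noting.
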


Theorem~\ref{th:sep} shows a polynomial separation between causally definite and indefinite computations obtained in a standard complexity-theoretic model. It provides a clear proof that causal indefiniteness can provide asymptotic advantages in query complexity, analogous to known separations between classical and quantum computational models~\cite{ambainis2018understanding}. 
Note moreover that in the setting we regard both sequential and causally indefinite computations are given access to the \emph{same} type of oracle $O_x$ -- something not true in the comparison of classical and quantum query complexity. 

Whether the separation we obtained is tight or not is an open question. As ${D^{\text{Gen}}(f_{6c}) = 3}$, proving a stronger version of Theorem~\ref{th:compo} with an equality instead of an upper-bound would imply a tight separation for the family of function $\{f^{(l)}_{6c}\}_l$. 
Nevertheless, as shown in Section~\ref{sec:lower}, $C(f)\le D^{\text{Gen}}(f) \le D(f) \le C(f)^2$, and hence the largest possible separation is quadratic.

\section{From a classical to quantum query complexity advantage}
\label{sec:quantum}

The asymptotic advantage in a fully classical setting shown in the previous section highlights the fact it is not only interesting to study causally indefinite computations in quantum settings.
Nonetheless, one of our primary motivations for studying this classical scenario was to gather insight into the quantum case.
In~\cite{abbott2024quantumquerycomplexityboolean}, where the quantum query complexity of Boolean functions was first studied, it was shown that indefinite causal order can reduce the probability of error in computing some Boolean functions with a fixed number of queries. 
However, no advantage in the number of queries to compute a function exactly was found, nor was any asymptotic separation -- either in the bounded-error or exact case -- observed.
Inspired by the stronger results obtained in Section~\ref{sec:exactSep} for the classical-deterministic setting, we revisit here the quantum setting and show that a constant separation in query complexity can also be obtained.

  More precisely, to show a separation in the quantum setting, where both sequential and causally indefinite processes can query superpositions of inputs, we consider another 6-bit Boolean function $f_{6q}$, obtained by modifying the function $f_{6c}$ used in Section~\ref{sec:exactSep}. 
We show that while causally definite quantum supermaps cannot compute this new function in three quantum queries, a modified quantum version of the Lugano process can compute it in three quantum queries. 
We finish by discussing whether or not such a constant advantage can be turned into an asymptotic with a similar method as that used in Section~\ref{sec:polyAdv} in the classical case.

\subsection{Quantum supermaps and the quantum query model of computation}
\label{sec:sup}

We begin by briefly introducing the framework of quantum supermaps and the quantum query model of computation, as well as some different mathematical tools and notation.

The process function formalism presented in Section~\ref{sec:classical-det} can be viewed as the classical, deterministic limit of a more general framework, that of quantum supermaps~\cite{BaumelerSpaceLogically2016}, which describes how quantum operations can be consistently composed and transformed~\cite{chiribellatransforming2008}.
A quantum supermap $\S$ with $T$ ``slots'' is then the most general transformation that maps any $T$ quantum channels $\M_k : \L(\H^{I_k}) \to \L(\H^{O_k})$ (i.e., completely positive (CP) and trace-preserving (TP) maps) into another quantum channel $\S(\M_1, \dots, \M_T) : \L(\H^P) \to \L(\H^F)$. 
To be operationally consistent, $\S$ is required to be $T$-linear, completely CP-preserving and TP-preserving~\cite{chiribellatransforming2008}. 
Using the Choi-Jamio\l{}kowski isomorphism~\cite{CHOI1975285, JAMIOLKOWSKI1972275}, a quantum channel can be represented as a positive semidefinite matrix $\mathsf{M}_k = \sum_{i,i'} \ketbra{i}{i'} \otimes \M_k(\ketbra{i}{i'}) \in \L(\H^{I_k} \otimes \H^{O_k})$ satisfying $\Tr_{O_k}( \mathsf{M}_k) = \id^{I_k}$, with $\Tr_{O_k}$ being the partial trace over the system $\H^{O_k}$. 
Writing $\H^{(IO)_k}=\H^{I_k O_k}=\H^{I}\otimes\H^{O}$, $[T] := \{1, \dots, T\}$, and $\H^{(IO)_{[T]}} = \bigotimes_{k \in [T]} \H^{(IO)_k}$, a quantum supermap can be also represented in the Choi picture as a ``process matrix''~\cite{oreshkov2012quantum}, a positive semidefinite matrix $W \in \L(\H^{P (IO)_{[T]} F})$ belonging to a specific subspace $\L^{\text{Gen}}$ (see Appendix~B of \cite{abbott2024quantumquerycomplexityboolean} for more details) and satisfying a normalisation constraint~\cite{araujowitnessing2015,wechs2019definition}. 
Quantum supermaps that compose the $\M_k$ in a fixed causal order represent standard quantum circuits or ``quantum combs''~\cite{chiribellaquantum2008}, and correspond to process matrices belonging to a subspace $\L^{\text{Seq}}\subset\L^{\text{Gen}}$ (see Appendix~C of~\cite{abbott2024quantumquerycomplexityboolean}).
We will generically denote general and sequential quantum supermaps as $\S^\textup{Gen}$ and $\S^\text{Seq}$, respectively.

The quantum query model of computation was extended from the standard, causally ordered setting~\cite{BUHRMAN2002Complexity,ambainis2018understanding} to general, potentially causally indefinite, quantum supermaps in~\cite{abbott2024quantumquerycomplexityboolean}.
This generalisation is analogous to that presented in Section~\ref{sec:querymodel} for the classical setting, with the crucial conceptual difference being that the classical oracles $O_x$ must be generalised to unitary quantum oracles $\widetilde{O}_x : \H^Q \to \H^{Q'}$, where $\H^{Q},\H^{Q'}$ are isomorphic $(n+1)$-dimensional Hilbert spaces\footnote{We distinguish here and below between isomorphic primed and unprimed spaces, indicating inputs and outputs, so that the corresponding Choi matrices can be unambiguously composed, in particular with process matrices (cf.\ Fig.~\ref{fig:Qsubroutine}).} such that, for any index $i \in \{0,\dots,n\}$, 
\begin{equation}
\label{eq:phase}
    \widetilde{O}_x\ket{i} = \begin{cases} (-1)^{x_i}\ket{i} & \text{ if } i \not = 0, \\
                                           \ket{i} & \text{ otherwise.} \end{cases}
\end{equation}
In particular, this allows the input bits to be queried in superposition. 
Note that the case $\ket{i} = \ket{0}$ is needed to ensure $\widetilde{O}_x$ is equivalent to the potentially more familiar oracle $\hat{O}_x\ket{j, b} = \ket{j, b \oplus x_j}$, with $1\leq j\leq n$, $b \in \{0, 1\}$~\cite{ambainisquantum2002,ambainis2018understanding}.
We denote the quantum channel associated to the unitary $\widetilde{O}_x$ as $\widetilde{\O}_x$, and its Choi matrix as $\widetilde{\mathsf{O}}_x$.

In this paper, we adopt the generalised definition of quantum query complexity formulated in~\cite{abbott2024quantumquerycomplexityboolean} in terms of quantum supermaps. 
This formulation is the quantum analogue of the generalised classical-deterministic query complexity (Definition~\ref{def:dgen}), and allows one to compare the quantum query complexity of causally definite and indefinite quantum supermaps on an equal footing.

\begin{definition}
  \label{def:Qgen}
  The \emph{generalised quantum query complexity} of a Boolean function $f$, $Q_E^{\text{Gen}}(f)$, is the minimum $T$ for which there exists a $T$-slot quantum supermap $\mathcal{S}^\text{Gen}$ such that for all $x \in \{0, 1\}$, measuring the qubit $\S^\textrm{Gen}(\underbrace{\widetilde{\O}_x, \dots, \widetilde{\O}_x}_\textrm{T copies}) = \rho_x\in\L(\H^F)$ in the computational basis gives $f(x)$ with probability $1$.%
  \footnote{We focus in this paper on the exact complexity $Q_E^{\text{Gen}}$, but the bounded-error complexity $Q_2^\text{Gen}$ is also of interest~\cite{abbott2024quantumquerycomplexityboolean}.}
\end{definition}

When restricted to quantum supermaps that are sequential, i.e., supermaps $\S^\text{Seq}$ whose process matrix representation belongs to the linear subspace $\L^\text{Seq} \subset \L^\text{Gen}$, one then recovers a notion of sequential query complexity $Q_E^\text{Seq}(f)$ which is equivalent to the standard notion of exact query complexity $Q_E(f)$ defined with quantum circuits. 
Note that it is known, for the particular task of computing Boolean functions in the query model, that the optimal causally definite computation can always be assumed to be sequential~\cite{abbott2024quantumquerycomplexityboolean}, even if more general causally definite computations are generally possible (e.g., with dynamical causal order)~\cite{wechs21}.

Some relations between the different query complexities introduced so far can be readily given. 
By definition we have, for any $f$, $Q_E^\text{Gen}(f) \leq Q_E(f)$ and furthermore, as the polynomial lower bound on $Q_E$ is also a lower bound of $Q_E^\textup{Gen}$ \cite{abbott2024quantumquerycomplexityboolean}, $\frac{\deg(f)}{2} \leq Q_E^{\text{Gen}}(f) \leq Q_E(f) \leq \deg(f)^3$. 
Because the classical-deterministic processes are subsets of quantum supermaps~\cite{BaumelerSpaceLogically2016} (see Appendix~\ref{appendix:FuncMat}), one also has $Q_E^\text{Gen}(f) \leq D^\text{Gen}(f)$. 
It then follows, by the polynomial lower bound on $Q_E^\textup{Gen}$ and since $D^\textup{Gen}(f) \leq \deg(f)^3$ (see Section~\ref{sec:lower}), that these two generalised query complexities are polynomially related. 
Finally, note that, in general, it is unclear exactly how $Q_E(f)$ and $D^\textup{Gen}(f)$ are related.

From these bounds, we see that only 
Boolean functions for which $Q_E > \deg(f)/2$ (i.e., the polynomial bound is not tight) are candidates to prove a separation between $Q_E^{\text{Gen}}$ and $Q_E$. 
In the next section, we show that such a separation is indeed possible. 

\subsection{A constant separation between $Q_E$ and $Q_E^\textup{Gen}$.}
\label{sec:SepQQgen}

The 6-bit function $f_{6c}$ studied in Section~\ref{sec:exactSep} can easily be seen to have $Q_E^\text{Gen}(f_{6c})=3$, since $\bar{w}_\text{Lugano}$ can itself be seen as a quantum supermap.
However, while $D(f_{6c})=4$ implies that $Q_E(f_{6c})\le 4$, it is not immediate that this bound is tight in the quantum setting. 
Indeed, using a semidefinite programming (SDP) formulation of the problem~\cite{barnum2003quantum} (cf.\ Appendix~\ref{sec:sdp}), one finds (up to numerical precision in the probability of obtaining the correct value of $f_{6c}(x)$) that $Q_E(f_{6c})=3$, and hence no advantage is obtained for this function in the quantum setting.
To prove a separation in terms of exact quantum query complexity, we instead modify the function $f_{6c}$ to define the new $6$-bit Boolean function
\begin{equation}
  \label{eq:f6q}
  f_{6q}(x_1, \dots, x_6) = f_{6c}(x_1 \oplus x_4, x_2 \oplus x_5, x_3 \oplus x_6, x_4, x_5, x_6).
\end{equation}
This function differs from $f_{6c}$ in that the references to $x_1$, $x_2$ and $x_3$ are replaced by $x_1 \oplus x_4$, $x_2 \oplus x_5$ and $x_3 \oplus x_6$ respectively. 
The extra difficulty of computing the parity of certain input bits at the same time as computing $f_{6c}$ itself will incur an extra cost in sequential algorithms, but, as we will see, can be performed at no extra cost by general supermaps.

Because $f_{6c}$ can be computed in four classical queries, it follows that $f_{6q}$ can be computed in four quantum queries, as the parity between two bits can be computed in a single quantum query~\cite{cleve1998quantum}. 
Indeed, defining for $1 \leq i,j \leq n$ a unitary $H_{i,j}: \H^Q \to \H^Q$ such that $H_{i,j}\ket{0} = \frac{\ket{i} + \ket{j}}{\sqrt{2}}$ and $H_{i,j}\ket{1} = \frac{\ket{i} - \ket{j}}{\sqrt{2}}$ (and which is completed elsewhere to be unitary), we have $H_{i,j} \widetilde{O}_x H_{i,j} \ket{0} = (-1)^{x_i}\ket*{x_i \oplus x_j}$. 
Using this subroutine to compute the parity between $x_1 \oplus x_4$, $x_2 \oplus x_5$ and $x_3 \oplus x_6$, one needs to query at most one of $x_4,x_5,x_6$ in order to compute $f_{6q}$ in the same manner that one computes $f_{6c}$ using four classical queries to $O_x$.

To prove that this sequential computation is optimal, we again use the SDP formulation of the problem due to~\cite{barnum2003quantum} that we detail in Appendix~\ref{sec:sdp}. 
This SDP allows one to obtain the minimum error $\varepsilon_T^\text{Seq}(f)$ with which one can compute a Boolean function $f$ using $T$ quantum queries. 
We find numerically that $\varepsilon_3^\text{Seq}(f_{6q})\approx 0.0207 > 0$, and hence obtain the following proposition (see Appendix~\ref{sec:sdp} for details).

\begin{restatable}{proposition}{num}
  \label{prop:Seqcomp}  
  $Q_E(f_{6q}) = 4$.  
\end{restatable}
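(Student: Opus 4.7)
The plan is to prove $Q_E(f_{6q}) = 4$ by establishing matching upper and lower bounds separately.

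For the upper bound $Q_E(f_{6q}) \leq 4$, I would transfer the optimal four-query classical decision tree for $f_{6c}$, which exists because $D(f_{6c}) = 4$, to a quantum algorithm for $f_{6q}$ using the parity subroutine highlighted just before the statement. Concretely, since $f_{6q}(x) = f_{6c}(x_1 \oplus x_4, x_2 \oplus x_5, x_3 \oplus x_6, x_4, x_5, x_6)$, any query the tree for $f_{6c}$ makes to one of its first three arguments can be replaced by a single application of the circuit $H_{j, j+3}\, \widetilde{O}_x\, H_{j, j+3}$ to $\ket{0}$, which yields $(-1)^{x_j}\ket{x_j \oplus x_{j+3}}$ and thus reveals the needed parity with one oracle call; queries to $x_4, x_5, x_6$ remain ordinary oracle queries. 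Since at most four such queries (mixed parity and direct) occur along any branch of the tree, this yields an exact four-query quantum algorithm for $f_{6q}$.

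For the lower bound $Q_E(f_{6q}) \geq 4$, the approach is to invoke the semidefinite programming (SDP) formulation of exact sequential quantum query complexity of Barnum, Saks and Szegedy~\cite{barnum2003quantum}, detailed in Appendix~\ref{sec:sdp}. For each $T$ and each Boolean function $f$ this SDP computes the minimum error $\varepsilon_T^\text{Seq}(f)$ achievable by a $T$-query sequential quantum algorithm computing $f$, so showing $\varepsilon_3^\text{Seq}(f_{6q}) > 0$ rules out any exact three-query algorithm. I would instantiate the SDP with $f = f_{6q}$ and $T = 3$, solve it numerically, and read off that its optimum is strictly positive, roughly $\varepsilon_3^\text{Seq}(f_{6q}) \approx 0.0207$. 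Combined with the upper bound above this pins $Q_E(f_{6q}) = 4$.

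The delicate step, and really the only nontrivial one, is to be sure that the numerically obtained value is genuinely positive rather than an artefact of floating-point precision. To rule this out I would complement the primal solver output with a feasible dual solution to the same SDP: by weak duality any such solution yields a certified lower bound on $\varepsilon_3^\text{Seq}(f_{6q})$, and a dual certificate whose objective is comfortably above machine epsilon (ideally expressible in exact rational arithmetic) provides a rigorous, solver-independent proof that three quantum queries do not suffice to compute $f_{6q}$ exactly.
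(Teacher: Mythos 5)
Your proposal is correct and follows essentially the same route as the paper: the upper bound is obtained by simulating the four-query classical algorithm for $f_{6c}$, replacing each query to $x_j$ ($j\le 3$) by the one-query parity subroutine $H_{j,j+3}\widetilde{O}_x H_{j,j+3}\ket{0}$, and the lower bound comes from numerically solving the Barnum--Saks--Szegedy SDP to find $\varepsilon_3^{\text{Seq}}(f_{6q})\approx 0.0207>0$. Your remark about extracting a dual certificate to make the numerical lower bound rigorous is exactly the caveat the paper itself raises in a footnote (the paper settles for primal--dual convergence and small constraint violations rather than an exact-arithmetic certificate).
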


In the next section we construct a causally indefinite quantum supermap, once again based on the Lugano process, which can compute the function $f_{6q}$ in three quantum queries, showing a quantum query complexity advantage from causal indefiniteness.

\subsubsection{A causally indefinite quantum supermap to compute $f_{6q}$}
\label{sec:lugQuantum}

As the function $f_{6q}$, defined by Eq.~\eqref{eq:f6q}, is constructed from $f_{6c}$, and because the latter can be computed with the Lugano process in three queries, a modified version of the Lugano process is a natural candidate to compute $f_{6q}$ in three queries.
Although the Lugano process $w_\text{Lugano}$ is a classical-deterministic process, it can be viewed as a quantum supermap whose process matrix is diagonal in a fixed pointer basis~\cite{BaumelerSpaceLogically2016,Araujo2017purification}. To compute $f_{6q}$ we will compose this supermap with some quantum subroutines to compute the parity between two given bits in one quantum query. 
Such subroutines can themselves be viewed as quantum supermaps taking a single quantum channel -- here, the oracle -- as input.
We prove the following proposition, whose full proof is given in Appendix~\ref{appendix:ICOcomp}.
\begin{restatable}[]{proposition}{Qsep}
  \label{prop:Qsep}
   $Q_E^{\textup{Gen}}(f_{6q}) = 3.$
\end{restatable}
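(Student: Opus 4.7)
The plan is to prove $Q_E^{\textup{Gen}}(f_{6q}) = 3$ by establishing matching bounds: a lower bound from the polynomial degree of $f_{6q}$, and an upper bound by explicit construction of a $3$-slot causally indefinite quantum supermap based on the Lugano process.

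For the lower bound, I would show $\deg(f_{6q}) = 6$ and invoke the polynomial lower bound $\deg(f)/2 \le Q_E^{\textup{Gen}}(f)$ recalled in Section~\ref{sec:sup}, which gives $Q_E^{\textup{Gen}}(f_{6q}) \ge 3$ immediately since $Q_E^{\textup{Gen}}$ is integer-valued. Starting from the degree-$3$ multilinear representation $g(z_1,\dots,z_6) = z_4(1-z_2)z_3 + z_5(1-z_3)z_1 + z_6(1-z_1)z_2 + z_1 z_2 z_3$ of $f_{6c}$, I would substitute the multilinear form $y_i := x_i + x_{i+3} - 2 x_i x_{i+3}$ of $x_i \oplus x_{i+3}$ for the first three arguments and $x_i$ for the last three. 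Because $y_1, y_2, y_3$ depend on pairwise disjoint pairs of variables, the resulting polynomial is automatically multilinear, and hence is the unique multilinear representation of $f_{6q}$. The coefficient of the degree-$6$ monomial $x_1 x_2 x_3 x_4 x_5 x_6$ can only arise from the summand $y_1 y_2 y_3$, as each of the other three summands (of the form $x_{k+3}(1-y_{k\oplus_3 1}) y_{k \oplus_3 2}$) omits one of $x_1, x_2, x_3$ from its variable set. Expanding $y_1 y_2 y_3 = \prod_{i=1}^{3}(x_i + x_{i+3} - 2 x_i x_{i+3})$ gives coefficient $(-2)^3 = -8 \neq 0$, so $\deg(f_{6q}) = 6$.

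For the upper bound, the guiding idea is that $f_{6q}(x) = f_{6c}(y_1, y_2, y_3, x_4, x_5, x_6)$, so it suffices to run the Lugano-based supermap from Proposition~\ref{prop:fICO} but have it query, at slot $k$, either the ``virtual'' bit $y_k$ or $x_{k+3}$, mirroring the original choice between $x_k$ and $x_{k+3}$ dictated by the Lugano dynamics. I would leave the Lugano process (viewed as a diagonal quantum supermap) intact and wrap each of its three oracle slots with a pre- and post-processing gadget controlled on the slot's input bit $i_k$: conditioned on $i_k = 0$, sandwich $\widetilde{O}_x$ between two copies of $H_{k, k+3}$, returning $(-1)^{x_k}\ket{y_k}$; conditioned on $i_k = 1$, sandwich $\widetilde{O}_x$ between two copies of $H_{k+3, 0}$, returning $(-1)^{x_{k+3}}\ket{x_{k+3}}$. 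In both cases the queried bit is produced in the pointer basis and the accompanying global phase is irrelevant to the diagonal Lugano supermap. The unique fixed-point property of Lugano applied to these virtual oracles then yields, via Eq.~\eqref{eq:output}, $\overline{w}_\F(z_1, z_2, z_3) = f_{6c}(y_1, y_2, y_3, x_4, x_5, x_6) = f_{6q}(x)$ deterministically with only three queries to $\widetilde{O}_x$.

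The main obstacle will be turning this informal ``wrapped Lugano'' description into a rigorous $3$-slot quantum supermap in the process-matrix formalism and verifying both validity (membership in $\L^{\textup{Gen}}$ and proper normalisation) and exact correctness. I would realise each wrapper as an explicit pair of controlled unitaries acting jointly on a control register carrying $i_k$ and the oracle register, compose their Choi matrices with that of the Lugano supermap, and check slot-by-slot that the effective action reproduces the classical-deterministic fixed-point analysis from the proof of Proposition~\ref{prop:fICO}, now with the bits queried at slot $k$ being $y_k$ (for $i_k = 0$) and $x_{k+3}$ (for $i_k = 1$) instead of $x_k$ and $x_{k+3}$. Validity of the resulting supermap then follows from that of its constituent pieces, while correctness reduces to the same combinatorial verification already carried out for $f_{6c}$.
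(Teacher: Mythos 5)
Your proposal is correct and follows essentially the same route as the paper: the upper bound comes from wrapping each slot of the (diagonal, quantum) Lugano supermap with a controlled Hadamard-sandwich gadget that returns $\ket{x_k \oplus x_{k+3}}$ or $\ket{x_{k+3}}$ in the pointer basis depending on the slot's input bit, which is exactly what the paper's subroutines $G_k$ do before the answer is read out classically from the copied outputs. Your explicit computation of $\deg(f_{6q})=6$ (via the disjoint-variable substitution and the $(-2)^3$ top coefficient) to obtain the lower bound $Q_E^{\textup{Gen}}(f_{6q})\ge 3$ is a useful addition, as the paper relies on this degree bound only implicitly.
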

\begin{sproof}
The classical-deterministic Lugano process $w_\text{Lugano}$ can be embedded in the process matrix formalism as a process matrix $W_\text{Lugano} \in \L(\H^{(IO)_{[3]}})$ where, for $1 \leq k \leq 3$, the $\H^{O_k}$ and $\H^{I_k}$ are 2-dimensional Hilbert spaces~\cite{BaumelerSpaceLogically2016,Araujo2017purification}.
Because the process is classical, we can further amend $W_\text{Lugano}$ with a future space $\H^F$ that keeps a copy of the classical values in the registers $\H^{O_1 O_2 O_3}$, defining
\begin{equation}
  \label{eq:genLugano}
\begin{split}
  \widetilde{W}_\text{Lugano} = \sum_{o_1, o_2, o_3 \in \{0, 1\}} & \ketbra{o_1, o_2, o_3}^{O_1 O_2 O_3}  \\
  & \otimes \ketbra{\overline{o}_2 o_3, \overline{o}_3 o_1, \overline{o}_1 o_2}^{I_1 I_2 I_3} \otimes \ketbra{o_1, o_2, o_3}^F,
\end{split}
\end{equation}
where $\overline{o}_k := (1 \oplus o_k)$. 
We proceed by composing this process with some quantum subroutines $G_k$, for $k \in \{1, 2, 3\}$, which adapt the qubit slots of $\widetilde{W}_\text{Lugano}$ for the oracle $\widetilde{O}_x$ which operates on a 7-dimensional input, and which can be used to compute the parity between different bits of $x$ in one quantum query. 
These subroutines, which correspond to 1-slot quantum supermaps whose additional outputs in the spaces $\H^{\alpha_k}$ are also sent to the global future, as well as their actions on the query oracle $\widetilde{O}_x$, are described in Figure~\ref{fig:Qsubroutine}.
\begin{figure}[t]
     \begin{center}         \includegraphics[width=1\textwidth]{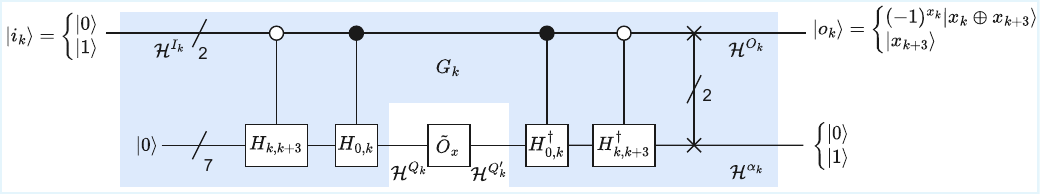}
 \end{center}
     \caption{Circuit diagram of the quantum subroutines $G_k$. Here, the Hilbert space $\H^{\alpha_k}$ has basis states $\ket{i}$ for $0 \leq i \leq 6$, and $H_{i,j}$ are unitaries satisfying $H_{i,j}\ket{0} = \frac{\ket{i} + \ket{j}}{\sqrt{2}}$ and $H_{i,j}\ket{1} = \frac{\ket{i} - \ket{j}}{\sqrt{2}}$ for $0 \leq i,j \leq 6$, $i \neq j$, and which can be completed arbitrarily on the other inputs.
	  The final operation is a swap operation $\textsc{swap}: \H^{O_k} \otimes \H^{\alpha_k} \to \H^{O_k} \otimes \H^{\alpha_k}$ such that for any $a,b\in\{0, 1\}$, $\textsc{swap} \ket{a}^{O_k} \ket{b}^{\alpha_k} = \ket{b}^{O_k} \ket{a}^{\alpha_k}$ and for any $c \in \{2, \dots, 6\}$, $\textsc{swap} \ket{a}^{O_k} \ket{c}^{\alpha_k} = \ket{a}^{O_k} \ket{c}^{\alpha_k}$.
    }
    \label{fig:Qsubroutine}
\end{figure}

The composition of $\widetilde{W}_\text{Lugano}$ with the $G_k$ in each slot defines a new three-slot quantum supermap with global future $\H^{F \alpha_1 \alpha_2 \alpha_3}$. 
Its process matrix is given by the link product as $\widetilde{W}_{f_{6q}} := \widetilde{W}_\text{Lugano} * G_1 * G_2 * G_3 \in \L(\H^{Q_1 Q_2 Q_3 Q'_1 Q'_2 Q'_3 F \alpha_1 \alpha_2 \alpha_3})$.
We then show that, for any value of $x$, when acting on three copies of the query oracle $\widetilde{O}_x$, we can obtain enough information to compute $f_{6q}$ by measuring the resulting state $\widetilde{W}_{f_{6q}} * (\widetilde{\mathsf{O}}_x \otimes \widetilde{\mathsf{O}}_x \otimes \widetilde{\mathsf{O}}_x) \in \L(\H^{F \alpha_1 \alpha_2 \alpha_3})$ in the computational basis.
\end{sproof}

Together, the propositions above show that the explicit 6-bit Boolean function $f_{6q}$ implies the following key result proving a separation in exact quantum query complexity obtained with a causally indefinite supermap. 
\begin{theorem}
  There exists a Boolean function $f$ for which $Q_E^{\text{Gen}}(f) < Q(f)$.
\end{theorem}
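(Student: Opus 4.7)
The plan is to take $f = f_{6q}$ as defined in Equation~\eqref{eq:f6q} and invoke the two propositions established immediately above. I would first apply Proposition~\ref{prop:Qsep} to obtain the upper bound $Q_E^{\textup{Gen}}(f_{6q}) \leq 3$, since that proposition exhibits the explicit three-slot causally indefinite supermap $\widetilde{W}_{f_{6q}}$ built from the Lugano process $\widetilde{W}_{\text{Lugano}}$ together with the parity-computing subroutines $G_k$ of Figure~\ref{fig:Qsubroutine}. I would then apply Proposition~\ref{prop:Seqcomp} to obtain the matching lower bound $Q_E(f_{6q}) \geq 4$. Concatenating these gives $Q_E^{\textup{Gen}}(f_{6q}) \leq 3 < 4 \leq Q_E(f_{6q})$, which is the desired strict separation.

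The conceptual point worth emphasising is why $f_{6q}$, rather than $f_{6c}$ itself, must serve as the witness. Although the Lugano-based three-query algorithm from Section~\ref{sec:exactSep} lifts trivially to a quantum supermap computing $f_{6c}$ (the process matrix $W_\text{Lugano}$ being diagonal in the computational basis), the semidefinite program of Appendix~\ref{sec:sdp} also yields $Q_E(f_{6c}) = 3$, so $f_{6c}$ fails to separate $Q_E$ from $Q_E^{\textup{Gen}}$. Substituting the parities $x_1 \oplus x_4$, $x_2 \oplus x_5$, $x_3 \oplus x_6$ for $x_1, x_2, x_3$ inflates the sequential cost by exactly one query while leaving the causally indefinite cost unchanged, since each parity can be extracted from a single quantum query via a Hadamard-like basis change, and these basis changes can be folded into the three existing oracle calls of the Lugano construction through the subroutines $G_k$.

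The step I expect to be the main obstacle, and which would require the most care in a fully rigorous argument, is establishing the lower bound $Q_E(f_{6q}) \geq 4$. Standard analytical tools such as the polynomial method yield only $\lceil \deg(f_{6q})/2 \rceil$, which is too weak, and adversary-type arguments are delicate because the sequential quantum algorithm is allowed to exploit superpositions of the six input bits. Proposition~\ref{prop:Seqcomp} therefore relies on numerically evaluating the Barnum--Saks--Szegedy semidefinite program at depth three and checking that the minimum error $\varepsilon_3^{\text{Seq}}(f_{6q})$ is strictly positive. To upgrade this from numerical evidence to a rigorous proof, I would extract a dual feasible certificate of strict positivity from the SDP solver and verify it in exact arithmetic; once this is in hand, the theorem follows immediately by the chain of inequalities above.
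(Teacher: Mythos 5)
Your proposal is correct and follows exactly the paper's route: the theorem is obtained by combining Proposition~\ref{prop:Qsep} (giving $Q_E^{\textup{Gen}}(f_{6q})=3$ via the Lugano-based supermap with the parity subroutines $G_k$) with Proposition~\ref{prop:Seqcomp} (giving $Q_E(f_{6q})=4$ via the Barnum--Saks--Szegedy SDP). Your remarks about the numerical nature of the lower bound and the possibility of extracting a dual certificate also match the paper's own caveats.
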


Given that the constant separation obtained in Section~\ref{sec:polyAdv} for the classical-deterministic setting could be amplified into an asymptotic polynomial separation, it is extremely natural to wonder whether a similar such separation can be obtained in the quantum case.
The recursive approach that was utilised in Section~\ref{sec:polyAdv} cannot, however, be directly applied to the constant quantum advantage obtained with $\widetilde{W}_{f_{6q}}$, despite this method previously being employed to exhibit quantum-over-classical query complexity advantages~\cite{ambainis2013superlinear}.
The main difficulty encountered is that $\widetilde{W}_{f_{6q}}$ does not describe a ``clean'' quantum supermap~\cite{aaronson2002quantum}, meaning that the output space of $\widetilde{W}_{f_{6q}} * \widetilde{\mathsf{O}}_x^{\otimes 3} \in \L(\H^{F \alpha_1 \alpha_2 \alpha_3})$ contains information about $f_{6q}(x)$, but also some additional information that cannot be uncomputed without additional queries to $\widetilde{O}_x$ -- which would negate the advantage obtained from causal indefiniteness.
While this is not a problem when computing $f_{6q}(x)$, it is an issue if one wants to coherently call this supermap recursively as a subroutine (see Appendix~\ref{appendix:ICOcomp} for further discussion).

\section{Discussion}
\label{sec:discussion}

In this paper we studied the potential of causally indefinite computations to provide advantages in query complexity, both in the classical-deterministic and quantum settings.
In the classical setting, we showed that classical-deterministic process functions can be used to obtain an asymptotic polynomial advantage in the query complexity of a Boolean function over causally ordered computations.
Although based on the well-studied Lugano process~\cite{baumeler2014maximal,BaumelerSpaceLogically2016}, our asymptotic separation required recursively composing together such processes to efficiently compute the function we constructed, an approach that has not previously been exploited when studying causally indefinite information processing.
Our results highlight that interest in causally indefinite computation is not restricted to the quantum setting where most advantages have been studied~\cite{baumeler2018computational}.
Here we focused exclusively on the deterministic setting, but it would be interesting to study also the potential advantages obtainable with more general (non-deterministic, stochastic) classical processes -- which are not simply described as the convex-hull of classical-deterministic processes~\cite{BaumelerSpaceLogically2016} -- in both the exact and bounded-error query complexity settings.

With the insights obtained from the study of the simpler classical-deterministic setting, we then proved a constant separation between the exact quantum query complexity $Q_E$ and its generalised version $Q_E^{\text{Gen}}$ introduced in~\cite{abbott2024quantumquerycomplexityboolean}, answering several questions that were left open therein. 
This reinforces the observation that causal indefiniteness can be a resource for computation even in more standard ``classical'' problems -- such as computing a Boolean function -- and not only in inherently quantum problems with no classical analogue, as most previous studies have investigated~\cite{araujo2014computational,taddei2021computational,kristjansson2024exponentialseparationquantumquery}.
Nevertheless, in contrast to the classical setting, we failed to amplify the observed separation with a recursive construction, and highlighted the need to uncompute the ``garbage'' ancillary qubits using additional queries~\cite{aaronson2002quantum} as a barrier to exploiting this technique.
We leave it as an open question to find an asymptotic separation in quantum query complexity in this setting.

Finally, we note that it would be interesting to study the query complexity for partial Boolean functions, i.e., with a promise on their inputs.
Indeed, even in standard quantum computational settings it is only possible to obtain super-polynomial quantum-over-classical advantages in query complexity for partial functions~\cite{barnum2003quantum}, as is the case, e.g., for the Deutsch-Jozsa problem~\cite{deutsch1992rapid,cleve1998quantum} or Simon's problem~\cite{SimonProblem}.
We showed that for total functions one can, at best, have a polynomial advantage from indefinite causal order in both classical-deterministic and quantum settings, but for partial functions larger advantages have not been ruled out.

\section*{Acknowledgments}

The authors acknowledge funding from the French National Research Agency through the Plan France 2030 projects ANR-22-CMAS-0001 and ANR-22-PETQ-0007, and the research grant ANR-22-CE47-0012. 
For the purpose of open access, the authors have applied a CCBY public copyright licence to any Author Accepted Manuscript (AAM) version arising from this submission.

\bibliographystyle{eptcs}
\bibliography{refs}

\appendix

\section{Classical-deterministic supermaps and generalised deterministic query complexity}

\subsection{Classical-deterministic processes}
\label{appendix:induced}

In this appendix, we give some more details regarding classical-deterministic processes, which were introduced in Section~\ref{sec:classical-det}. 
In particular, we provide detailed formal definitions of the notions of \emph{induced functions}, \emph{reduced process} and of the \emph{link product} between a classical-deterministic process and a set of functions. We start by recalling the fixed-point characterisation of process functions (which we concisely refer to as classical-deterministic processes).

\func*

A classical-deterministic process is uniquely defined by its so-called induced functions. 
These functions single out the input of a given operation inserted into a ``slot'' of the process.

\begin{definition}[Induced function \cite{baumeler2020equivalence}]
  Given a function $w: \mathcal{P} \times \bigtimes^T_{k=1} \mathcal{O}_k \to \bigtimes_{k=1}^T \mathcal{I}_k \times \F$, the \emph{induced functions} $w_k : \mathcal{P} \times \bigtimes^T_{k'=1} \mathcal{O}_{k'} \to  \mathcal{I}_k$, for $1 \leq k \leq T$, are defined such that for all $a \in \mathcal{P}$ and $\vec{o} = (o_1 \dots, o_T) \in \bigtimes_{k'=1}^T \O_{k'}$,
  \begin{equation}
    w_k(a, \vec{o}) = w(a, \vec{o})_{\lvert k} = (i_1, \dots, i_T, b)_{\lvert {k}} = i_k.
    \end{equation}
Similarly, for $\F$, we define $w_F : \mathcal{P} \times \bigtimes^T_{k'=1} \mathcal{O}_{k'} \to  \F$, such that for all $a \in \mathcal{P}$ and $\vec{o} = (o_1 \dots, o_T) \in \bigtimes_{k'=1}^T \O_{k'}$,
  \begin{equation}
     w_F(a, \vec{o}) = w(a, \vec{o})_{\lvert {\F}} = (i_1, \dots, i_T, b)_{\lvert {\F}} = b.
  \end{equation}
The induced functions $w_k$ are the restrictions of $w$ on the output spaces $\I_k$, while $w_F$ is its restriction on $\F$.
\end{definition}

Note that for $w$ to be logically consistent, it is necessary that for $1 \leq k \leq T$, the input $i_k$ received by the function $\mu_k$ is independent of its output $o_k$. 
If this is not the case, it opens the possibility for signalling ``back in time'' through $w$, from the output of $\mu_k$ to its input, which can lead to logical contradictions such as grandfather paradoxes~\cite{BaumelerSpaceLogically2016}. 
At the level of the induced functions, this means that for any $\vec{o}=(o_1, \dots, o_T) \in \bigtimes^T_{k'=1} \mathcal{O}_{k'}$, for any $a \in \mathcal{P}$ and any $1 \leq k \leq T$, $w_k(a, \vec{o})$ is independent of $o_k$, and we can thus unambiguously write $w_k(a,\vec{o}_{\backslash k})$, where $\vec{o}_{\backslash k} = (o_1, \dots, o_{k-1}, o_{k+1}, \dots, o_T)$.

The definition of classical-deterministic processes also implies that, by inserting an operation $\mu_k$ into the $k$th slot of a $T$-slot classical-deterministic process while leaving all the other slots empty, one obtains a $(T-1)$-slot classical-deterministic process $w^{| \mu_k}$, which we call a \emph{reduced process}. 
One can similarly define a reduced process $w^{|a}$ by fixing a value $a \in \mathcal{P}$. 
More formally, we have the following definition.

\begin{definition}[Reduced process \cite{baumeler2020equivalence}]
  Given a classical-deterministic process $w: \mathcal{P} \times \bigtimes^T_{k'=1} \mathcal{O}_{k'} \to \bigtimes_{k'=1}^T \mathcal{I}_{k'} \times \F$, for any $1 \leq k \leq T$ and any function $\mu_k: \mathcal{I}_k \to \mathcal{O}_k$, the \emph{reduced process} $w^{| \mu_k} : \mathcal{P} \times \bigtimes^T_{\substack{k'=1 \\ k' \neq k}} \mathcal{O}_{k'} \to \bigtimes_{\substack{k'=1 \\ k' \neq k}}^T \mathcal{I}_{k'} \times \F$ is the $(T-1)$-slot classical-deterministic process uniquely defined by the following induced functions, for all $k' \neq k$, for all $\vec{o}_{\backslash k} = (o_1, \dots, o_{k-1}, o_{k+1}, \dots, o_T) \in \bigtimes^T_{\substack{l=1 \\ l \neq k}} \mathcal{O}_{l}$ and for all $a \in \mathcal{P}$:
  \begin{align}
    w^{|\mu_k}_{k'}(a, \vec{o}_{\backslash k}) &= w_{k'}\left(a, \left[\vec{o}_{\backslash k}, \mu_{k}\big(w_k(a, \vec{o}_{\backslash k})\big)\right]\right) \text{,\quad and } \\
    w^{|\mu_k}_{F}(a, \vec{o}_{\backslash k}) &= w_F\left(a, \left[\vec{o}_{\backslash k}, \mu_k\big(w_k(a, \vec{o}_{\backslash k})\big)\right]\right),
  \end{align}
where we use the shorthand notation $[\vec{o}_{\backslash k}, \mu_k\big(w_k(a, \vec{o}_{\backslash k}) \big)] := \left(o_1, \dots, o_{k-1}, \mu_{k}\big(w_k(a, \vec{o}_{\backslash k})\big),o_{k+1},\dots,o_T\right)$.
  
  Similarly, one can define for any $a \in \mathcal{P}$ the reduced process $w^{|a}: \bigtimes^T_{k'=1} \mathcal{O}_{k'} \to \bigtimes_{k'=1}^T \mathcal{I}_{k'} \times \F$, defined by the following induced functions, for all $1 \leq k' \leq T$ and all $\vec{o} \in \bigtimes^T_{l=1} \mathcal{O}_{l}$:
  \begin{align}
    w_{k'}^{|a}(\vec{o}) &= w_{k'}(a, \vec{o}) \text{ and } \\
    w_F^{|a}(\vec{o}) &= w_F(a, \vec{o}).
    \end{align}    
\end{definition}

    We finish this appendix by defining a version of the so-called ``link product'' between a classical-deterministic process $w$ and some operations $\vec{\mu}$.\footnote{The link product was defined in \cite{chiribella2009theoretical} to describe the composition of quantum maps (cf.\ Appendix~\ref{appendix:FuncMat}). By extension, it can be used to describe the composition of different functions as well as that of stochastic channels. For the purpose of this paper, however, we only explicitly define the composition of classical-deterministic processes with a set of functions.} 
	This operation allows one to easily represent the resulting function from $\mathcal{P}$ to $\F$ obtained when $w$ acts on the operations $\vec{\mu}$.
  
\begin{definition}[Link product between a classical-deterministic process and a list of functions]
  \label{def:link}
    The \emph{link product}, denoted `$*$', between a classical-deterministic process $w: \mathcal{P} \times \bigtimes^T_{k=1} \mathcal{O}_k \to \bigtimes_{k=1}^T \mathcal{I}_k \times \F$ and a list of functions $\vec{\mu} = (\mu_1, \dots, \mu_T)$ is the function $w * \vec{\mu}: \mathcal{P} \to \F$ defined as, for any $a \in \mathcal{P}$ and writing $\vec{i}_a = (i_{a,1}, \dots, i_{a,T})$ the fixed-point of $w$ under the action of $a$ and the functions $\vec{\mu}=(\mu_1,\dots,\mu_T)$,
    \begin{equation}
    (w * \vec{\mu})(a) = w_F\big(\mu_1(i_{a,1}), \dots, \mu_T(i_{a,T})\big).
    \end{equation}    
  \end{definition}

\subsection{Causally definite classical-deterministic processes and decision trees}
\label{appendix:equiv}

In this appendix, we give a proof of Proposition~\ref{prop:equiv}, showing the equivalence of decision trees and causally definite classical-deterministic processes in the query model of computation.

\equiv*

\begin{proof}
We start by proving the necessary condition. 
Consider a binary decision tree of depth $T$ computing a Boolean function $f : \{0, 1\}^n \to \{0, 1\}$. 
Each internal node of the tree is labelled by an index $i \in \{1, \dots, n\}$ specifying a bit $x_i$ to query (i.e., the input $i$ for the oracle), and its leaves are labelled by binary values specifying the output of the computation. 
Note that we can assume, without loss of generality, that the tree is complete.
Every path from the root to a leaf of the tree can then be encoded as a binary string $(o_1, \dots, o_T) \in \{0, 1\}^T$ specifying which branch choice (i.e., left or right) to make at each internal node.
The decision tree is then in one-to-one correspondence with the functions $N_l: \{0, 1\}^{l-1} \to \{1, \dots, n\}$ for $1 \leq l \leq T+1$ which specify the label of each node at depth $l-1$ reached by following the path $(o_1,\dots,o_{l-1})$.
Note that $N_1(\lambda)$, where $\lambda$ is the empty string, is a constant (i.e., a function with no input) labelling the root (i.e., the node at depth 0), while $N_{T+1}(o_1,\dots,o_T)$ specifies the label of the leaf reached at the end of the corresponding path.  
Defining recursively the inputs for the oracle specified by the decision tree as $j_1:= N_1(\lambda)$ and $j_l:=N_l(x_{j_1},\dots,x_{j_{l-1}})$ for $l=2,\dots,T$, the output of the decision tree is thus $N_{T+1}(x_{j_1},\dots,x_{j_T})$.

The functions $N_l$ can then be used to define a classical-deterministic process $w_\text{Tree}: \{0, 1\}^T \to \{1, \dots, n\}^T \times \{0, 1\}$ via the induced functions $w_k(o_1, \dots, o_T) := N_k(o_1, \dots, o_{k-1})$, for all $1 \leq k \leq T$, and $w_F(o_1, \dots, o_T) := N_{T+1}(o_1, \dots, o_T)$. 
The logical consistency of this process follows from the structure of the decision tree, as for any set of operations $\vec{\mu} = (\mu_1, \dots, \mu_T)$, the vector $\vec{i} = (i_1, \dots, i_T)$ with $i_1 := N_1(\lambda)$ and, for $1 < k \leq T$, $i_k := N_k\big(\mu_1(i_1), \dots, \mu_{k-1}(i_{k-1})\big)$ will be the unique fixed-point of $w_{\text{Tree}}$ under $\vec{\mu}$. 
We thus see that, when $\mu_k=O_x$ for $k=1,\dots,T$ so that $\mu_k(i_k)=x_{i_k}$, we have precisely $w_F(\vec\mu(\vec{i}))=N_{T+1}(x_{i_1},\dots,x_{i_T})$ and hence $w$ computes $f$.

For the sufficient condition, note that while a decision tree is always evaluated in a fixed sequential order, a causally definite classical-deterministic process can have dynamical order, making the mapping more difficult.
Consider such process $w : \{0, 1\}^T \to \{1, \dots, n\}^T \times \{0, 1\}$ with $T$ slots that computes $f$.
It follows from Definition~\ref{def:causallydef} that there exists a $k_1 \in [T]$ such that the induced function $w_{k_1}$ outputs a constant $i_{k_1}$ and, for any operation $\mu_{k_1}$, the reduced function $w^{|\mu_{k_1}}$ is causally definite. 
We can thus define the function $N_1(\lambda) := w_{k_1}(\vec{o}') = i_{k_1}$ specifying the root of the decision tree, where $\vec{o}' \in \{0, 1\}^T$ can be taken to be an arbitrary string since $w_{k_1}$ is constant. 
Note that the reduced process $w^{|\mu_{k_1}}$ depends only on the value $\mu_{k_1}(i_{k_1})\in\{0,1\}$, so it is sufficient to consider the reduced processes $w^{|\mu^{[o_{1}]}_{k_1}}$ corresponding to the constant functions $\mu^{[o_{1}]}_{k_1}(\cdot)=o_{1}$ for $o_{1}\in\{0,1\}$.

Following the recursive nature of Definition~\ref{def:causallydef}, let us then introduce the recursively defined notation $w^{|\mu^{[o_1\dots o_{l}]}_{k_1\dots k_{l}}} := (w^{|\mu^{[o_1\dots o_{l-1}]}_{k_1\dots k_{l-1}}})^{|\mu^{[o_l]}_{k_l}}$ for $l>1$, where the $k_l$ are those such that each $(w^{|\mu^{[o_1\dots o_{l-1}]}_{k_1\dots k_{l-1}}})_{k_l}$ is constant.
We then define, for $1 < l \leq T$ and arbitrary $\vec{o}'\in\{0,1\}^{T-(l-1)}$, the functions

  \begin{equation}
    N_l(o_1, \dots o_{l-1}) := (w^{|\mu^{[o_1\dots o_{l-1}]}_{k_1\dots k_{l-1}}})_{k_l}(\vec{o}'),
  \end{equation}
  and
  \begin{equation}
    N_{T+1}(o_{1}, \dots, o_{T}) = (w^{|\mu^{[o_1\dots o_{T-1}]}_{k_1\dots k_{T-1}}})_F(o_{T}),
  \end{equation}
  where the reduced functions $(w^{|\mu^{[o_1\dots o_{l-1}]}_{k_1\dots k_{l-1}}})_{k_l}$ do not depend on $\vec{o}'$ since they are constant.
  
  To see that the decision tree defined by the functions $N_l$ for $1\le l \le T+1$ indeed computes $f$, consider the unique fixed-point $\vec{i}=(i_1,\dots,i_T)$ of $w$ acting on $T$ copies of $O_x$ and let $k_1,\dots,k_T$ be the order in which these operations are applied (which, except $k_1$, will in general depend on $x$).
  The output of the process is then 
  \begin{equation}
  	w_F(x_{i_1},\dots,x_{i_T}) = (w^{|\mu^{[x_{i_{k_1}}\dots x_{i_{k_{T-1}}}]}_{k_1\dots k_{T-1}}})_F (x_{i_{k_T}}) = N_{T+1}(x_{i_{k_1}},\dots,x_{i_{k_T}}),
  \end{equation}
  which is precisely the output of the decision tree thus defined.
    \end{proof}

\subsection{Degree and certificate lower bounds}
\label{appendix:deglower}

In this appendix, we prove the extended degree and certificate lower bounds for the generalised deterministic query complexity. 

\degBound*
\begin{proof}
  Let $f: \{0, 1\}^n \to \{0, 1\}$ be a Boolean function with $D^\text{Gen}(f) = T$. 
  Then, by definition, there exists a $T$-slot classical-deterministic process  $w:  \{0, 1\}^T \to \{1, \dots, n\}^T  \times \{0, 1\}$ such that, for all $x \in \{0, 1\}^n$, there exists a unique $\vec{i}_x$ satisfying 
\begin{equation}
    w_F\big(O_x(i_{x,1}), \dots, O_x(i_{x,T})\big) = f(x).
\end{equation}
We write the set of fixed-points for the strings $x$ that evaluate to $1$ as $\mathnormal{I}^{(1)}$;
note that different strings can have the same fixed-point, e.g., if the value of $f$ is independent of some bits in $x$.
Then, for each fixed-point $\vec{i}_x \in \mathnormal{I}^{(1)}$ we define a monomial of degree $T$ as

\begin{equation}
  R_{\vec{i}_x}(y) = \prod_{\mathclap{\substack{k=1 \\ O_x(i_{x,k}) = 1}}}^T y_{i_{x,k}} \quad \prod_{\mathclap{\substack{k=1 \\ O_x(i_{x,k}) = 0}}}^T (1 \oplus y_{i_{x,k}}).
\end{equation}
These monomials have the property that, for any bitstring $y\in\{0,1\}^T$, $R_{\vec{i}_x}(y) = 1$ implies that $\vec{i}_x = \vec{i}_y$ and $f(x) = f(y) = 1$. 
This means that for all $y$ there exist a unique monomial $R_{\vec{i}_x}$ such that $R_{\vec{i}_x}(y) = 1$ if $f(y) = 1$ and none if $f(y) = 0$. 
This implies that the polynomial 
\begin{equation}
  g(y) = \sum_{\vec{i}_{x} \in \mathnormal{I}^{[1]}} R_{\vec{i}_x}(y),
\end{equation}
of degree $T$ represents $f$, and therefore that $\deg(f) \leq D^\textup{Gen}(f)$.
\end{proof}

The second result generalises the certificate complexity bound to the generalised deterministic query complexity.

\certBound*
\begin{proof}
Let $f: \{0, 1\}^n \to \{0, 1\}$ a Boolean function with $D^\text{Gen}(f) = T$. 
Then, by definition, there exists a classical-deterministic process $w:  \{0, 1\}^T \to \{1, \dots, n\}^T  \times \{0, 1\}$ such that, for all $x$, there exists a unique $\vec{i}_x$ satisfying 
\begin{equation}
    w_F\big(O_x(i_{x,1}), \dots, O_x(i_{x,T})\big) = f(x).
\end{equation}
This implies that the set $\mathnormal{I}_x = \{ i_{x,1}, \dots, i_{x,T} \}$ is a certificate for $x$. 
Indeed, suppose we have a $y$ such that $y_{|\mathnormal{I}_x} = x_{|\mathnormal{I}_x}$, then $\vec{i}_x$ will also be a fixed-point under the queries to $O_y$, i.e.,
\begin{equation}
  w_F\big(O_x(i_{x,1}), \dots, O_x(i_{x,T})\big) = w_F\big(O_y(i_{x,1}), \dots, O_y(i_{x,T})\big) = f(x).
\end{equation}
By the unicity of fixed-points of classical-deterministic processes, we have $\vec{i}_y = \vec{i}_x$ and $f(x) = f(y)$. Therefore, for all $x$, $C(f,x) \leq |\mathnormal{I}_x| =T$ and hence $C(f) \leq T$.
\end{proof}

\subsection{Composition of the generalised deterministic query complexity}
\label{appendix:compo}

In this appendix, we prove the recursive composition theorem for the generalised deterministic query complexity of Boolean functions.

\compo*
\begin{proof}
 To prove this theorem, it is sufficient to show that from a classical-deterministic process computing $f$ in $T$ queries, one can build a classical-deterministic process that computes $f^{(l)}$ in $T^l$ queries. We will construct this process by induction.

Let $w : \bigtimes_k^{T} \O_k \to \bigtimes_k^{T} \I_k \times \F$ be a classical-deterministic process computing $f$, (which thus necessarily has $\O_k=F=\{0,1\}$ and $\I_k=\{1,\dots,n\}$) and $w^{(l)} : \bigtimes_k^{T^l} \O_k \to \bigtimes_k^{T^l} \I^{(l)}_k \times \F$ a classical-deterministic process computing $f^{(l)}$, where $\I^{(l)}_k = \{1, \dots, n^l\}$.
  
We begin by constructing a classical-deterministic process  $\tilde{w}^{(l)}$ that can compute $f^{(l)}$ on the $n$ consecutive sequences of length $n^l$ of a bitstring $x \in \{0, 1\}^{n^{l+1}}$. 
The construction is based on $w^{(l)}$ which is extended with a global past $\mathcal{P}$ in which it receives as input an offset $a \in \mathcal{P} = \{1, \dots, n\}$. 
It is this offset that determines on which sequence of $n^l$ bits the function $f^{(l)}$ is computed. 
Concretely the process is defined as $\tilde{w}^{(l)}: \mathcal{P} \times  \bigtimes_{k=1}^{T^l} \O_k \to \bigtimes_{k=1}^{T^l} \I^{(l+1)}_k \times \F$ through its induced functions
\begin{equation}
  \label{eq:tildew}
  \tilde{w}^{(l)}_k(a, o_1, \dots, o_{T^l}) = (a - 1)n^{l} + w^{(l)}_k(o_1, \dots, o_{T^l}),
\end{equation}
and
\begin{equation}
  \tilde{w}^{(l)}_F(a, o_1, \dots, o_{T^l}) = w_F(o_1, \dots, o_{T^l}).
\end{equation}

We now show that $\tilde{w}^{(l)}$ is logically consistent, and hence a valid classical-deterministic process. 
For any $a \in \mathcal{P}$ and for any set of functions $\{\tilde{\mu}_k\}_{k=1}^{T^l}$ with $\tilde{\mu}_k : \I_k^{(l+1)} \to \O_k$, we define the functions $\{\mu_{a, k}\}_{k=1}^{T^l}$ with $\mu_{a, k} : \I_k^{(l)} \to \O_k$ such that $\mu_{a,k}(i) = \tilde{\mu}_k\big((a-1)n^l + i \big)$. 
Now, denote $\vec{i} = (i_1, \dots, i_{T^l})$ the unique fixed-point of $w^{(l)}$ under the functions $\{\mu_{a, k}\}_{k=1}^{T^l}$, then the vector $\vec{j} = (j_1, \dots, j_{T^l})$ with $j_k = (a-1)n^l + i_k$ will be a fixed-point of $\tilde{w}^{(l)}$ under $a$ and the functions $\{\tilde{\mu}_k\}_{k=1}^{T^l}$:
\begin{align}
    \tilde{w}_k^{(l)}(a, j_1, \dots, j_{T^l}) &= (a-1)n^l + w_k^{(l)} \big(\tilde{\mu}_1(j_1), \dots, \tilde{\mu}_{T^{l}}(j_{T^l}) \big), \\
    &= (a-1)n^l + w_k^{(l)} \big(\mu_{a, 1}(i_1), \dots, \mu_{a, T^l}(i_{T^l}) \big), \\
    &= (a-1)n^l + i_k, \\
    &= j_k.
\end{align}
For the unicity of the fixed-point, suppose that $\vec{j'} = (j'_1, \dots, j'_{T^l})$ is a fixed-point of $\tilde{w}^{(l)}$ under $a \in \mathcal{P}$ and the functions $\{\tilde{\mu}_k\}_{k=1}^{T^l}$, then there exists an $\vec{i'} = (i'_1, \dots, i'_{T^l})$ such that 
\begin{align}
    \tilde{w}_k^{(l)}(a, j'_1, \dots, j'_{T^l}) &= (a-1)n^l + w_k^{(l)} \big(\tilde{\mu}_1(j'_1), \dots, \tilde{\mu}_{T^{l}}(j'_{T^l}) \big), \\
    &= (a-1)n^l + i'_k, \\
    &= j'_k.
\end{align}
Thus, $\vec{i'} = (i'_1, \dots, i'_{T^l})$ will be a fixed-point of $w^{(l)}$ under the functions  $\{\mu_{a, k}\}_{k=1}^{T^l}$, and because $w^{(l)}$ is logically consistent we have, by definition, $\vec{i'} = \vec{i}$, implying $\vec{j'} = \vec{j}$.

Furthermore, notice that for any $x \in \{0, 1\}^{n^{(l+1)}}$, with $T^l$ copies of the oracle $O_x$ and for any $a \in \mathcal{P}$
\begin{equation}
  \label{eq:shift}
    \tilde{w}^{(l)}_F\big(a, O_x(j_1), \dots O_x(j_{T^l})\big) = f^{(l)}(x_{(a-1)n^l + 1}, \dots, x_{an^l}),
\end{equation}
meaning that $\tilde{w}^{(l)}$ can compute $f$ on any of the $n$ consecutive sequence of $n^l$ bits of $x \in \{0, 1\}^{n^{l+1}}$.

The last step is to build a classical-deterministic process $w^{(l+1)}$ from $\tilde{w}^{(l)}$ that computes $f^{(l+1)}$ in $T^{l+1}$ queries. 
Consider a list of $T^{l+1}$ operations $\vec{\mu} = (\vec{\mu}_1, \dots, \vec{\mu}_T)$, where each $\vec{\mu}_k = (\vec{\mu}_{k,1}, \dots, \vec{\mu}_{k,T^l})$ and notice that for all $1 \leq k \leq T$, the function $\tilde{w}^{(l)} * \vec{\mu}_k$ is a function from $\mathcal{P}$ to $\F$, which are isomorphic, respectively, to the $\I_k$'s and $\O_k$'s. 
Let us then consider the value
\begin{equation}
  w * (\tilde{w}^{(l)} * \vec{\mu}_1, \dots, \tilde{w}^{(l)} * \vec{\mu}_T) \in \F.
\end{equation}
As such, the function $w^{(l+1)}: \bigtimes_{k}^{T^{l+1}} \O_k \to \bigtimes_{k}^{T^{l+1}} \O_k \times \F$ is defined for any $T^{l+1}$ operations $\vec{\mu}$ as
\begin{equation}
  w^{(l+1)} * \vec{\mu} = w * \big(\tilde{w}^{(l)} * \vec{\mu}_1, \dots, \tilde{w}^{(l)} * \vec{\mu}_T\big),
\end{equation}
and is a classical-deterministic process whose logical consistency follows from that of $w$ and $\tilde{w}^{(l)}$.

It follows from Eq.~\eqref{eq:shift} and because $w$ computes $f$ that, for any $x \in \{0, 1\}^{l+1}$ and by writing $\vec{O}_x = (\underbrace{O_x, \dots, O_x}_{T^l})$, we have
\begin{align}
  w * (\underbrace{\tilde{w}^{(l)} * \vec{O}_x, \dots, \tilde{w}^{(l)} * \vec{O}_x}_\textrm{T}) &= f\big(f^{(l)}(x_1, \dots, x_{n^{l}}), \dots, f^{(l)}(x_{(n-1)n^{l} + 1}\dots x_{n^l})\big) \\
    &= f^{(l+1)}(x),
\end{align}
thus completing the proof.
\end{proof}

\section{Properties of the Boolean function $f_{6c}$}
\label{appendix:f6c}
In this appendix, we prove different properties of the Boolean function $f_{6c}$, regarding its certificate complexity and its standard and generalised deterministic query complexity. In particular, we show that causal indefiniteness reduces the number of queries that are necessary to compute $f_{6c}$. The 6-bit Boolean function is represented by the following degree three polynomial.
\begin{align}
  f_{6c}(x_1, \dots, x_6) &= x_4(1 - x_2)x_3 + x_5(1 - x_3)x_1 + x_6(1 - x_1)x_2 + x_1x_2x_3, \label{eq:polyrep2a}\\
  &= x_4(1 \oplus x_2)x_3 \oplus x_5(1 \oplus x_3)x_1 \oplus x_6(1 \oplus x_1)x_2 \oplus x_1x_2x_3, \label{eq:polyrep2}
\end{align}
where $\oplus$ denotes addition modulo 2.
These two ways of writing the polynomial representing $f_{6c}$ are equivalent, and the multilinear polynomial representation~\eqref{eq:polyrep2a} is unique, so $\deg(f_{6c}) = 3$.
The truth table for $f_{6c}$ is also shown in Table~\ref{table:classicalf}.

\begin{table}[H]
      \caption{Simplified truth table of the Boolean function $f_{6c}$.}
    \label{table:classicalf}
    \centering
    \begin{tabular}{lll|c}
        $x_1$ & $x_2$ & $x_3$ & $f_{6c}(x)$ \\ \hline
        0 & 0 & 0 & 0 \\ 
        1 & 0 & 0 & $x_5$ \\ 
        0 & 1 & 0 & $x_6$ \\ 
        0 & 0 & 1 & $x_4$ \\ 
        1 & 1 & 0 & $x_5$ \\ 
        1 & 0 & 1 & $x_4$ \\ 
        0 & 1 & 1 & $x_6$ \\ 
        1 & 1 & 1 & 1 
    \end{tabular}
\end{table}

We start by proving the following result on the certificate complexity of $f_{6c}$.

\begin{proposition}
  \label{prop:deg&C} $C(f_{6c}) = 3.$
\end{proposition}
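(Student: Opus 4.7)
The plan is to prove the two inequalities $C(f_{6c}) \leq 3$ and $C(f_{6c}) \geq 3$ separately.

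For the upper bound, I would exhibit a 3-certificate for every input $x \in \{0,1\}^6$. The polynomial~\eqref{eq:polyrep2} is invariant under the cyclic shift $(x_1,x_2,x_3) \mapsto (x_2,x_3,x_1)$ together with $(x_4,x_5,x_6) \mapsto (x_5,x_6,x_4)$, which (when combined with the rows of Table~\ref{table:classicalf}) suggests a small family of candidate certificates: $\{1,2,3\}$ when $x_1=x_2=x_3$ (which forces $f$ to a constant, either $0$ or $1$), $\{2,3,4\}$ when $x_2=0$ and $x_3=1$ (which reduces $f$ to $x_4$), $\{1,3,5\}$ when $x_1=1$ and $x_3=0$ (which reduces $f$ to $x_5$), and $\{1,2,6\}$ when $x_1=0$ and $x_2=1$ (which reduces $f$ to $x_6$). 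Each reduction is verified by substituting the fixed values directly into~\eqref{eq:polyrep2} and checking that the remaining polynomial is independent of the unfixed bits. A quick check against Table~\ref{table:classicalf} confirms that the four cases together cover all eight rows.

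For the lower bound, I would use the all-ones input $x = (1,1,1,1,1,1)$, for which $f_{6c}(x) = 1$, and show that no 2-element subset $I \subseteq \{1,\dots,6\}$ is a certificate by producing, for each such $I$, some $y$ with $y_{|I} = x_{|I}$ and $f_{6c}(y) = 0$. Splitting according to $|I \cap \{1,2,3\}| \in \{0,1,2\}$ and using the cyclic symmetry above reduces this to a few representative subcases: if $I \subseteq \{4,5,6\}$, then setting $y_1=y_2=y_3=0$ and $y_j=1$ otherwise drops to the row $(0,0,0) \mapsto 0$ of the truth table; if $|I \cap \{1,2,3\}| = 2$, say $I=\{1,2\}$, then taking $y_3 = y_5 = 0$ together with $y_1=y_2=1$ lands on the $(1,1,0)\mapsto x_5=0$ row; and in the mixed case $|I\cap\{1,2,3\}|=1$ one uses the freedom on the two remaining bits of $\{x_1,x_2,x_3\}$ to reach a zero row of Table~\ref{table:classicalf} (for instance, for $I=\{1,5\}$, the counterexample $y = (1,0,1,0,1,0)$ lands on the $(1,0,1)\mapsto x_4=0$ row).

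The main obstacle I foresee is handling the mixed case of the lower bound cleanly: a naive choice of values for the unconstrained bits (e.g., setting them all to zero) can inadvertently yield $f(y) = 1$, as already happens for $I = \{1,5\}$ with $y=(1,0,0,0,1,0)$. The workaround is to exploit the fact that two of the three bits in $\{x_1,x_2,x_3\}$ remain free, giving enough flexibility to steer $(y_1,y_2,y_3)$ into a row of the truth table compatible with $f(y)=0$; together with the cyclic symmetry of $f_{6c}$, this keeps the case analysis short.
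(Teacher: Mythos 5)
Your proof is correct and follows essentially the same route as the paper: the upper bound uses exactly the same four certificate families ($\{1,2,3\}$, $\{2,3,4\}$, $\{1,3,5\}$, $\{1,2,6\}$), which do cover all eight rows of the truth table. Your lower bound via the all-ones input is in fact more explicit than the paper's one-line assertion that each exhibited certificate cannot be shrunk (which, strictly speaking, does not by itself exclude some other $2$-bit certificate), and your case analysis --- including the three mixed-case orbits $\{1,4\},\{1,5\},\{1,6\}$ under the cyclic symmetry --- checks out.
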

\begin{proof}
  The proof that $C(f_{6c}) = 3$ comes from observing Eq.~\eqref{eq:polyrep2} and verifying that, for any value of $x \in \{0, 1\}^6$, the value of $f_{6c}(x)$ depends only on three bits of the input.
  
  Indeed, when $x_1, x_2$ and $x_3$ are all equal to zero or to one, the value of $f_{6c}$ is independent of $x_4, x_5$ and $x_6$, meaning that $\{1, 2, 3\}$ is a certificate of size three. 
  When $x_2 = 0$ and $x_3 = 1$, the value of $f_{6c}$ is independent of $x_1, x_5$ and $x_6$, and $\{2, 3, 4\}$ is a size three certificate. 
  When $x_3 = 0$ and $x_1 = 1$,  $\{1, 3, 5\}$ is a certificate. 
  Lastly, when $x_1 = 0$ and $x_2 = 1,\{1, 2, 6\}$ is a certificate. 
  In each case, it is clear that one can't give a smaller certificate, as the output depends on each bit in the certificate.
  For every $x \in \{0, 1\}^6$ we thus have $C(f_{6c},x)=3$, and hence $C(f_{6c})=3$.
\end{proof}

We also prove the following result regarding the deterministic query complexity of $f_{6c}$.

\begin{proposition}
  \label{prop:Dg} $D(f_{6c}) = 4$.
\end{proposition}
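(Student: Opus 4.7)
My plan is to prove the two inequalities $D(f_{6c}) \leq 4$ and $D(f_{6c}) \geq 4$ separately. The upper bound is straightforward: I will exhibit an explicit decision tree that computes $f_{6c}$ in at most four queries. Following the structure of the truth table (Table~\ref{table:classicalf}), the tree first queries $x_1$, $x_2$, $x_3$ in sequence; the eight resulting leaves either give $f_{6c}(x)$ directly (when $(x_1,x_2,x_3) \in \{(0,0,0),(1,1,1)\}$) or reduce the value of $f_{6c}$ to a single bit among $\{x_4, x_5, x_6\}$, which can be read off with one more query. This produces a decision tree of depth $4$.

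For the lower bound, my plan is to carry out a case analysis on the first query, exploiting the cyclic symmetry of $f_{6c}$. As noted already in the discussion of Eq.~\eqref{eq:polyrep}, the function is invariant under the simultaneous cyclic permutations $1 \to 2 \to 3 \to 1$ and $4 \to 5 \to 6 \to 4$ of its input bits, so without loss of generality the first query is either $x_1$ or $x_4$. In each case, I will compute the two restricted functions $f_{6c}|_{x_j = 0}$ and $f_{6c}|_{x_j = 1}$ using Eq.~\eqref{eq:polyrep}, and show that at least one of them requires at least $3$ more queries, meaning that no depth-$3$ decision tree for $f_{6c}$ can start with that query.

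To show each restricted function requires depth $\geq 3$, I will use the certificate lower bound $C(g) \le D(g)$ by exhibiting an input $x$ on which every size-2 subset of indices fails to be a certificate. For instance, when the first query is $x_1 = 0$, one has $f_{6c}|_{x_1=0}(x_2,x_3,x_4,x_5,x_6) = x_4 (1\oplus x_2) x_3 \oplus x_2 x_6$, and I will check that the input $(x_2,x_3,x_4,x_6) = (0,1,0,0)$ admits no size-2 certificate (one just eliminates the six pairs case by case). The case $x_1 = 1$ is analogous using $f_{6c}|_{x_1=1}$, and the two cases for the first query being $x_4$ are handled identically using the 5-variable restricted functions obtained from Eq.~\eqref{eq:polyrep}.

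The only mildly tedious step is the certificate verification in the four subcases, but each is a short enumeration over at most six index pairs, and the cyclic symmetry means only two of the four restricted functions actually need to be analysed in detail. Combining the two bounds yields $D(f_{6c}) = 4$.
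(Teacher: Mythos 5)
Your upper bound and the overall shape of your lower bound (case analysis on the first query, using the cyclic invariance of $f_{6c}$ under the simultaneous cycles $(1\,2\,3)$ and $(4\,5\,6)$ to reduce to first queries $x_1$ or $x_4$, then lower-bounding the depth of the two restrictions) are sound, and the certificate-complexity route for the restrictions is a legitimate, somewhat more systematic alternative to the paper's argument, which instead excludes $x_4,x_5,x_6$ as first queries via the surviving monomial $x_1x_2x_3$ and handles the $x_1$ case by examining all possible second queries directly.

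However, the one witness you actually write down is wrong, so the verification you propose to carry out would fail. For $g:=f_{6c}|_{x_1=0}=x_4(1\oplus x_2)x_3\oplus x_2x_6$, the input $(x_2,x_3,x_4,x_6)=(0,1,0,0)$ has $g=0$ and \emph{does} admit size-2 certificates: fixing $x_4=x_6=0$ kills both monomials (and fixing $x_2=x_4=0$ does too), so $\{4,6\}$ and $\{2,4\}$ are both certificates and $C(g,x)\le 2$ at that input. The repair is easy: take instead $(x_2,x_3,x_4)=(0,1,1)$ with $x_6$ arbitrary, where $g=1$; one checks that each of the six pairs leaves $g$ non-constant (e.g., fixing $\{2,3\}$ leaves $g=x_4$, fixing $\{3,4\}$ leaves $g=(1\oplus x_2)\oplus x_2x_6$, and so on), so $C(g)\ge 3$, hence $D(g)\ge 3$ and $D(f_{6c})\ge 1+3=4$ for trees rooted at $x_1$. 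The analogous witnesses in the remaining subcases must be chosen with the same care (for instance $(x_1,x_2,x_3)=(1,1,1)$ works for $f_{6c}|_{x_4=0}$, and $(x_2,x_3,x_4)=(0,1,0)$ works for $f_{6c}|_{x_1=1}$); with corrected witnesses the argument does go through.
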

\begin{proof}
Let us first note that one can indeed compute $f_{6c}$ with 4 queries.
	Suppose one first queries $x_1$, $x_2$ and $x_3$.
	If one obtains $x_1=x_2=x_3$, then the polynomial representation~\eqref{eq:polyrep2} reduces to $f_{6c}(x)=x_1x_2x_3$ and we are done.
	Otherwise, a single further query to either $x_4$, $x_5$ or $x_6$ is necessary, depending on whether $(x_2,x_3)=(0,1)$, $(x_1,x_3)=(1,0)$ or $(x_1,x_2)=(0,1)$, respectively.
	
	To see that one cannot compute $f_{6c}$ sequentially with 3 queries, let us first suppose that the first query is to $x_4$, $x_5$ or $x_6$.
	But then if $x_1=x_2=x_3$ three more queries will be needed to evaluate the monomial $x_1x_2x_3$, so one of these bits must be queried first if one is to use only three queries.
	Imagine then, without loss of generality, that the first query is to $x_1$.
	If $x_1=0$, then Eq.~\eqref{eq:polyrep2} reduces to $f_{6c}(x)=x_4(1\oplus x_2)x_3 \oplus x_6x_2$, and we see that, whichever of $x_2,x_3,x_4,x_6$ is chosen to be queried second, at least two further queries are needed in the worst case, giving a total of four queries, and hence no sequential adaptive algorithm can compute $f_{6c}$ in three queries for all $x$.
\end{proof}

These results show that $f_{6c}$ satisfies the two conditions identified in Section~\ref{sec:lower} that make a function a candidate for exhibiting a query complexity advantage from causal indefiniteness, namely that $\deg(f) < D(f)$ and $C(f) < D(f)$.
We now prove Proposition~\ref{prop:fICO}, showing that such a query complexity separation is obtained for $f_{6c}$.
\fICO*
\begin{proof}
To compute the Boolean function $f_{6c}$ in three queries, we take as a starting point the Lugano process $w_\text{Lugano}: \{0, 1\}^3 \to \{0, 1\}^3$ defined by its induced functions as in Eq.~\eqref{eq:lugano}.
In order to use the Lugano process to query several copies of the query oracle $O_x : \{1, \dots, 6\} \to \{0, 1\}$, we slightly modify the process so that the slots receive 6-dimensional inputs and we add a non-trivial future space $\F = \{0, 1\}$, giving a function $\overline{w}_\text{Lugano} : \{0, 1\}^3 \to \{1, \dots, 6\}^3 \times \{0,1\}$.
$\overline{w}_\text{Lugano}$ is defined by its induced functions which are, for all $(o_1, o_2, o_3) \in \{0, 1\}^3$ and any $1 \leq k \leq 3$,
\begin{equation}
  \overline{w}_k(o_1, o_2, o_3) = k + 3 \cdot w_k(o_1, o_2, o_3),
\end{equation}
where $w_k$ are the induced functions of the standard Lugano process as defined in Eq.~\eqref{eq:lugano}, and
\begin{equation}
  \label{eq:outputAppendix}
  \overline{w}_F(o_1, o_2, o_3) = o_1(1 \oplus o_2)o_3  \oplus o_2(1 \oplus o_3)o_1  \oplus  o_3(1 \oplus o_1)o_2 \oplus    o_1o_2o_3.
\end{equation}
Before showing that $\overline{w}_{\text{Lugano}}$ computes $f_{6c}$ in three queries, we first prove that it is logically consistent.

Consider three arbitrary functions $(\overline{\mu}_k)_{k\in \{1, 2, 3\}}$, with $\overline{\mu}_k : \{1, \dots, 6\} \to \{0, 1\}$, and define the three functions $(\mu_k)_{k \in \{1, 2, 3\}}$, with $\mu_k : \{0, 1\} \to \{0, 1\}$, such that $\mu_k(i) = \overline{\mu}_k(k + 3i)$. 
Because $w_\text{Lugano}$ is a classical-deterministic process, it has a unique fixed-point $(i_1, i_2, i_3) \in \{0, 1\}^3$ under the action of the $\mu_k$'s. 
Then $(j_1, j_2, j_3) \in \{1, 6\}^3$ with $j_k = k + 3i_k$ will be a fixed-point of $\overline{w}_{\text{Lugano}}$ under the actions of the $\overline{\mu}_k$'s.
Indeed,
\begin{align}
    \overline{w}_k\big(\overline{\mu}_1(j_1), \overline{\mu}_2(j_2), \overline{\mu}_3(j_3)\big) &= k + 3 \cdot w_k \big(\overline{\mu}_1(j_1), \overline{\mu}_2(j_2), \overline{\mu}_3(j_3) \big), \\
    &= k + 3 \cdot w_k \big(\mu_1(i_1), \mu_2(i_2), \mu_3(i_3) \big), \\
    &= k + 3 \cdot i_k, \\
    &= j_k.
\end{align}
To prove the unicity, suppose that $(j'_1, j'_2, j'_3) \in \{0, 1\}^6$ is another fixed-point of $\overline{w}_{\text{Lugano}}$ under the functions $(\overline{\mu}_k)_{k \in \{1, 2, 3\}}$. 
Then there exist $(i'_1, i'_2, i'_3) \in \{0, 1\}^3$ such that
\begin{align}
    \overline{w}_k\big(\overline{\mu}_1(j'_1), \overline{\mu}_2(j'_2), \overline{\mu}_3(j'_3)\big) &= k + 3 \cdot w_k \big(\overline{\mu}_1(j'_1), \overline{\mu}_2(j'_2), \overline{\mu}_3(j'_3) \big), \\
    &= k + 3 \cdot i'_k, \\
    &= j'_k.
\end{align}
This shows that $(i'_1, i'_2, i'_3)$ must also be a fixed-point of $w_{\text{Lugano}}$ under the functions $\{\mu_k\}_{k \in \{1, 2, 3\}}$, but because the Lugano is logically consistent this fixed-point is unique, and we have for $1 \leq k \leq 3$,  $i'_k = i_k$ and therefore $j'_k = j_k$.

Finally, to show that $\overline{w}_\text{Lugano}$ computes $f_{6c}$, we verify that for any $x \in \{0, 1\}^6$ and writing $\vec{O}_x = (\underbrace{O_x, O_x, O_x}_\textrm{3})$, we have
\begin{equation}
  \overline{w}_\text{Lugano} * \vec{O}_x = f(x).
\end{equation}
To see this, consider the first two columns of Table~\ref{table:lugCompute2} which show the unique fixed-points of $\overline{w}_\text{Lugano}$ for different values of $x$, under three copies of the query oracle $O_x$. 
These depend only on the values of $x_1, x_2$ and $x_3$ and one can verify that they are the unique fixed-points of $\overline{w}_\text{Lugano}$ by checking that, for each row and for any $1 \leq k \leq 3$, $\overline{w}_k\big(O_x(j_1), O_x(j_2), O_x(j_3)\big) = \overline{w}_k(x_{j_1}, x_{j_2}, x_{j_3}) = j_k$.

The last two columns of Table~\ref{table:lugCompute2} give the different outputs of the queries for the different fixed-points together with the value obtained in $\F$, which gives the output of the computation. 
We see that together, the first and last columns correspond exactly to the truth table of $f_{6c}$ in Table~\ref{table:classicalf}.
This shows that the classical-deterministic process $\overline{w}_\text{Lugano}$ computes $f_{6c}$ in three queries to $O_x$. Because $C(f_{6c}) = 3$, this implies that $D^\textup{Gen}(f_{6c}) = 3$.
\end{proof}

\begin{table}[!ht]
	\caption{Fixed-points and outputs of the classical-deterministic process $\overline{w}_\text{Lugano}$ on three copies of the query oracle $O_x$, for all $x \in \{0, 1\}^6$.
	In the table, $j_k = \overline{w}_k(o_1, o_2, o_3)$ and $o_k = O_x(j_k) =x_{j_k}$. The last column follows from Eq.~\eqref{eq:outputAppendix}.
	} \label{table:lugCompute2}
\centering
\begin{tabular}{lll|lll|lll|c}
$x_1$ & $x_2$ & $x_3$ & $j_1$ & $j_2$ & $j_3$ & $o_1$ & $o_2$ & $o_3$ & $\overline{w}_F(o_1, o_2, o_3)$ \\ \hline
0     & 0     & 0     & 1     & 2     & 3     & 0     & 0     & 0     & 0                    \\
1     & 0     & 0     & 1     & 5     & 3     & 1     & $x_5$ & 0     & $x_5$                \\
0     & 1     & 0     & 1     & 2     & 6     & 0     & 1     & $x_6$ & $x_6$                \\
0     & 0     & 1     & 4     & 2     & 3     & $x_4$ & 0     & 1     & $x_4$                \\
1     & 1     & 0     & 1     & 5     & 3     & 1     & $x_5$ & 0     & $x_5$                \\
1     & 0     & 1     & 4     & 2     & 3     & $x_4$ & 0     & 1     & $x_4$                \\
0     & 1     & 1     & 1     & 2     & 6     & 0     & 1     & $x_6$ & $x_6$                \\
1     & 1     & 1     & 1     & 2     & 3     & 1     & 1     & 1     & 1                    
\end{tabular}

\end{table}

\section{Process functions, process matrices and causal separability}
\label{appendix:SepAppendix}

\subsection{Process functions as process matrices}
\label{appendix:FuncMat}

We begin by discussing the restriction of the process matrix framework to classical-deterministic maps, and providing a short proof of the equivalence to process functions in this setting~\cite{BaumelerSpaceLogically2016,baumeler2016device}.
We fix a canonical computational ``classical'' basis, and consider an operation (a channel, CP map, or a process) to be classical if its Choi matrix is diagonal in that basis. 
For clarity, we will denote in this appendix the Choi matrices of such operations with a bar, e.g.\ as $\overline{\mathsf{M}}$.
Moreover, an operation is considered deterministic if it transforms pure classical states into pure classical states, in which case its Choi matrix contains only zeros and ones, and will generically be denoted in this appendix with a tilde, e.g.\ as $\widetilde{\mathsf{M}}$. 
A classical-deterministic channel, for example, corresponds to a function $\mu : \I \to \O$, and can be written in the Choi picture as the matrix $\widetilde{\mathsf{M}}_\mu = \sum_{i\in\I}\ketbra{i}^{I}\otimes\ketbra{\mu(i)}^{O}\in\L(\H^{IO})$.
In a slight abuse of language, we will sometimes identify a channel with its Choi matrix, the two being isomorphic.

Let us introduce the notation $\H^{(IO)_k}=\H^{I_k O_k}=\H^{I}\otimes\H^{O}$ and $\H^{(IO)_{[T]}} = \bigotimes_{k \in [T]} \H^{(IO)_k}$ with $[T] := \{1, \dots, T\}$, as well as $d_{O_k} = \dim(\H^{O_k})$, $d_{I_k} = \dim(\H^{I_k})$, $d_{O_{[T]}} = \prod_{k \in [T]} d_{O_k}$ and $d_{I_{[T]}} = \prod_{k \in [T]} d_{I_k}$.
We then begin by defining classical-deterministic process matrices, where without loss of generality we take the global past $\H^P$ and global future $\H^F$ to be trivial and therefore omit them~\cite{wechs2019definition}. 
We will make use of the link product \cite{chiribellaquantum2008, chiribella2009theoretical}, which allows the composition of quantum maps, potentially over a subset of their input/output systems, to be computed directly in the Choi picture.
The link product is defined for any matrices $\mathsf{M}^{XY}\in\L(\H^{XY})$ and $\mathsf{N}^{YZ} \in \L(\H^{YZ})$, as
$\mathsf{M}^{XY}*\mathsf{N}^{YZ} =\Tr_Y\big[(\mathsf{M}^{XY} \otimes \id^Z)^{\mathsf{T}_Y}(\id^X \otimes \mathsf{N}^{YZ})] \in \L(\H^{XZ})$, where $\cdot^{\mathsf{T}_Y}$ denotes the partial transpose over the Hilbert space $\H^Y$ with respect to the computational basis.

\begin{definition}[$T$-slot classical-deterministic process matrix]
  \label{def:classicalMat}
  A $T$-slot classical-deterministic process matrix is a diagonal positive semidefinite matrix $\widetilde{W} \in \L(\H^{(IO)_{[T]}})$ whose diagonal is composed only of zeros and ones, and such that for any $T$ classical-deterministic channels $(\widetilde{\mathsf{M}}_1, \dots, \widetilde{\mathsf{M}}_T)$ with $\widetilde{\mathsf{M}}_k \in \L(\H^{(IO)_k})$, we have $\widetilde{W} * (\widetilde{\mathsf{M}}_1 \otimes \dots \otimes \widetilde{\mathsf{M}}_T) = 1$.\footnote{This constraint of being well normalised on the set of classical-deterministic channels can also be framed in term of a normalisation constraint and a projection constraint, as it is done for general process matrix in Section~\ref{sec:sup} \cite{oreshkov2012quantum, araujowitnessing2015,wechs2019definition}}
\end{definition}

Note that it is sufficient to require that classical-deterministic process matrices be well normalised on the set of classical-deterministic channels, as for any $T$ \emph{quantum} channels $(\mathsf{M}_1, \dots, \mathsf{M}_T)$, with $\mathsf{M}_k \in \L(\H^{(IO)_k})$, we have $\widetilde{W} * (\mathsf{M}_1 \otimes \dots \otimes \mathsf{M}_T) = \widetilde{W} * (\overline{\mathsf{M}}_1 \otimes \dots \otimes \overline{\mathsf{M}}_T)$, with the $\overline{\mathsf{M}}_k = \bigpi_k(\mathsf{M}_k)$ being \emph{classical} channels, where $\bigpi_k$ is the projector onto the classical computational basis, i.e.\ $\bigpi_k(\mathsf{M}_k):= \sum_i \ketbra{i}\mathsf{M}_k\ketbra{i}$.
Since any classical channel can be written as a convex combination of \emph{classical-deterministic} channels, it follows from linearity that $\widetilde{W} * (\mathsf{M}_1 \otimes \dots \otimes \mathsf{M}_T) = 1$, and $\widetilde{W}$ is thus a valid process matrix.

A classical-deterministic process matrix $\widetilde{W}$ can also be seen as the Choi matrix of a classical-deterministic channel from $\L(\H^{O_{[T]}})$ to $\L(\H^{I_{[T]}})$. 
\begin{lemma}
  For any $T$-slot classical-deterministic process matrix $\widetilde{W} \in \L(\H^{(IO)_{[T]}})$ there exists a function $w: \bigtimes^T_{k=1} \mathcal{O}_k \to \bigtimes_{k=1}^T \mathcal{I}_k$ with $\O_k = \{0, \dots, d_{O_k}-1\}$ and $\I_k = \{0, \dots, d_{I_k}-1\}$, such that
  \begin{equation}
    \widetilde{W} = \sum_{\vec{k} \in \O_{[T]}} \ketbra*{\vec{k}}^{O_{[T]}} \otimes \ketbra*{w(\vec{k})}^{I_{[T]}},
  \end{equation}
where $\O_{[T]} = \bigtimes_{k \in [T]} \O_k$ and $\ketbra*{\vec{k}} = \ketbra{k_1 \dots k_T}$.
\end{lemma}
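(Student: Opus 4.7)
My plan is to use the hypotheses directly: diagonality with $0/1$ entries reduces $\widetilde{W}$ to a sum of rank-one projectors onto computational basis states, and the normalisation condition then forces exactly one $\vec{i}\in\I_{[T]}$ to appear for each $\vec{o}\in\O_{[T]}$, which is precisely what defines the function $w$.

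First I would expand $\widetilde{W}$ as
\begin{equation*}
  \widetilde{W} = \sum_{\vec{i}\in\I_{[T]},\,\vec{o}\in\O_{[T]}} c_{\vec{i},\vec{o}}\, \ketbra*{\vec{i}}^{I_{[T]}} \otimes \ketbra*{\vec{o}}^{O_{[T]}},
\end{equation*}
with $c_{\vec{i},\vec{o}}\in\{0,1\}$ by the diagonal $0/1$ assumption. For any fixed $\vec{o}^{*}\in\O_{[T]}$, consider the constant functions $\mu_k\colon\I_k\to\O_k$, $\mu_k(\cdot)=o^{*}_k$, whose Choi matrices $\widetilde{\mathsf{M}}_k=\sum_{i_k\in\I_k}\ketbra{i_k}^{I_k}\otimes\ketbra*{o^{*}_k}^{O_k}$ are bona fide classical-deterministic channels. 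Because all operators involved are diagonal in the computational basis, the link product reduces to an ordinary trace, giving
\begin{equation*}
  \widetilde{W}*\bigl(\widetilde{\mathsf{M}}_1\otimes\cdots\otimes\widetilde{\mathsf{M}}_T\bigr) = \sum_{\vec{i}\in\I_{[T]}} c_{\vec{i},\vec{o}^{*}}.
\end{equation*}

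By Definition~\ref{def:classicalMat}, this sum must equal $1$, and since each $c_{\vec{i},\vec{o}^{*}}\in\{0,1\}$, there is exactly one $\vec{i}\in\I_{[T]}$ satisfying $c_{\vec{i},\vec{o}^{*}}=1$. Defining $w(\vec{o}^{*})$ to be this unique $\vec{i}$ yields a well-defined function $w\colon\O_{[T]}\to\I_{[T]}$, and substituting back gives the claimed decomposition
\begin{equation*}
  \widetilde{W} = \sum_{\vec{o}\in\O_{[T]}} \ketbra*{w(\vec{o})}^{I_{[T]}} \otimes \ketbra*{\vec{o}}^{O_{[T]}},
\end{equation*}
after relabelling the summation. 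There is no real obstacle here: the only point requiring care is to verify that the constant-valued functions $\mu_k(\cdot)=o^{*}_k$ are admissible probe operations (which they are, as classical-deterministic channels), so that the normalisation condition can legitimately be evaluated on them in order to pin down the coefficients $c_{\vec{i},\vec{o}^{*}}$ one column at a time.
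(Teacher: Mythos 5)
Your proof is correct. The paper actually states this lemma without an explicit proof, but your argument uses exactly the machinery the paper deploys for the proposition that follows it (evaluating the link product $\widetilde{W}*(\widetilde{\mathsf{M}}_1\otimes\cdots\otimes\widetilde{\mathsf{M}}_T)$ against classical-deterministic channels and invoking the normalisation to $1$); your specialisation to the constant probe channels $\mu_k(\cdot)=o^{*}_k$ is precisely the right choice to isolate one ``column'' $\vec{o}^{*}$ at a time and force a unique nonzero $0/1$ diagonal entry there, which is all that is needed.
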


We note also that the characterisation of process functions given in Section~\ref{sec:classical-det} simplifies slightly in the case considered here, in which $\mathcal{P}$ and $\F$ are trivial and therefore omitted.
\begin{proposition}[Process function without $\mathcal{P}$ or $\F$ \cite{baumeler2016device}]
  \label{prop:unique}
      A function $w: \bigtimes^T_{k=1} \mathcal{O}_k \to \bigtimes_{k=1}^T \mathcal{I}_k$ is called a process function if and only if, for all $T$ functions $\vec{\mu} = (\mu_1, \dots, \mu_T)$ with $\mu_k = \I_k \to \O_k$,
      \begin{equation}
        \label{eq:uniq}
  \exists! \Vec{i}: w\big(\Vec{\mu}(\Vec{i})\big) = \Vec{i},
\end{equation}
with $\vec{i} = (i_1, \dots, i_T) \in \bigtimes_{k=1}^T \I_k$, and $\vec{\mu}(\vec{i}) = ( \mu_1(i_1), \dots, \mu_T(i_T))$
\end{proposition}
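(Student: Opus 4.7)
The plan is to derive Proposition~\ref{prop:unique} as an immediate specialisation of Proposition~\ref{def:charac} to the case in which both the global past $\mathcal{P}$ and the global future $\F$ are trivial singletons -- precisely the setting of this appendix, where classical-deterministic process matrices are defined with $\mathcal{P}$ and $\F$ omitted.

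First I would make the identification precise: any function $w: \bigtimes_{k=1}^T \mathcal{O}_k \to \bigtimes_{k=1}^T \mathcal{I}_k$ corresponds bijectively to a function $\tilde{w}: \mathcal{P}_\star \times \bigtimes_{k=1}^T \mathcal{O}_k \to \bigtimes_{k=1}^T \mathcal{I}_k \times \F_\star$, with $\mathcal{P}_\star = \{\star_P\}$ and $\F_\star = \{\star_F\}$ arbitrary singletons, defined by $\tilde{w}(\star_P, \vec{o}) = (w(\vec{o}), \star_F)$. By the convention recorded in the footnote to Proposition~\ref{def:charac}, singleton $\mathcal{P}$ and $\F$ play no informational role and may be omitted from the specification of a process function; thus $w$ is a process function precisely when $\tilde{w}$ is.

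For the forward direction, I would apply Proposition~\ref{def:charac} to $\tilde{w}$: for the unique element $a = \star_P$ and for any $\vec{\mu}=(\mu_1,\dots,\mu_T)$, there is a unique $\vec{i}$ such that $\tilde{w}(a, \vec{\mu}(\vec{i})) = (\vec{i}, b)$, with $b$ forced to equal $\star_F$. Projecting onto the non-trivial component yields exactly the required condition $\exists!\, \vec{i}: w(\vec{\mu}(\vec{i})) = \vec{i}$. For the reverse direction, supposing $w$ satisfies Eq.~\eqref{eq:uniq}, the unique $\vec{i}$ with $w(\vec{\mu}(\vec{i})) = \vec{i}$ lifts uniquely to $\tilde{w}(\star_P, \vec{\mu}(\vec{i})) = (\vec{i}, \star_F)$, so $\tilde{w}$ meets the characterisation of Proposition~\ref{def:charac} and $w$ is therefore a process function.

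The proof thus reduces entirely to unwinding the bijection between $w$ and $\tilde{w}$ and tracking that suppressing singleton components preserves both the existence and the uniqueness of the fixed-point. There is no real obstacle here: Proposition~\ref{prop:unique} is in fact the original formulation of process functions from~\cite{baumeler2016device}, to which the extension allowing non-trivial $\mathcal{P}$ and $\F$ in Proposition~\ref{def:charac} was added subsequently, so the two statements are designed to specialise to one another in exactly this way.
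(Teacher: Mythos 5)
Your proposal is correct and matches the paper's treatment: the paper offers no separate proof of Proposition~\ref{prop:unique}, simply noting just before it that the characterisation of Proposition~\ref{def:charac} ``simplifies slightly'' when $\mathcal{P}$ and $\F$ are trivial singletons and hence omitted, which is exactly the specialisation you carry out explicitly via the bijection $w \leftrightarrow \tilde{w}$. Your version just spells out the routine bookkeeping (uniqueness and existence of the fixed-point being preserved when singleton components are suppressed) that the paper leaves implicit.
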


We can now show the equivalence between process functions and classical-deterministic process matrices, by noticing that the unique fixed-point condition of Proposition~\ref{prop:unique} is equivalent to the normalisation constraint of Definition~\ref{def:classicalMat}.
This result was first shown in~\cite{BaumelerSpaceLogically2016} and later in~\cite{baumeler2016device}, but we give an explicit self-contained proof here for completeness, as it will prove instructive for the following section.

\begin{proposition}
  A function $w: \bigtimes^T_{k=1} \mathcal{O}_k \to \bigtimes_{k=1}^T \mathcal{I}_k$, with the spaces $\O_k = \{0, \dots, d_{O_k}-1\}$ and $\I_k = \{0, \dots, {d_{I_k}-1}\}$ is a process function if and only if the matrix $\widetilde{W} \in \L(\H^{(IO)_{[T]}})$ defined as
  \begin{equation}
	\label{eq:processmatrixfromPF}
    \widetilde{W} = \sum_{\vec{k} \in \O_{[T]}} \ketbra*{\vec{k}}^{O_{[T]}} \otimes \ketbra*{w(\vec{k})}^{I_{[T]}}
  \end{equation}
  is a $T$-slot classical-deterministic process matrix.
\end{proposition}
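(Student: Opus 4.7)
The plan is to exploit the fact that the matrix $\widetilde{W}$ displayed in the statement is manifestly diagonal in the computational basis, positive semidefinite, and has only $0$/$1$ diagonal entries. These observations already give all but one of the requirements of Definition~\ref{def:classicalMat}, so the substance of the proposition reduces to showing that the normalisation constraint $\widetilde{W} * (\widetilde{\mathsf{M}}_1 \otimes \cdots \otimes \widetilde{\mathsf{M}}_T) = 1$, required for every tuple of classical-deterministic channels, is equivalent \emph{as a condition on $w$} to the unique fixed-point characterisation of Proposition~\ref{prop:unique}.

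First, I would invoke the Choi-Jamio\l{}kowski correspondence to identify each classical-deterministic channel $\widetilde{\mathsf{M}}_k$ with a function $\mu_k : \I_k \to \O_k$, so that $\widetilde{\mathsf{M}}_k = \sum_{i_k \in \I_k} \ketbra{i_k}^{I_k} \otimes \ketbra{\mu_k(i_k)}^{O_k}$. Since both $\widetilde{W}$ and each $\widetilde{\mathsf{M}}_k$ are diagonal in the computational basis and the global past and future are trivial, the partial transposes appearing in the definition of the link product act trivially and the whole expression collapses to the ordinary trace $\Tr[\widetilde{W}\, \bigotimes_k \widetilde{\mathsf{M}}_k]$, which is a scalar.

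The core step is then a direct evaluation of this trace. Using $\ketbra{a}\cdot\ketbra{b} = \delta_{a,b}\ketbra{a}$ on both the $O_{[T]}$ and $I_{[T]}$ tensor factors, the only surviving contributions come from pairs $(\vec{k}, \vec{i})$ satisfying simultaneously $\vec{k} = \vec{\mu}(\vec{i})$ \emph{and} $w(\vec{k}) = \vec{i}$, i.e., from tuples $\vec{i}$ that are fixed-points of the composition $w \circ \vec{\mu}$. The trace therefore evaluates to the cardinality $\bigl|\{\vec{i} : w(\vec{\mu}(\vec{i})) = \vec{i}\}\bigr|$.

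It then remains only to observe that demanding this cardinality to equal $1$ for every tuple $\vec{\mu}$ is, verbatim, the unique fixed-point condition of Proposition~\ref{prop:unique}, which simultaneously yields the ``only if'' and ``if'' directions of the equivalence. I do not expect any substantive obstacle beyond careful bookkeeping of the $I_k$ and $O_k$ tensor-factor labels during the trace computation; conceptually, the statement is a direct translation between the fixed-point and normalisation formulations of logical consistency.
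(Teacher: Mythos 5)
Your proposal is correct and follows essentially the same route as the paper: both reduce the link product to the trace of the product of the two diagonal Choi matrices, evaluate it as the number of fixed points $\bigl|\{\vec{i} : w(\vec{\mu}(\vec{i})) = \vec{i}\}\bigr|$, and read off both directions of the equivalence from the requirement that this count equal $1$ for every tuple $\vec{\mu}$. The only cosmetic difference is that the paper writes out the trace computation explicitly rather than phrasing the result as a cardinality, but the content is identical.
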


\begin{proof}
  Let $w: \bigtimes^T_{k=1} \mathcal{O}_k \to \bigtimes_{k=1}^T \mathcal{I}_k$, with $\O_k = \{0, \dots, d_{O_k}-1\}$ and $\I_k = \{0, \dots, d_{I_k}-1\}$ be a process function, and define $\widetilde{W} = \sum_{\vec{k} \in \O_{[T]}} \ketbra*{\vec{k}}^{O_{[T]}} \otimes \ketbra*{w(\vec{k})}^{I_{[T]}}\in\L(\H^{(IO)_{[T]}})$. 
  By construction $\widetilde{W}$ is a diagonal positive semidefinite matrix, whose diagonal is composed only of zeros and ones. 
  Now consider any $T$ classical-deterministic channels $(\widetilde{\mathsf{M}}_{\mu_1}, \dots, \widetilde{\mathsf{M}}_{\mu_T})$ labelled with their corresponding functions $\mu_k : \O_k \to \I_k$, such that $\widetilde{\mathsf{M}}_{\mu_k} = \sum_{i \in \I_k} \ketbra*{i}^{I_k} \otimes \ketbra*{\mu_k(i)}^{O_k}$. 
  We have, writing $\I_{[T]} = \bigtimes_{k \in [T]} \I_k$,
  \begin{align}
    \widetilde{W} * (\widetilde{\mathsf{M}}_1 \otimes \dots \otimes \widetilde{\mathsf{M}}_T) &= \Tr\Big( \sum_{\vec{k} \in \O_{[T]}} \ketbra*{\vec{k}}^{O_{[T]}} \otimes \ketbra*{w(\vec{k})}^{I_{[T]}} \cdot \sum_{\vec{i} \in \I_{[T]}} \ketbra*{\vec{i}}^{I_{[T]}} \otimes \ketbra*{\vec{\mu}(\vec{i})}^{O_{[T]}} \Big) \\
    &= \Tr\Big(\sum_{ \substack{\vec{i} \in \I_{[T]} \\ w\big(\vec{\mu}(\vec{i})\big) = \vec{i}}} \ketbra*{\vec{\mu}(\vec{i})}^{O_{[T]}} \otimes \ketbra*{w\big(\vec{\mu}(\vec{i})\big)}^{I_{[T]}}\Big) \\
    &= 1,
  \end{align}
  where the last line follows from the existence of a unique $\vec{i}$ such that $w\big(\vec{\mu}(\vec{i}) \big) = \vec{i}$. 
  Conversely, if we assume that $\widetilde{W}$ is a valid classical-deterministic process matrix so that $\widetilde{W} * (\widetilde{\mathsf{M}}_1 \otimes \dots \otimes \widetilde{\mathsf{M}}_T) = 1$, we find in the same way that there must exist a unique $\vec{i}$ such that $w\big(\vec{\mu}(\vec{i})\big) = \vec{i}$.
\end{proof}

\subsection{Causal separability in classical-deterministic settings}
\label{appendix:definite}

In Definition~\ref{def:causallydef}, we proposed a definition of \emph{causally definite} classical-deterministic processes to describe processes that connect the various operations they take as input in a well-defined causal order. 
This definition is analogous in spirit to that of \emph{causal separability} for process matrices (or, equivalently, quantum supermaps), as defined in~\cite{wechs2019definition}.
The goal of this appendix is to demonstrate that our definition, formulated in the process function formalism, is indeed consistent with the more general notion of causal separability defined for process matrices when restricted to classical-deterministic process matrices.  

As in the discussion of classical process matrices in the previous section, we consider without loss of generality process matrices $W \in \L(\H^{(IO)_{[T]}})$ that are defined with trivial past and future space $\H^P$ and $\H^F$. 
We begin by recalling the definition of causal separability for process matrices.

\begin{definition}[Causal separability of process matrices \cite{wechs2019definition}]
 \label{def:causallysep}
  For $T=1$, any $T$-slot process matrix is causally separable. 
  For $T \geq 2$, a $T$-slot process matrix $W$ is said to be causally separable if and only if, for any extension $\H^{I'_{[T]}}$ of the slots' input spaces and any ancillary quantum state $\rho \in \L(\H^{I'_{[T]}})$, $W \otimes \rho$ can be decomposed as
  \begin{equation}
    \label{eq:decomp}
    W \otimes \rho = \sum_{k \in [T]} q_k W^{\rho}_{(k)},
  \end{equation}
with $q_k \geq 0$, $\sum_k q_k = 1$, and where for each $k$, $W^{\rho}_{(k)} \in \L(\H^{(II'O)_{[T]}})$ is a $T$-slot process matrix compatible with the $k$th operation acting first, and is such that for any CP map $\mathsf{M}_k \in \L(\H^{(II'O)_k})$, the conditional $(T-1)$-slot process matrix $(W^{\rho}_{(k)})_{|\mathsf{M}_k} := W^{\rho}_{(k)} * \mathsf{M}_k$ is itself causally separable (up to normalisation). 
\end{definition}

A process matrix that cannot be decomposed as in Definition~\ref{def:causallysep} is said to be causally nonseparable. 
In this definition, the extension with a state $\rho \in \L(\H^{I'_{[T]}})$ is important to rule out the possibility of ``activation'' of causal indefiniteness, wherein a process matrices that can be decomposed as in Eq.~\eqref{eq:decomp} when extended with any separable state can become causally nonseparable when extended with some entangled state~\cite{oreshkov2016causal,wechs2019definition}. 
The following proposition shows how this definition is simplified when the process we consider is classical, i.e., when its process matrix is diagonal in the computational basis. 

\begin{proposition}[Causal separability of classical process matrices]
  \label{prop:classicalSep}
  For $T=1$, any $T$-slot classical process matrix is causally separable. 
  For $T \geq 2$, a $T$-slot classical process matrix $\overline{W}$ is causally separable if and only if it can be decomposed as
  \begin{equation}
    \label{eq:simple}
    \overline{W} = \sum_{k \in [T]} q_k \overline{W}_{(k)},
  \end{equation}
  with $q_k \geq 0$, $\sum_k q_k = 1$, and where for each $k$, $\overline{W}_{(k)} \in \L(\H^{(IO)_{[T]}})$ is a classical process matrix compatible with the $k$th operation acting first, and is such that for any classical CP map $\overline{\mathsf{M}}_k \in \L(\H^{(IO)_k})$, the conditional $(T-1)$-slot classical process matrix $(\overline{W}_{(k)})_{|\overline{\mathsf{M}}_k} := \overline{W}_{(k)} * \overline{\mathsf{M}}_k$ is itself causally separable (up to normalisation).
\end{proposition}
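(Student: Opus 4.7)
The plan is to proceed by induction on $T$, with the base case $T=1$ immediate since both the general and the classical definitions reduce to the (trivial) statement that any single-slot process matrix is causally separable. For $T\geq 2$, I would prove the two implications separately, in each case leveraging the fact that a classical process matrix is invariant under the dephasing map $\bigpi^{\otimes T}$ acting on the computational basis of all slot systems, and that dephasing both preserves the property of being a valid process matrix and commutes with link products in the appropriate way.

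For the forward direction, suppose $\overline{W}$ satisfies Definition~\ref{def:causallysep}. Applying that definition with the trivial extension (taking $\H^{I'_{[T]}}$ one-dimensional and $\rho=1$) already yields a decomposition $\overline{W}=\sum_k q_k W_{(k)}$ into process matrices compatible with operation $k$ acting first, except that the $W_{(k)}$ are a priori non-classical. Applying $\bigpi^{\otimes T}$ on both sides and using the invariance of $\overline{W}$ gives $\overline{W}=\sum_k q_k \overline{W}_{(k)}$ with $\overline{W}_{(k)}:=\bigpi^{\otimes T}(W_{(k)})$. One then checks that dephasing preserves both the ``$k$-first'' property and the validity as a process matrix (since it amounts to pre- and post-composing each slot with a valid CPTP map), and that for any classical map $\overline{\mathsf{M}}_k$ the reduction $(\overline{W}_{(k)})_{|\overline{\mathsf{M}}_k}$ equals $\bigpi^{\otimes(T-1)}\bigl((W_{(k)})_{|\overline{\mathsf{M}}_k}\bigr)$. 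Since $(W_{(k)})_{|\overline{\mathsf{M}}_k}$ is causally separable by the general assumption, applying the inductive hypothesis (combined with the observation that dephasing preserves causal separability of classical process matrices) gives the desired recursive condition.

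For the reverse direction, suppose Eq.~\eqref{eq:simple} holds. For any ancillary state $\rho\in\L(\H^{I'_{[T]}})$, tensoring yields $\overline{W}\otimes\rho=\sum_k q_k (\overline{W}_{(k)}\otimes\rho)$, and each $\overline{W}_{(k)}\otimes\rho$ remains compatible with operation $k$ acting first, since the extension only appends additional input systems that cannot carry information backward in time. The remaining task is to show that for any CP map $\mathsf{M}_k\in\L(\H^{(II'O)_k})$, the reduced $(T-1)$-slot process matrix $(\overline{W}_{(k)}\otimes\rho)*\mathsf{M}_k$ is causally separable. The key observation is that since $\overline{W}_{(k)}$ is diagonal on $\H^{(IO)_k}$, the link product over these systems depends on $\mathsf{M}_k$ only through its dephased restriction to those systems, so the reduction rewrites as $(\overline{W}_{(k)})_{|\overline{\mathsf{M}}'_k}\otimes\rho'$ for a classical CP map $\overline{\mathsf{M}}'_k$ and a residual state $\rho'\in\L(\H^{I'_{[T]\setminus\{k\}}})$ constructed from $\rho$ and $\mathsf{M}_k$. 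By the assumption on Eq.~\eqref{eq:simple}, $(\overline{W}_{(k)})_{|\overline{\mathsf{M}}'_k}$ is causally separable as a classical process matrix, and the inductive hypothesis then gives that $(\overline{W}_{(k)})_{|\overline{\mathsf{M}}'_k}\otimes\rho'$ admits the required decomposition as in Definition~\ref{def:causallysep}.

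The main obstacle will be making rigorous the last step of the reverse direction, namely the rewriting of $(\overline{W}_{(k)}\otimes\rho)*\mathsf{M}_k$ as a classical reduction of $\overline{W}_{(k)}$ tensored with a residual ancilla. This requires carefully tracking how the diagonal structure of $\overline{W}_{(k)}$ on $\H^{(IO)_k}$ ``filters'' the quantum map $\mathsf{M}_k$, while the entangled extension $\rho$ may correlate $\H^{I'_k}$ with $\H^{I'_{[T]\setminus\{k\}}}$. Concretely, one writes $\mathsf{M}_k$ in the computational basis on $\H^{(IO)_k}$ to isolate its diagonal part $\overline{\mathsf{M}}'_k$, then uses that the off-diagonal components vanish when composed with $\overline{W}_{(k)}$, leaving a residual operator on $\H^{I'_{[T]}}$ that, after tracing out $\H^{I'_k}$ against $\rho$, gives the desired $\rho'$. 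Once this decomposition is established, the inductive structure closes cleanly.
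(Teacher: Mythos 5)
Your strategy---induction on $T$, dephasing to classicalise the decomposition in the forward direction, and exploiting the diagonal structure of $\overline{W}_{(k)}$ on $\H^{(IO)_k}$ to reduce an arbitrary extended CP map to classical data in the reverse direction---is essentially the route the paper takes (the paper dispatches the forward direction in one sentence, so your dephasing argument is if anything more explicit, and it is sound).

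There is, however, one step in your reverse direction that fails as literally stated: the claim that $(\overline{W}_{(k)}\otimes\rho)*\mathsf{M}_k$ ``rewrites as $(\overline{W}_{(k)})_{|\overline{\mathsf{M}}'_k}\otimes\rho'$ for a classical CP map $\overline{\mathsf{M}}'_k$ and a residual state $\rho'$''. Writing $\mathsf{M}_k=\sum_{ij}\ketbra{i}{j}^{(IO)_k}\otimes\nu_{ij}^{I'_k}$, the diagonal structure of $\overline{W}_{(k)}$ indeed kills the off-diagonal terms, but what remains is
\begin{equation*}
(\overline{W}_{(k)}\otimes\rho)*\mathsf{M}_k=\sum_i \big(\overline{W}_{(k)}*\ketbra{i}^{(IO)_k}\big)\otimes\big(\rho*\nu_{ii}^{I'_k}\big),
\end{equation*}
and the residual ancillary operators $\rho*\nu_{ii}^{I'_k}$ depend on the classical value $i$: since $\rho$ may entangle $\H^{I'_k}$ with the other ancillas and the $\nu_{ii}$ differ, no single $\rho'$ factors out of the sum. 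The correct conclusion is that the reduction is a \emph{convex combination} $\sum_i p_i\,\big(\overline{W}_{(k)}*\overline{\mathsf{M}}_k^{(i)}\big)\otimes\rho_i$ with classical CP maps $\overline{\mathsf{M}}_k^{(i)}\propto\ketbra{i}^{(IO)_k}$ and normalised states $\rho_i$. Each term is then causally separable by the hypothesis on Eq.~\eqref{eq:simple} together with the inductive hypothesis (and the fact that tensoring a causally separable $(T-1)$-slot process with an ancillary state preserves causal separability), and you need the additional---easy but necessary---observation that causal separability is closed under convex combinations to conclude. With that correction your argument closes and coincides with the paper's proof.
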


\begin{proof}
Clearly, if $\overline{W}$ is causally separable then it has a decomposition of the form of Eq.~\eqref{eq:simple} with the desired properties since Definition~\ref{def:causallysep} includes the case of a trivial ancillary system and requires the recursive condition to hold for all CP maps $\mathsf{M}_k$, which includes classical CP maps $\overline{\mathsf{M}}_k$.
We hence focus on proving the converse statement.

We will prove that the simplified conditions above imply causal separability for classical process matrices inductively.
Clearly, for $T=1$ this is the case, since any $1$-slot process matrix, classical or not, is causally separable.

Let us then assume that any $(T-1)$-slot classical process matrix with a decomposition following Eq.~\eqref{eq:simple} with the required properties is causally separable.
Consider then a classical process matrix $\overline{W}$ that can be decomposed following Eq.~\eqref{eq:simple} and where for every $k\in [T]$ and any classical CP map $\overline{\mathsf{M}}_k \in \L(\H^{(IO)_k})$, $\overline{W}_{(k)} * \overline{\mathsf{M}}_k$ is a causally separable $(T-1)$-slot classical process matrix. 
To show that $\overline{W}$ is itself causally nonseparable, we will then show that for any extension $\H^{I'_{[T]}}$ of the parties' Hilbert spaces, any quantum state $\rho \in \L(\H^{I'_{[T]}})$, and any (not necessarily classical) CP map $\mathsf{M}_k \in \L(\H^{(II'O)_k})$, the extended process $\overline{W} \otimes \rho = \sum_{k \in [T]} q_k (\overline{W}_{(k)} \otimes \rho)$ indeed has the property that for any $k$ and $\mathsf{M}_k \in \L(\H^{(II'O)_k})$, $(\overline{W}_{(k)} \otimes \rho) * \mathsf{M}_k$ is causally separable.

Let us first note that because $\overline{W}_{(k)}$ is diagonal it satisfies
\begin{equation}
	\overline{W}_{(k)} = \sum_i \big(\Pi_i^{(IO)_k} \otimes \id^{(IO)_{[T] \backslash k}}\big) \overline{W}_{(k)} \big(\Pi_i^{(IO)_k} \otimes \id^{(IO)_{[T] \backslash k}}\big),
\end{equation}
where $\Pi_i^{(IO)_k} = \ketbra{i}{i}^{(IO)_k}$ and the $\ket{i}^{(IO)_k}$ are the computational basis vectors of $\H^{(IO)_k}$.
It then follows, using the definition of the link product, that
  \begin{align}
    (\overline{W}_{(k)} \otimes \rho) * \mathsf{M}_k &= \sum_i \Big(\big(\Pi_i^{(IO)_k} \otimes \id^{(IO)_{[T] \backslash k}}\big) \overline{W}_{(k)} \big(\Pi_i^{(IO)_k} \otimes \id^{(IO)_{[T] \backslash k}}\big) \otimes \rho \Big) * \mathsf{M}_k \\
    &= (\overline{W}_{(k)} \otimes \rho) * \sum_i \big(\Pi_i^{(IO)_k} \otimes \id^{I'_k}\big) \mathsf{M}_k  \big(\Pi_i^{(IO)_k} \otimes \id^{I'_k}\big). \label{eq:proj}
  \end{align}
  Writing $\mathsf{M}_k=\sum_{ij}\ketbra{i}{j}^{(IO)_k} \otimes \nu_{ij}^{I'_k}$, we note that since $\mathsf{M}_k$ is the Choi matrix of a CP map it is positive semidefinite, and hence the matrices $\nu_{ii}\in\L(\H^{I'_k})$ are also positive semidefinite.
  We then have, following Eq.~\eqref{eq:proj},
  \begin{align}
    (\overline{W}_{(k)} \otimes \rho) * \mathsf{M}_k &= (\overline{W}_{(k)} \otimes \rho) * \sum_i  \ketbra{i}^{(IO)_k} \otimes \nu_{ii}^{I'_k} \\
    &= \sum_i (\overline{W}_{(k)} * \ketbra{i}^{(IO)_k}) \otimes (\rho * \nu_{ii}^{I'_k}) \\
    &= \sum_i r_i (\overline{W}_{(k)} * \ketbra{i}^{(IO)_k}) \otimes \rho_i \\
    &= \sum_i \frac{r_{i}}{r} (\overline{W}_{(k)} * \overline{\mathsf{M}}_k^{(i)}) \otimes \rho_i, \label{eq:sep}
  \end{align}
  where $r_i:=\Tr(\rho * \nu_{ii}^{I'_k})$, $\rho_i:=\frac{1}{r_i}\rho * \nu_{ii}^{I'_k}\in\L(\H^{I'_{[T]\setminus k}})$ are ancillary states for the remaining $T-1$ parties, $r:=\sum_i r_i$, and the $\overline{\mathsf{M}}_k^{(i)}:= r\ketbra{i}^{(IO)_k}$ are classical (diagonal) CP maps.
  By the inductive assumption, for each $i$, the $(T-1)$-slot classical process matrix $(\overline{W}_{(k)})_{|\overline{\mathsf{M}}_k^{(i)}} := \overline{W}_{(k)} * \overline{\mathsf{M}}_k^{(i)}$ is causally separable and hence, by Definition~\ref{def:causallysep}, $(\overline{W}_{(k)})_{|\overline{\mathsf{M}}_k^{(i)}}\otimes \rho_i$ is also a causally separable $(T-1)$-slot process matrix.
  Since causal separability is preserved under convex combinations, we thus find conclude that $(\overline{W}_{(k)} \otimes \rho) * \mathsf{M}_k$ is itself causally separable, thereby completing the proof.
\end{proof}

For classical process matrices $\widetilde{W}$ that are moreover deterministic, causal separability can be characterised even more simply.
Indeed, it turns out that in this case one only needs to consider the reduced process matrices obtained by classical-deterministic actions $\widetilde{\mathsf{M}}_k$, and moreover it is sufficient to consider CPTP maps (i.e., channels) rather than arbitrary classical CP maps.
We hence find that, for such processes, causal separability is equivalent to the following simpler condition.

\begin{proposition}[Causally separable classical-deterministic process matrix]
  \label{prop:causallydefinite}
  For $T=1$, any $T$-slot clas\-sical-deterministic process matrix is causally separable. 
  For $T \geq 2$, a $T$-slot classical-deterministic process matrix $\widetilde{W}$ is causally separable if and only if there exists a $k$ for which $\widetilde{W}$ is compatible with the $k$th operation acting first, and such that for any classical-deterministic channel $\widetilde{\mathsf{M}}_k \in \L(\H^{(IO)_k})$, the $(T-1)$-slot classical-deterministic process matrix $\widetilde{W}_{|\widetilde{\mathsf{M}}_k} := \widetilde{W} * \widetilde{\mathsf{M}}_k$ is itself causally separable.  
\end{proposition}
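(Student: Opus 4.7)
The plan is to proceed by induction on $T$, with the base case $T=1$ being trivial. Throughout the inductive step, I will freely use the equivalence between classical-deterministic process matrices and process functions established in Appendix~\ref{appendix:FuncMat}: writing $\widetilde{W} = \sum_{\vec{o}}\ketbra{\vec{o}}^{O_{[T]}} \otimes \ketbra{w(\vec{o})}^{I_{[T]}}$ for the associated process function $w$, the condition that $\widetilde{W}$ is compatible with the $k$th operation acting first becomes equivalent to the induced function $w_k$ being constant. Both directions then reduce to translating between the two viewpoints.

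For the forward direction, I start from the convex decomposition $\widetilde{W} = \sum_k q_k \overline{W}_{(k)}$ guaranteed by Proposition~\ref{prop:classicalSep}, and fix any $k$ with $q_k > 0$. Since $\overline{W}_{(k)}$ is compatible with the $k$th operation acting first, it factorizes as $\rho_k^{I_k} \otimes V_k$ with $\rho_k$ a classical state and $V_k$ a classical matrix on the remaining systems. Two constraints then pin down the structure. First, because $\overline{W}_{(k)} \geq 0$ and $\widetilde{W}$ is diagonal with $0/1$ entries, the support of $\overline{W}_{(k)}$ lies within the graph $\{\ket{\vec{o}}\ket{w(\vec{o})}\}_{\vec{o}}$ of $w$. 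Second, the normalization of $\overline{W}_{(k)}$ forces every ``column'' indexed by $\vec{o}$ to carry total weight one; combined with the support constraint, this ensures the entry at the unique row $w(\vec{o})$ is nonzero for every $\vec{o}$, whence $\rho_k(w_k(\vec{o})) > 0$ and $V_k$ is nonzero at $(w_{\setminus k}(\vec{o}), \vec{o})$. Applying the support constraint at entries $(i_k, w_{\setminus k}(\vec{o}))$ with $i_k \neq w_k(\vec{o})$ then forces $\rho_k(i_k) = 0$, so $\rho_k$ must be a point mass on a single value $i_k^*$ and $w_k(\vec{o}) = i_k^*$ for every $\vec{o}$; hence $\widetilde{W}$ itself is compatible with the $k$th operation acting first. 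The causal separability of $\widetilde{W}*\widetilde{\mathsf{M}}_k$ for any classical-deterministic channel then follows from the general fact that causal separability is preserved under composition with CP maps at any slot, which I would establish separately by a short induction on $T$ from Definition~\ref{def:causallysep}.

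For the backward direction, I take the trivial decomposition with $q_k = 1$ and $\overline{W}_{(k)} = \widetilde{W}$, which is compatible with the $k$th operation acting first by assumption. It then remains to check that $\widetilde{W} * \overline{\mathsf{M}}_k$ is causally separable for any classical CP map $\overline{\mathsf{M}}_k = \sum_{i,j} c_{ij}\ketbra{i,j}$, given the hypothesis only for classical-deterministic channels. Since $w_k \equiv i_k^*$ is constant, a direct expansion of the link product yields $\widetilde{W} * \overline{\mathsf{M}}_k = \sum_j c_{i_k^*,j}\, W^{(j)}$, where $W^{(j)}$ is the $(T-1)$-slot classical-deterministic process matrix obtained by composing $\widetilde{W}$ with the constant classical-deterministic channel of output $j$. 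Each $W^{(j)}$ is causally separable by hypothesis, and any non-negative combination of causally separable process matrices is itself causally separable, after rescaling to a convex combination and merging their decompositions.

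The hard part will be the forward direction: arguing that the rigidity of the $0/1$-diagonal structure of $\widetilde{W}$ prevents any genuinely mixed decomposition over distinct causal orders, so that a single slot $k$ with $w_k$ globally constant can always be extracted. This rigidity arises from the tight interplay between the pointwise support constraint $\widetilde{W} \geq q_k \overline{W}_{(k)}$ and the normalization of each $\overline{W}_{(k)}$, which together pin down $\rho_k$ as a point mass concentrated at $i_k^*$.
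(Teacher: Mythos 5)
Your proposal is correct and follows essentially the same route as the paper: the $0/1$-diagonal rigidity forces every component $\overline{W}_{(k)}$ of the causal decomposition to coincide with $\widetilde{W}$ itself (the paper dispatches this in one sentence, you spell it out), and the converse direction expands an arbitrary classical CP map on slot $k$ into non-negative multiples of the constant deterministic channels, exactly as in the paper. One caveat on your forward direction: a classical process compatible with the $k$th operation acting first need \emph{not} factorize as $\rho_k^{I_k}\otimes V_k$, since the input to slot $k$ may remain classically correlated with the other parties' inputs even when its marginal is fixed; however, your support-plus-normalization argument already forces $\overline{W}_{(k)}=\widetilde{W}$ on the graph of $w$ without that factorization, so the conclusion (and the constancy of $w_k$) stands.
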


\begin{proof}
  Clearly, if $\widetilde{W}$ is causally separable according to Proposition~\ref{prop:classicalSep}, then it also satisfies to conditions of this proposition, as classical-deterministic channels are a subset of classical CP maps, and the fact that $\widetilde{W}$ is deterministic means that it cannot be decomposed as a mixture of classical processes.
We hence focus on proving the converse statement.

We will prove that a classical-deterministic process matrix satisfying the conditions of the proposition is also causally separable according to Proposition~\ref{prop:classicalSep}.
Clearly, for $T=1$ this is the case, since any $1$-slot process matrix is causally separable.

Consider now that $\widetilde{W}$ is a classical-deterministic process matrix which is compatible with slot $k$ being applied first and for which, for any classical-deterministic channel $\widetilde{\mathsf{M}}_k \in \L(\H^{(IO)_k})$,  $\widetilde{W}_{|\widetilde{\mathsf{M}}_k}$ is causally separable.
We want to show that for any classical CP map, $\overline{\mathsf{M}}_k$, the conditional process matrix $\widetilde{W}*\overline{\mathsf{M}}_k$ is also causally separable according to Proposition~\ref{prop:classicalSep} (and hence also according to Definition~\ref{def:causallysep}).

Consider the maps $\{\widetilde{\mathsf{M}}_{k}^{[i,l]} = \ketbra{i}^{I_k} \otimes \ketbra{l}^{O_k} \}_{i,l}$, for $i \in \{0, \dots, d_{I_k}-1\}$ and $l \in \{0, \dots, d_{O_k}-1\}$, which form a basis for the space of classical-deterministic CP maps on $\L(\H^{(IO)_k})$. 
In particular,  any CP map $\overline{\mathsf{M}}_k$ can be decomposed as $\overline{\mathsf{M}}_k = \sum_{i,l} \mu_{i,l} \widetilde{\mathsf{M}}_{k}^{[i,l]}$ for some coefficients $\mu_{i,l} \geq 0$.

  Because $\widetilde{W}$ is deterministic, there exists a unique $i \in \{0, \dots, d_{I_k}-1\}$ (the input that slot $k$, which is applied first, receives) such that for any $l$, $\widetilde{W} * \widetilde{\mathsf{M}}_{k}^{[i,l]} \neq 0$, and it follows by linearity that
  \begin{equation}
    \widetilde{W} * \overline{\mathsf{M}}_k = \sum_{l} \mu_{i,l} \widetilde{W} * \widetilde{\mathsf{M}}_{k}^{[i,l]}.
  \end{equation}
For any $l$, $\sum_{j} \widetilde{\mathsf{M}}_{k}^{[j,l]}$ is a CPTP map and $\widetilde{W}* \sum_{j} \widetilde{\mathsf{M}}_{k}^{[j,l]} = \widetilde{W}*\widetilde{\mathsf{M}}_{k}^{[i,l]}$.
 It then follows that
  \begin{align}
    \widetilde{W} * \overline{\mathsf{M}}_k &= \sum_{l} \mu_{i,l} \widetilde{W} * \widetilde{\mathsf{M}}_{k}^{[i,l]} \\
      &= \sum_{l} \mu_{i,l} \widetilde{W} * \sum_{j} \widetilde{\mathsf{M}}_{k}^{[j,l]},
  \end{align}
  and, by assumption, $\widetilde{W} * \sum_{j} \widetilde{\mathsf{M}}_{k}^{[j,l]}$ is a causally separable $(T-1)$-slot classical-deterministic process matrix. 
  The conditional process matrix $\widetilde{W} * \overline{\mathsf{M}}_k$ is therefore, by linearity, also causally separable (up to normalisation), which concludes the proof.
\end{proof}

Following the equivalence of classical-deterministic process matrices and process functions (see Appendix~\ref{appendix:FuncMat}), we obtain the following corollary.

\begin{corollary}
	A classical-deterministic process $w: \bigtimes^T_{k=1} \mathcal{O}_k \to \bigtimes_{k=1}^T \mathcal{I}_k$ is causally definite if and only if the corresponding classical-deterministic process matrix $\widetilde{W} \in \L(\H^{(IO)_{[T]}})$ defined as in Eq.~\eqref{eq:processmatrixfromPF} is causally separable.
\end{corollary}

This proves the consistency of our proposed definition of causally definite process functions with the established notion of causal separability for process matrices.
We reiterate that this equivalence holds also in the case where $\mathcal{P}$ and $\F$ are nontrivial, simply be reinterpreting these as additional slots with trivial input and output spaces, respectively.

\section{Properties of the function $f_{6q}$}
\subsection{A numerical method to compute the quantum query complexity of $f_{6q}$}
\label{sec:sdp}
In this appendix, we give some details on the semidefinite program (SDP) mentioned in Section~\ref{sec:SepQQgen} to show that the Boolean function $f_{6q}$ cannot be computed sequentially with less than four quantum queries. The SDP is due to \cite{barnum2003quantum} and was used in \cite{montanaro2015exact} to obtain the minimum error $\varepsilon_{T}^\text{Seq}(f)$ with which one can compute a Boolean function $f$ in $T$ queries to the quantum oracle $\widetilde{O}_x$, i.e., for which there exists a $T$-slot supermap $\S^\text{Seq}$ such that for all $x$, measuring the qubit state $\S^\text{Seq}(\widetilde{\O}_x, \dots, \widetilde{\O}_x) = \rho_x$ in the computational basis gives the outcome $f(x)$ with probability greater than $1 - \varepsilon_{T}^\text{Seq}(f)$. 
The SDP is formulated as follows, where $\circ$ is the Hadamard (entry-wise) matrix product.
\begin{theorem}[\cite{barnum2003quantum}]
  \label{th:sdp}
  The minimum bounded error $\varepsilon_{T}^\text{Seq}(f)$ for which there exists a sequential supermap $\mathcal{S}^\textup{Seq}$ that computes $f: \{0, 1\}^n \to \{0, 1\}$ in $T$ quantum queries is given by the SDP 
  \begin{align}
    \varepsilon_{T}^\text{Seq}(f) = \min_{\varepsilon,\{M_i^{(j)}\}_{i,j},\Gamma_0,\ \Gamma_1} \quad \varepsilon \\
   \text{s.t.}\quad \sum_{i=0}^n M_i^{(0)} &= E_0, \\
    \sum_{i=0}^n M_i^{(j)} &= \sum_{i=0}^n E_i \circ M_i^{(j-1)}, \text{ for } 1 \leq j \leq T-1, \\
    \Gamma_0 + \Gamma_1 &= \sum_{i=0}^{n} E_i \circ M_i^{(T-1)}, \\
    F_0 \circ \Gamma_0 &= (1-\varepsilon)F_0, \\
    F_1 \circ \Gamma_1 &= (1-\varepsilon)F_1,
  \end{align}
  where the $\{M_i^{(j)}\}_{i,j}$ (for $0 \leq i \leq n$ and $0 \leq j \leq T-1$) and $\Gamma_0,\Gamma_1$ are all $2^n$-dimensional real symmetric positive semidefinite matrices, $E_0$ is the constant 1 matrix and for $1\le i \le n$ the $E_i$ are defined such that $\bra{x}E_i\ket{y} = (-1)^{x_i + y_i}$, and where $F_0$ and $F_1$ are diagonal matrices such that $\bra{x} F_z \ket{x} = 1$ if and only if $f(x) = z$, otherwise $\bra{x} F_z \ket{x} = 0$.
\end{theorem}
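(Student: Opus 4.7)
The plan is to prove Theorem~\ref{th:sdp} by showing that the SDP exactly captures the structure of a $T$-query sequential quantum algorithm via a Gram matrix tracking argument due essentially to Barnum, Saks and Szegedy. A sequential $T$-query algorithm computing $f$ can be written, without loss of generality, as a unitary of the form $U_T \widetilde{O}_x U_{T-1} \widetilde{O}_x \cdots U_1 \widetilde{O}_x U_0$ applied to some fixed initial state $\ket{0}$, followed by a computational-basis measurement on a designated output qubit. The natural structural object to track is the state after the $j$th query, decomposed along the query register as $\ket{\psi_x^{(j)}} = \sum_{i=0}^{n} \ket{i}^Q \otimes \ket{\phi_{x,i}^{(j)}}^W$, with $i=0$ corresponding to the ``identity'' branch of the oracle defined in Eq.~\eqref{eq:phase}.

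The first step is to encode the algorithm in the variables $M_i^{(j)}$ by defining the $2^n \times 2^n$ Gram matrix $[M_i^{(j)}]_{x,y} := \braket{\phi_{x,i}^{(j)}}{\phi_{y,i}^{(j)}}$. From the definition of $\widetilde{O}_x$ one immediately gets that a query transforms block Gram matrices by $M_i^{(j+1)} = E_i \circ M_i^{(j-1)}$ (after also applying a unitary whose effect drops out of the sum $\sum_i M_i^{(j)}$). This is the content of the recurrence constraint on $M_i^{(j)}$. The initial condition $\sum_i M_i^{(0)} = E_0$ then reflects the fact that $\ket{\psi_x^{(0)}} = U_0\ket{0}$ is independent of $x$, so its Gram matrix is the rank-one all-ones matrix $E_0$. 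Finally, writing the final state as $\ket{\psi_x} = \ket{f(x)}\ket{\xi_x^0} + \ket{1\oplus f(x)}\ket{\xi_x^1}$ and setting $\Gamma_z$ to be the Gram matrix of the ``correct-outcome'' components for inputs with $f(x) = z$, complemented appropriately, the success probability constraint $\|\ket{\xi_x^0}\|^2 \geq 1-\varepsilon$ becomes exactly $F_z \circ \Gamma_z = (1-\varepsilon) F_z$, while the normalisation identity is $\Gamma_0 + \Gamma_1 = \sum_i E_i \circ M_i^{(T-1)}$. This establishes that $\varepsilon_T^{\text{Seq}}(f)$ is at least the optimum of the SDP.

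For the converse, given feasible matrices $\{M_i^{(j)}\}$ and $\Gamma_0,\Gamma_1$ achieving value $\varepsilon$, I would reconstruct an algorithm by taking Cholesky-type factorisations $M_i^{(j)} = V_i^{(j)\dagger} V_i^{(j)}$ and defining workspace states $\ket{\phi_{x,i}^{(j)}}$ as the $x$th column of $V_i^{(j)}$. The SDP constraints guarantee that the linear maps taking $\{\sum_i \ket{i}\ket{\phi_{x,i}^{(j)}}\}_x$ to $\{\sum_i \ket{i}\ket{\phi_{x,i}^{(j+1)}}\}_x$ (after the $(j+1)$th query) preserve inner products and can therefore be extended to unitaries $U_{j+1}$ on a sufficiently large workspace; the final constraint ensures the measurement succeeds with probability at least $1-\varepsilon$.

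The main obstacle is the converse direction: one must be careful that the isometries implicitly defined on the spans of the constructed states can be extended to genuine unitaries on a fixed workspace Hilbert space, and that the dimension of that workspace (inherited from the ranks of the $M_i^{(j)}$) can be absorbed into the ``sequential supermap'' framework without enlarging the number of queries. A secondary technical point is showing that restricting to \emph{real} symmetric PSD matrices is without loss of generality, which follows from the fact that the oracle $\widetilde{O}_x$ is real in the computational basis and hence one can always symmetrise an algorithm by conjugation without changing success probabilities.
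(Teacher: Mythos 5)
This theorem is imported verbatim from \cite{barnum2003quantum}; the paper states it without proof (and uses it only as a black box to solve the SDP numerically), so there is no internal proof to compare against. Your sketch faithfully reproduces the standard Barnum--Saks--Szegedy argument that the cited reference gives: track the Gram matrices of the branch components $\ket{\phi_{x,i}^{(j)}}$ of the algorithm's state along the query register, use the fact that the interleaved unitaries preserve the total Gram matrix while the oracle multiplies the $i$th block entrywise by $E_i$, and recover an algorithm from a feasible point by factorising the $M_i^{(j)}$ and extending the resulting inner-product-preserving maps to unitaries on an enlarged workspace. Two points to tighten. First, the displayed recurrence $M_i^{(j+1)} = E_i \circ M_i^{(j-1)}$ is both off by one in the index and false as a per-branch identity: the unitary $U_{j+1}$ redistributes amplitude among the branches, so only the summed identity $\sum_i M_i^{(j+1)} = \sum_i E_i \circ M_i^{(j)}$ holds --- your parenthetical shows you know this, but the equation as written should not survive into a full proof. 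Second, the final constraints are stated as equalities $F_z \circ \Gamma_z = (1-\varepsilon)F_z$ rather than the inequalities $\bra{x}\Gamma_{f(x)}\ket{x} \ge 1-\varepsilon$ that an algorithm directly yields; in the forward direction one must first degrade the algorithm so that every input succeeds with probability exactly $1-\varepsilon$ (or argue that the equality form does not change the optimum). Neither issue affects the validity of the approach, and your identification of the converse direction (unitary extension and workspace dimension) as the genuinely technical step matches where the work lies in the original reference.
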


This method was used in \cite{montanaro2015exact} to compute the sequential quantum query complexity of symmetric Boolean functions up to 6 bits in order to prove some separations between $D$ and $Q_E$.
However, because $f_{6q}$ is not symmetric, the computation was not performed for this specific function. 
Solving numerically the SDP, we obtain the following result.
\num*
\begin{proof}
By fixing the number of queries to $T=3$, and solving the SDP of Theorem~\ref{th:sdp} we obtain $\varepsilon_3^\text{Seq}(f_{6q}) = 0.0207$. 
This result is subject to numerical imprecisions, but with both the primal and dual converging to the same values and with constraints that are violated only up to a magnitude of $10^{-8}$, this value of $\varepsilon_3^\text{Seq}$ can be judged sufficiently far from 0 to assert with confidence that $f_{6q}$ cannot be computed in three queries.\footnote{To obtain a rigorous lower-bound on $\varepsilon_3^\text{Seq}$, one could extract a certificate from the dual SDP of Theorem~\ref{th:sdp}, in a method similar to that developed in \cite{bavaresco2021strict}.} 
The proof is completed by recalling that $f_{6q}$ can be computed in four quantum queries as described in Section~\ref{sec:SepQQgen}.
\end{proof}

Our code is freely accessible on Github.\footnote{\url{https://github.com/pierrepocreau/Classical-QuantumQueryComplexity_ICO}} with the SDP of Theorem~\ref{th:sdp} being formulated using the package Yalmip~\cite{yalmip} and solved using Mosek~\cite{mosek}.

\subsection{A causally indefinite quantum supermap computing $f_{6q}$}
\label{appendix:ICOcomp}
We finish by providing a more detailed proof the a causally indefinite quantum supermap that can compute the Boolean function $f_{6q}$ in three quantum queries. 
The truth table for $f_{6q}$ is shown in Table~\ref{table:classicalf}, and will be useful for the following proposition.
\begin{table}[ht]
      \caption{Simplified truth table of the Boolean function $f_{6q}$.}
    \label{table:quantum}
    \centering
    \begin{tabular}{c|c|c|c}
        $x_1 \oplus x_4$ & $x_2 \oplus x_5$ & $x_3 \oplus x_6$ & $f_{6q}(x_1, \dots, x_6)$ \\ \hline
        0 & 0 & 0 & 0 \\ 
        1 & 0 & 0 & $x_5$ \\ 
        0 & 1 & 0 & $x_6$ \\ 
        0 & 0 & 1 & $x_4$ \\ 
        1 & 1 & 0 & $x_5$ \\ 
        1 & 0 & 1 & $x_4$ \\ 
        0 & 1 & 1 & $x_6$ \\ 
        1 & 1 & 1 & 1 
    \end{tabular}

\end{table}

\Qsep*
\begin{proof}
Recall that the classical-deterministic Lugano process is defined as a function $w_{\text{Lugano}}$ mapping the binary output sets $\O_1 \times \O_2 \times \O_3$ to the binary input sets $\I_1 \times \I_2 \times \I_3$ such that $w_k(o_1, o_2, o_3) = (1 \oplus o_{k \oplus_3 1}) o_{k \oplus_3 2}$ (see Section~\ref{sec:classical-det}). 
This classical-deterministic process can be described a process matrix $W_\text{Lugano} \in \L(\H^{(IO)_{[3]}})$ where, for $1 \leq k \leq 3$, the $\H^{O_k}$ and $\H^{I_k}$ are two-dimensional Hilbert spaces,
\begin{equation}
  W_\text{Lugano} = \sum_{o_1, o_2, o_3 \in \{0, 1\}} \ketbra{o_1, o_2, o_3}^{O_1 O_2 O_3} \otimes \ketbra{\overline{o}_2 o_3, \overline{o}_3 o_1, \overline{o}_1 o_2}^{I_1 I_2 I_3},
\end{equation}
with $\overline{o}_k := (1 \oplus o_k)$. 
Note that any classical-deterministic process can be embedded in such a way \cite{BaumelerSpaceLogically2016, Araujo2017purification}; see also Appendix~\ref{appendix:FuncMat}. 
Because this process is classical, we can extend it with a (8-dimensional) future space $\L(\H^F)$ which keeps a copy of the classical values in the output registers $\L(\H^{O_1 O_2 O_3})$, defining the process

\begin{equation}
  \label{eq:genLuganoAppendix}
\begin{split}
  \widetilde{W}_\text{Lugano} = \sum_{o_1, o_2, o_3 \in \{0, 1\}} & \ketbra{o_1, o_2, o_3}^{O_1 O_2 O_3} \\
  &  \otimes \ketbra{\overline{o}_2 o_3, \overline{o}_3 o_1, \overline{o}_1 o_2}^{I_1 I_2 I_3} \otimes \ketbra{o_1, o_2, o_3}^F.
\end{split}
\end{equation}

We now define some quantum subroutines that can be used to compute the parity between different bits in one quantum query. 
These subroutines, indexed by $k \in \{1, 2, 3\}$, are formally one-slot quantum supermaps with corresponding Choi representations (i.e., process matrices) $G_k \in \L(\H^{I_k Q_k Q'_k O_k \alpha_k})$, and take a quantum channel from $\L(\H^{Q_k})$ to $\L(\H^{Q'_k})$ as input, with both  $\H^{Q_k}$ and $\H^{Q'_k}$ being Hilbert spaces of dimension 7, and where the $\H^{\alpha_k}$ are also a Hilbert spaces of dimension $7$. 
We show a circuit description of these supermaps in Figure~\ref{fig:Qsubroutine2}.
\begin{figure}[t]
     \centering
 \includegraphics[width=1\textwidth]{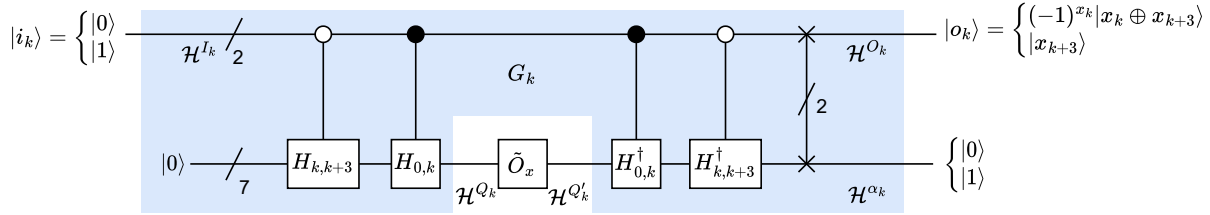}
        \caption{Circuit diagram of the quantum subroutines $G_k$. The unitary $H_{i,j}$ satisfies $H_{i,j}\ket{0} = \frac{\ket{i} + \ket{j}}{\sqrt{2}}$ and $H_{i,j}\ket{1} = \frac{\ket{i} - \ket{j}}{\sqrt{2}}$ for $0 \leq i,j \leq 6$, $i \neq j$, and can be completed arbitrarily on the other inputs. 
		  The final operation is a swap operation $\textsc{swap}: H^{O_k} \otimes H^{\alpha_k} \to H^{O_k} \otimes H^{\alpha_k}$ such that for any $a$ and $b$ in $\{0, 1\}$, $\textsc{swap} \ket{a}^{O_k} \ket{b}^{\alpha_k} = \ket{b}^{O_k} \ket{a}^{\alpha_k}$ and for any $c \in \{2, \dots, 6\}$, $\textsc{swap} \ket{a}^{O_k} \ket{c}^{\alpha_k} = \ket{a}^{O_k} \ket{c}^{\alpha_k}$.}
        \label{fig:Qsubroutine2}
\end{figure}
When acting on the query oracle $\widetilde{O}_x$, the $G_k$ act as follows:
if the qubit in $\H^{I_k}$ is $\ket{0}$, then the output qubit in $\H^{O_k}$ will be $\ket{x_k \otimes x_{k+3}}$;
however, if the qubit in $\H^{I_k}$ is $\ket{1}$, then the output in $\H^{O_k}$ is $\ket{x_{k+3}}$. 
This dependency on the value of the qubit in $\H^{I_k}$ is similar to the behaviour of the classical-deterministic process $\overline{w}_\text{Lugano}$ defined in the proof of Proposition~\ref{prop:fICO}. 
Indeed, that process was defined as a modification of the original $w_\text{Lugano}$ such that if a function $f_k$ received $0$ (resp.\ 1) in the process $w_\text{Lugano}$, it would receive $k$ (resp.\ $k+3$) in $\tilde{w}_\text{Lugano}$.

The last step is thus to compose the Lugano supermap with these quantum subroutines. 
This composition can be expressed at the level of their Choi matrices via the link product as
\begin{equation}
  \widetilde{W}_{f_{6q}} := \widetilde{W}_\text{Lugano} * G_1 * G_2 * G_3 \in \L(\H^{Q_1 Q_2 Q_3 Q'_1 Q'_2 Q'_3 F \alpha_1 \alpha_2 \alpha_3}).
\end{equation}
This process matrix characterises a quantum supermap that we write $\mathcal{S}^{f_{6q}}$, which takes three channels as input, from $\L(\H^{Q_k})$ to $\L(\H^{Q'_k})$, for $1 \leq k \leq 3$ and with a global future space $\L(\H^{F \alpha_1 \alpha_2 \alpha_3})$. 
We summarise in Table~\ref{table:QuantumLugano} its action for any $x \in \{0, 1\}^6$ on three copies of the query oracle $\widetilde{O}_x$ (with Choi matrix $\widetilde{O}_x$), which produces the output state
\begin{equation}
  \widetilde{W}_{f_{6q}} * \widetilde{\mathsf{O}}_x^{\otimes 3} \in \L(\H^{F \alpha_1 \alpha_2 \alpha_3}),
\end{equation}
where we write, with a slight abuse of notation, $\widetilde{\mathsf{O}}^{\otimes 3}_x = \widetilde{\mathsf{O}}_x \otimes \widetilde{\mathsf{O}}_x \otimes \widetilde{\mathsf{O}}_x \in \L(\H^{Q_1 Q_2 Q_3 Q'_1 Q'_2 Q'_3})$.

\begin{table}[ht]
      \caption{Values of the registers $\H^{F}$ and $\H^{\alpha_1 \alpha_2 \alpha_3}$ when the supermap characterised by $\widetilde{W}_{f_{6q}}$ acts on three copies of $\widetilde{O}_x$, for different values of $x$.}
    \label{table:QuantumLugano}
    \centering
    \begin{tabular}{c|c|c|l|c|c}
        $x_1 \oplus x_4$ & $x_2 \oplus x_5$ & $x_3 \oplus x_6$ & $\mathcal{H}^{F}$ &  $\H^{\alpha_1 \alpha_2 \alpha_3}$ & $f_{6q}(x)$ \\ \hline
        0 & 0 & 0 & $\ket{0 0 0}$ & $\ket{0 0 0}$ & 0\\
        1 & 0 & 0 & $\ket{1 x_5 0}$ & $\ket{0 1 0}$ & $x_5$ \\
        0 & 1 & 0 & $\ket{0 1 x_6}$ & $\ket{0 0 1}$ & $x_6$ \\
        0 & 0 & 1 & $\ket{x_4 0 1}$ & $\ket{1 0 0}$ & $x_4$ \\
        1 & 1 & 0 & $\ket{1 x_5 0}$ & $\ket{0 1 0}$ & $x_5$ \\
        1 & 0 & 1 & $\ket{x_4 0 1}$ & $\ket{1 0 0}$ & $x_4$ \\
        0 & 1 & 1 & $\ket{0 1 x_6}$ & $\ket{0 0 1}$ & $x_6$ \\
        1 & 1 & 1 & $\ket{1 1 1}$ & $\ket{0 0 0}$ & 1 
    \end{tabular}

\end{table}

We can see from Table~\ref{table:QuantumLugano} that by measuring the three states in $\H^{\alpha_k}$ in the computational basis $\{\ket{i}\}_{ i \in \{0, \dots, 6\}}$, one obtains the necessary information to extract the value of $f(x)$ from the register $\H^F$, for any $x \in \{0, 1\}^6$. 
Indeed, if the measurement result is $000$, then measuring any qubit of $\H^F$ in the computational basis outputs the value of $f(x)$. 
If the measurement result contains a $1$, then measuring, in the computational basis, the qubit of $\H^F$ that is in the same position as that $1$ also gives the values of $f(x)$. 
This measurement procedure can be seen as a channel $\M : \L(\H^{F \alpha_1 \alpha_2 \alpha_3}) \to \L(\H^{\tilde{F}})$ with $\dim(\H^{\tilde{F}}) = 2$, which post-processes the output such that measuring $\widetilde{W}_{f_{6q}} * \widetilde{\mathsf{O}}^{\otimes 3}_x * \mathsf{M} \in \L(\H^{\tilde{F}})$ in the computational basis outputs $f_{6q}(x)$ with probability one, concluding the proof.
\end{proof}

In this proof, the final composition with the channel $\M$ described above highlights that the causally indefinite computation we described does not compute $f_{6q}$ ``cleanly''. 
Indeed, because the state $\widetilde{W}_{f_{6q}} * \widetilde{\mathsf{O}}^{\otimes 3}_x \in \L(\H^{F \alpha_1 \alpha_2 \alpha_3})$ contains information about $f_{6q}(x)$ but also on $x$ (see Table~\ref{table:QuantumLugano}), this channel $\M$ erases this additional information about $x$. 
This extra information prevents us from directly amplifying this quantum query separation, as it prevents us from using this supermap as a coherent subroutine to compute a recursive composition of $f_{6q}$.

\end{document}